\newcommand*\linenomathpatch{\@ifstar{\linenomathpatch@AMS}{\linenomathpatch@}}
\newcommand*\linenomathpatch@[1]{
  \expandafter\pretocmd\csname #1\endcsname {\linenomathWithnumbers}{}{}
  \expandafter\pretocmd\csname #1*\endcsname{\linenomathWithnumbers}{}{}
  \expandafter\apptocmd\csname end#1\endcsname {\endlinenomath}{}{}
  \expandafter\apptocmd\csname end#1*\endcsname{\endlinenomath}{}{}
}
\newcommand*\linenomathpatch@AMS[1]{
  \expandafter\pretocmd\csname #1\endcsname {\linenomathWithnumbersAMS}{}{}
  \expandafter\pretocmd\csname #1*\endcsname{\linenomathWithnumbersAMS}{}{}
  \expandafter\apptocmd\csname end#1\endcsname {\endlinenomath}{}{}
  \expandafter\apptocmd\csname end#1*\endcsname{\endlinenomath}{}{}
}
\let\linenomathWithnumbersAMS\linenomathWithnumbers
\patchcmd\linenomathWithnumbersAMS{\advance\postdisplaypenalty\linenopenalty}{}{}{}
\newcommand{\blind}{1}
\newtheorem{proposition}{Proposition}
\newtheorem{theorem}{Theorem}
\newtheorem{definition}{Definition}[section]
\newtheorem{condition}{Condition}
\newcommand*{\addFileDependency}[1]{% argument=file name and extension
  \typeout{(#1)}
  \@addtofilelist{#1}
  \IfFileExists{#1}{}{\typeout{No file #1.}}
}
\def\bg{\begin{figure}[tpbh]\begin{center}}
\def\eg{\end{center}\end{figure}}
\long\def\symbolfootnote[#1]#2{\begingroup\def\thefootnote{\fnsymbol{footnote}}
\footnote[#1]{#2}\endgroup}
\newcommand{\Lower}[1]{\smash{\lower 1.5ex\hbox{#1}}}
\newcommand{\beqn}{\begin{eqnarray}}
\newcommand{\eeqn}{\end{eqnarray}}
\newcommand{\beq}{\begin{equation}}
\newcommand{\eeq}{\end{equation}}
\def\beqs{\begin{equation*}}
\def\eeqs{\end{equation*}}
\def\.{$.$}
\def\bsigma{{\mbox{\boldmath $\sigma$}}}
\def\bSigma{{\mbox{\boldmath $\Sigma$}}}
\def\bpi{{\mbox{\boldmath $\pi$}}}
\def\bmu{{\mbox{\boldmath $\mu$}}}
\def\btheta{{\mbox{\boldmath $\theta$}}}
\def\bbeta{{\mbox{\boldmath $\beta$}}}
\newcommand{\bb}{\mbox{\bf b}}
\newcommand{\be}{\mbox{\bf e}}
\newcommand{\bu}{\mbox{\bf u}}
\newcommand{\bx}{\mbox{\bf x}}
\newcommand{\by}{\mbox{\bf y}}
\newcommand{\bz}{\mbox{\bf z}}
\newcommand{\bD}{\mbox{\bf D}}
\newcommand{\bI}{\mbox{\bf I}}
\newcommand{\bV}{\mbox{\bf V}}
\newcommand{\bX}{\mbox{\bf X}}
\newcommand{\B}{\mathbf}
\def\trans{^{\rm T}}
\def\strans{^{\rm *T}}
\def\0{{\bf 0}}
\def\A{{\bf A}}
\def\I{{\bf I}}
\def\T{{\bf T}}
\def\W{{\bf W}}
\def\X{{\bf X}}
\def\x{{\bf x}}
\def\T{{\bf T}}
\def\1{{\bf 1}}
\def\J{{\bf J}}
\def\brho{{\boldsymbol \rho}}
\def\bphi{{\boldsymbol \phi}}
\def\bPhi{{\boldsymbol \Phi}}
\def\bg{{\boldsymbol \gamma}}
\def\bpi{{\boldsymbol \pi}}
\def\btheta{{\boldsymbol \theta}}
\def\bvartheta{{\boldsymbol \vartheta}}
\def\bSigma{{\bf \Sigma}}
\def\bmu{{\boldsymbol \mu}}
\def\calA{\mathcal A}
\def\calT{\mathcal T}
\def\trans{^{\rm T}}
\def\be{\begin{eqnarray}}
\def\ee{\end{eqnarray}}
\def\bse{\begin{eqnarray*}}
\def\ese{\end{eqnarray*}}
\def\red{\color{red}}
\def\boxit#1{\vbox{\hrule\hbox{\vrule\kern6pt\vbox{\kern6pt#1\kern6pt}\kern6pt\vrule}\hrule}}
\def\revcolor#1{}
\def\@fnsymbol#1{\ensuremath{\ifcase#1\or *\or **\or \dagger\or \ddagger\or
   \mathsection\or \mathparagraph\or \|\or \dagger\dagger
   \or \ddagger\ddagger \else\@ctrerr\fi}}
\newcommand*{\email}[1]{%
    \href{mailto:#1}{\color{black}{#1}}\par
    }
\newcommand{\mathst}[1]% will draw line through middle of #1
{\bgroup\mathchoice
  {\sbox0{$\displaystyle{#1}$}%
    \usebox0\hspace{-\wd0}%
    \rule[0.5\ht0-0.5\dp0-.5pt]{\wd0}{1pt}}%
  {\sbox0{$\textstyle{#1}$}%
    \usebox0\hspace{-\wd0}%
    \rule[0.5\ht0-0.5\dp0-.5pt]{\wd0}{1pt}}%
  {\sbox0{$\scriptstyle{#1}$}%
    \usebox0\hspace{-\wd0}%
    \rule[0.5\ht0-0.5\dp0-.5pt]{\wd0}{1pt}}%
  {\sbox0{$\scriptscriptstyle{#1}$}%
    \usebox0\hspace{-\wd0}%
    \rule[0.5\ht0-0.5\dp0-.5pt]{\wd0}{1pt}}%
  \egroup}
\newcommand{\calP}{\mathcal{P}_{\gamma}}
\newcommand{\hide}[1]{}
\renewcommand{\mathst}[1]{}
\renewcommand{\red}{}
\begin{document}

%\def\spacingset#1{\renewcommand{\baselinestretch}%
%{#1}\small\normalsize} \spacingset{1}

%%%%%%%%%%%%%%%%%%%%%%%%%%%%%%%%%%%%%%%%%%%%%%%%%%%%%%%%%%%%%%%%%%%%%%%%%%%%%%

\date{\vspace{-5ex}}

\if1\blind
{
  \title{Pursuing Sources of Heterogeneity in Modeling Clustered Population}
  \author{{Yan Li$^{1,}$},%\thanks{Co-first authors with equal contribution.},
    {Chun Yu$^{2,}$},%\footnotemark[1],
    {Yize Zhao$^3$},
    {Weixin Yao$^4$},\\
    {Robert H. Aseltine$^5$},
    {and Kun Chen$^{1,5,}$}\thanks{Corresponding author. Email: \email{kun.chen@uconn.edu}}\\\\
    $^1$Department of Statistics, University of Connecticut, Storrs, CT\\
    $^2$School of Statistics, Jiangxi University of Finance and Economics, China\\
    $^3$Department of Biostatistics, Yale School of Public Health, New Haven, CT\\
    $^4$Department of Statistics, University of California, Riverside,CA\\
    $^5$Center for Population Health, University of Connecticut Health Center, \\Farmington, CT
    %\footnote{Yan Li is a PhD candidate, Department of Statistics, University of Connecticut, Storrs, CT, US; Chun Yu is an Assistant Professor, School of Statistics, Jiangxi University of Finance and Economics, Nanchang, China; Yize Zhao is an Assistant Professor, Department of Biostatistics, Yale School of Public Health, New Haven, CT, US; Weixin Yao is an Associate Professor, Department of Statistics, University of California, Riverside, CA, US; Robert H. Aseltine is a Professor, Center for Population Health, University of Connecticut Health Center, Farmington, CT, US; Kun Chen is an Associate Professor, Department of Statistics, University of Connecticut, Storrs, CT, US (Email: kun.chen@uconn.edu).}
    %Corresponding author. Email: kun.chen@uconn.edu.}
  }
  \maketitle
} \fi

\if0\blind
{
  \bigskip
  \bigskip
  \bigskip
  \begin{center}
    {\LARGE Pursuing Sources of Heterogeneity in Modeling Clustered Population}
  \end{center}
  \medskip
} \fi

\begin{abstract}
  Researchers often have to deal with heterogeneous population with mixed regression relationships, increasingly so in the era of data explosion. In such problems, when there are many candidate predictors, it is not only of interest to identify the predictors that are associated with the outcome, but also to distinguish the true \textit{sources of heterogeneity}, i.e., to identify the predictors that have different effects among the clusters and thus are the true contributors to the formation of the clusters. We clarify the concepts of the source of heterogeneity that account for potential scale differences of the clusters and propose a \textit{regularized finite mixture effects regression} to achieve heterogeneity pursuit and feature selection simultaneously. We develop an efficient algorithm and show that our approach can achieve both estimation and selection consistency. Simulation studies further demonstrate the effectiveness of our method under various practical scenarios. Three applications are presented, namely, an imaging genetics study for linking genetic factors and brain neuroimaging traits in Alzheimer's disease, a public health study for exploring the association between suicide risk among adolescents and their school district characteristics, and a sport analytics study for understanding how the salary levels of baseball players are associated with their performance and contractual status.
\end{abstract}
\noindent{\it Key words}: Clustering; Finite mixture model; Generalized lasso; Population heterogeneity.
\vfill

\numberwithin{equation}{section} %\doublespacing

\newpage
\doublespacing
\section{Introduction}

Regression is a fundamental statistical problem, of which a prototype is to model a response $y\in \mathbb{R}$ as a function of a $p$-dimensional predictor vector $\bx$. In many applications, the classical assumption that the conditional association between $y$ and $\x$ is homogeneous in the population does not hold. Rather, their conditional association may vary across several latent sub-populations or clusters. Such population heterogeneity can be modeled by a finite mixture regression (FMR), 
which is capable of identifying the clusters by learning multiple models together. Since first introduced by \citet{goldfeld73}, FMR has been further developed in various directions and is widely used in various fields; see, e.g., \citet{jiang99}, \citet{Bohning99},  \citet{mclachlan2004finite}, and \citet{ChenMishra2017JASA}.

In the era of data explosion, regression problems with a large sample size and/or a large number of variables become increasingly common, which makes the modeling of population heterogeneity even more relevant. However, while many high-dimensional methods have been developed for mixture regression \citep{khalili2012variable,stadler2010,khalili2011overview}, utilizing regularization has been mainly for the purpose of variable selection, i.e., to identify the predictors that are relevant to the modeling of the outcome.

In this paper, we tackle a challenging and interesting problem in the context of mixture model: to identify the predictors that are truly the sources of heterogeneity. That is, besides the selection of important predictors, we aim to further divide the selected predictors into two categories, the ones that only have common effects on the outcome and the ones that have different effects in different clusters. {\revcolor{red} Being able to identify the sources of heterogeneity not only could reduce the complexity of the mixture model, but also could improve the model interpretability and enable us to gain deeper insights on the outcome-predictor association.}

One important field that motivates our study is the imaging genetics with application to mental disorders such as Alzheimer's disease. As demonstrated by twin studies \citep{cauwenberghe2016genetmed}, genetic factors play an import role in Alzheimer's disease and offers great promise for disease modeling and drug development. Compared with categorical diagnoses, neuroimaging trait has distinct advantages to capture disease etiology, and has been used in replacement of conventional clinical behavioral phenotypes in genome wide association studies (GWAS). Due to the availability of large-scale brain imaging and genetics data in landmark studies like the Alzheimer's Disease Neuroimaging Initiative \citep{weiner2013aad}, a large body of literature in  imaging genetics focuses on high-dimensional modeling to identify risk genetic variants \citep{vounou2012neuroimage,lu2015genetepidemiol,zhao2019structured}. However, a major challenge in the field that has not been well investigated is how to link the imaging-associated genetic factors to actual disease diagnosis or progression and provide meaningful interpretations. Specifically, for progressive mental illness like Alzheimer's disease, it is critical to identify biomarkers that can predict the disease at early time. Therefore, we believe that not only there are genetic factors that impact overall disease risk, but also there are the ones that have differential impacts across some sub-groups which may be corresponding to different progressive periods/stages. While a few attempts have been made to bridge the pathological paths among genotype, imaging and clinical outcomes \citep{hao2017mining,bi2017genome,xu2017imaging}, to the best of our knowledge, none of the existing methods consider the heterogeneity within patient cohort or imaging endophenotype, nor are they capable to identify genetic factors that give arise disease sub-groups.

Indeed, the problem of heterogeneity pursuit is prevelent in various fields, ranging from genetics, population health, to even sports analytics. In a study on suicide risk among adolescents, we used  data  from  the  State of Connecticut to explore the association between suicide risk among 15-19 year old and the characteristics of their school districts. It is of great interest to learn whether different association patterns co-exist and whether they are due to the differences in demographic, social-economic, and/or academic factors of the school districts. In a study on major league baseball players, the goal is to find out which performance measures and contract/free agent statues of the players contributed to the formation of distinct salary mechanisms or clusters.

In this work, we propose a \textit{regularized finite mixture effects regression} model to perform feature selection and identify sources of heterogeneity simultaneously. The problem is formulated using the effects model parameterization (in analogous to the formulations used in analysis of variance), that is, the effect of each predictor on the outcome is decomposed to a common effect term and a set of cluster-specific terms that are constrained to sum up to zero. We consider adaptive $\ell_1$ penalization on both the cluster-specific effect parameters and common effect parameters, which leads to the identification of the relevant variables and those with heterogeneous effects. The model estimation is conducted via an Expectation-Maximization (EM) algorithm, in which the M step results in a linearly constrained $\ell_1$ penalized regression and is solved by a Bregman coordinate descent algorithm \citep{Bregman1967,Goldstein2009}. We show that the proposed approach can also be cast as a regularized finite mixture regression with a generalized lasso penalty; this connection facilitates our theoretical analysis in showing the estimation and selection consistency. Although we mainly focus on normal mixture model and $\ell_1$ regularization, our approach can be readily generalized to other non-Gaussian models with broad class of penalties and constraints. A user-friendly R package is developed for practitioners to apply our approach.

%%% Local Variables:
%%% mode: latex
%%% TeX-master: t
%%% End:

\vspace{-1em}
\section{Mixture Effects Model For Heterogeneity Pursuit}\label{sec:model}
\subsection{An Overview of Finite Mixture Regression (FMR)}

We start with a description of the classical normal finite mixture regression (FMR). Let $y \in \mathbb{R}$ be a response/outcome variable and $\x = (x_1,\ldots,x_p)\trans \in \mathbb{R}^{p}$ be a $p$-dimensional predictor vector. In FMR with $m$ components, it is assumed that a linear regression model holds for each of the $m$ components, i.e., with probability $\pi_j$, a random sample $(y,\bx)$ belongs to the $j$th mixture component ($j=1,\ldots,m$), for which we have that $y=\bx\trans \bb_j+\epsilon_{j}$, where $\bb_j \in \mathbb{R}^p$ is a fixed and unknown coefficient vector, and $\epsilon_{j}\sim N(0,\sigma_j^2)$ with $\sigma_j^2>0$. For the ease of notation, here the intercept term is included by setting the first element of $\bx$ as one. Therefore, the conditional probability density function of $y$ given $\bx$ is
\begin{align}
\sum_{j=1}^{m}\pi_{j}\frac{1}{\sqrt{2\pi}\sigma_j}\exp\{\frac{-(y-\bx\trans\bb_{j})^2}{2\sigma_j^2}\}
\label{eq:mixreg1}
\end{align}
where $\pi_j$'s are the mixing probabilities satisfying $\pi_j > 0$, $\sum_{j=1}^{m}\pi_{j} = 1$. We write $(\bb_1,\ldots,\bb_m) = (\widetilde{\bb}_1,\ldots,\widetilde{\bb}_p)\trans$, where $\widetilde{\bb}_k \in \mathbb{R}^{m}$ collects $m$ component-specific coefficients for the predictor $x_k$. With finite samples, the maximum likelihood approach is often used for parameter estimation and inference in FMR, via the celebrated EM algorithm \citep{dempster1977maximum} and its many variates \citep{meng1991jasa, meng1993biometrika}. \citet{khalili2012variable} was among the first to propose penalized likelihood approach for variable selection in FMR models; asymptotic properties were established in their work under the fixed $p$, large $n$ paradigm. \citet{stadler2010} studied $\ell_1$ penalized FMR and derived estimation errors bounds and selection consistency under general high-dimensional setups. \citet{Khalili2013} further studied penalized FMR for a general family of penalty functions. Other relevant works include \citet{wedel1995mixture}, \citet{weruaga2015sparse}, \citet{bai2016mixture}, and \citet{dougru2016parameter}. For a comprehensive review, see, e.g., \citet{khalili2011overview}. The penalized FMR models have been widely applied in many real-world problems, such as gene expression analysis \citep{Xie2008}, disease progression subtyping \citep{Gao2016}, multi-species distribution modeling \citep{Hui15}, protein clustering \citep{ChenMishra2017JASA}, among others.

In the above mixture setup, the variance parameters $\sigma_j^2$ play important roles. Unlike in regular linear regression where its single variance parameter generally can be treated as nuisance in the estimation of the regression coefficients, the variance parameters in mixture models directly impact on the scaling (thus interpretation) and estimation of the regression coefficients of the multiple mixture components, and consequently, they also affect the assessment and even the definition of ``heterogeneous regression effects''. To facilitate the further discussion, we present a re-scaled version of FMR \citep{stadler2010}, 
\begin{align*}
\bphi_j = \frac{\bb_j}{\sigma_j} = (\phi_{1j},\ldots, \phi_{pj})\trans,\,\, \rho_j = \sigma_j^{-1} 
\quad (j = 1,\ldots, m),
\end{align*}
and subsequently rewrite the conditional density in \eqref{eq:mixreg1} as 
\begin{align}
f(y\mid \bx,\bvartheta)= \sum_{j=1}^{m}\pi_{j}\frac{\rho_j}{\sqrt{2\pi}}\exp\{-\frac{1}{2}(\rho_jy-\bx\trans\bphi_{j})^2\}\label{eq:mixreg2},
\end{align}
where $\bvartheta = (\bphi_1, \ldots,\bphi_m;\pi_1,\ldots,\pi_m;\rho_1,\ldots,\rho_m)$ collects all the unknown parameters. We write
\[
\bPhi = (\bphi_1,\ldots,\bphi_m) = (\widetilde{\bphi}_1,\ldots,\widetilde{\bphi}_p)\trans \in \mathbb{R}^{p\times m}, \qquad \bphi = \mbox{vec}(\bPhi\trans) \in \mathbb{R}^{pm}, 
\]
where $\widetilde{\bphi}_k \in \mathbb{R}^{m}$ collects $m$ component-specific regression coefficients for the predictor $x_k$ for $k = 1, \ldots, p$ and $\mbox{vec}(\cdot)$ is the columnwise vectorization operator.

\subsection{Sources of Heterogeneity under Finite Mixture Regression}\label{sec:method:def}

Now let's consider predictor selection and heterogeneity pursuit. A predictor $x_k$ is said to be relevant or important, if $\widetilde{\bb}_{k} \neq \0$, or equivalently, $\widetilde{\bphi}_{k} \neq \0$. Correspondingly, define
\[
\mathcal{S}_R = \{k; 1\leq k \leq p, \widetilde{\bphi}_{k} \neq \0\}
\] 
to be the index set of all the relevant predictors, and let $p_0 = |\mathcal{S}_R|$ denote its size. Estimating $\mathcal{S}_R$ is typically the main task of a variable selection method.

We aim higher. That is, besides identifying the relevant variables, we want to also find out among them which ones actually contribute to the population heterogeneity. {\revcolor{red}However, the concept of ``source of heterogeneity'' is not as easily defined as it appears, since the different mixture components are possibly with different scales. We consider two definitions.}

\begin{definition}\label{def:1}
A predictor $x_k$ is said to be a source of heterogeneity, if $\widetilde{\bb}_{k}\neq c\1$ for any $c\in \mathbb{R}$. 
\end{definition}

\begin{definition}\label{def:2}
A predictor $x_k$ is said to be a scaled source of heterogeneity, if $\widetilde{\bphi}_{k}\neq c\1$ for any $c\in \mathbb{R}$.
\end{definition}

Both definitions have their own merits. Definition \ref{def:1} is in terms of the inequality of each raw coefficient vector $\widetilde{\bb}_k$ appeared in \eqref{eq:mixreg1}, which is simple and aims to draw a direct comparison of the raw effects of $x_k$ in different mixture components regardless of their scales. Definition \ref{def:2} is in terms of the scaled counterpart $\widetilde{\bphi}_k$ in \eqref{eq:mixreg2}, and the motivation is to distinguish the heterogeneity induced by the predictors and that caused by inherit scaling differences. In other words, under the second definition, we compare the standardized effects of $x_k$ in different mixture components after putting them on the same scale. An analogy can be drawn from the familiar analysis of variance context: comparing the means of different groups is mostly appropriate when the groups are with the same variances. Notice that the two definitions become equivalent when the component variances are equal, e.g., $\sigma_1^2 = \cdots = \sigma_m^2$, which is a commonly adopted assumption in mixture regression analysis.

In this work, we shall mainly focus on Definition \ref{def:2}, although our methodologies can be readily modified to handle the alternative definition. Based on Definition \ref{def:2}, let $\mathcal{S}_H = \{k; 1\leq k \leq p, \widetilde{\bphi}_{k} \neq c\1, \forall c\in\mathbb{R}\}$ and $p_{00} = |\mathcal{S}_H|$. {\revcolor{red}Henceforth, our objective is to recover both $\mathcal{S}_R$ and $\mathcal{S}_H$. This can potentially lead to a much more parsimonious and interpretable model.} To see this, consider as above that in a $m$-component mixture model with $p$ predictors, there are $p_0$ relevant variables, and among those, only $p_{00}$ variables are sources of heterogeneity. The classical FMR fits a model with $mp$ free regression parameters, which can be infeasible when $p$ is even moderately large comparing to the sample size. Meanwhile, the best model a sparse predictor selection method can possibly produce would have $mp_0$ free regression parameters. We can do better: since only $p_{00}$ predictors are truly the source of heterogeneity, the best model would have only $p_{0} + (m-1)p_{00}$ regression parameters. The saving can be substantial when $p_{00} \ll p_{0} \ll p$ and/or $m$ is large. As an example, consider one of the simulation settings to be presented in Section~\ref{sec:sim} with $m =3$, $p = 30$, $p_{0}=10$, and $p_{00}= 3$. The classic FMR is with $mp = 90$ regression parameters, the sparse selection method can possibly reduce the number to be $mp_0 = 30$, while our method can possibly further reduce the number to $p_{0} + (m-1)p_{00} = 16$ through identifying the sources of heterogeneity.

%\vspace{-2em}
\subsection{Regularized Mixture Effects Regression}

Motivated by the so-called effects-model formulation commonly used in analysis of variance models, we propose the following \textit{constrained mixture effects model} formulation, to facilitate the pursuit of the sources of heterogeneity in mixture regression,   
\begin{align}
  f(y\mid \bx,\btheta)= \sum_{j=1}^{m}\pi_{j}\frac{\rho_j}{\sqrt{2\pi}}\exp\{-\frac{1}{2}(\rho_jy-\bx\trans\bbeta_0-\bx\trans\bbeta_{j})^2\}\label{eq:mixreg3},\,
  \mbox{s.t.} \sum_{j=1}^m\beta_{jk}=0, k = 1, \ldots, p,
\end{align}
where $\bbeta_0 = (\beta_{01},\ldots,\beta_{0p})\trans \in \mathbb{R}^{p}$ collects the common effects, and $\bbeta_j=(\beta_{j1},\ldots,\beta_{jp})\trans \in \mathbb{R}^{p}$, $j =1,\ldots, m$, are the coefficient vectors of cluster-specific effects. The equality constraints are necessary to ensure the identifiablility of the parameters. 
We write
\[
\B{B} = (\bbeta_0,\bbeta_1,\ldots,\bbeta_m) = (\widetilde{\bbeta}_1,\ldots,\widetilde{\bbeta}_p)\trans \in \mathbb{R}^{p\times (m+1)}, \qquad \bbeta = \mbox{vec}(\B{B}\trans) \in \mathbb{R}^{p(m+1)},
\]
where $\widetilde{\bbeta}_k=(\beta_{0k},\beta_{1k},\ldots,\beta_{mk})\trans \in \mathbb{R}^{m+1}$ collects the common effect and the $m$ cluster-specific effects for predictor $x_k$. The rest of the terms are similarly defined as in \eqref{eq:mixreg2}, except that we now write 
$
\btheta = (\bbeta_0, \bbeta_1,\ldots,\bbeta_m;\pi_1,\ldots,\pi_m;\rho_1,\ldots,\rho_m)
$ to correct all the parameters under this alternative effects-model parameterization.

Now a predictor $x_k$ is deemed to be relevant whenever $\widetilde{\bbeta}_k \neq \0$. Moreover, a relevant variable is deemed to be a source of heterogeneity only if there exists a $1\leq j\leq m$ such that $\beta_{jk} \neq 0$. As such, variable selection and heterogeneity pursuit can be achieved together through a sparse estimation of $\B{B}$. {\revcolor{red}With $n$ independent samples $\{(y_i, \x_i); i = 1,\ldots, n\}$,} we propose to conduct model estimation by maximizing a constrained penalized log-likelihood criterion, 
\begin{align}
 \max_{\btheta} \left\{\ell^{\gamma}_\lambda(\btheta) \equiv \sum_{i=1}^n\log\left\{ f(y_i\mid \bx_i,\btheta) \right\}-n\lambda\sum_{k=1}^{p}{\red \calP\mathst{\rho}}(\widetilde{\bbeta}_{k})\right\},\,\, %-n\lambda_2 \rho_2(\bgamma)\right\},\notag\\
\mbox{s.t.} \sum_{j=1}^m\beta_{jk}=0, k = 1, \ldots, p,\label{mixreg:commplog-varsel1}
\end{align}
where $f(y\mid\bx,\btheta)$ is the conditional density function from \eqref{eq:mixreg3}, and $\calP(\cdot)$ is a penalty function with $\lambda$ being its tuning parameter; we mainly focus on the $\ell_1$ penalty \citep{tib1996} and its adaptive version \citep{zou2006,huang2008}, i.e., 
\begin{align}
\calP(\widetilde{\bbeta}_{k}) = \sum_{j=0}^{m}w_{jk}|\beta_{jk}|, \qquad w_{jk} = |\widehat{\beta}^0_{j,k}|^{-\gamma} \label{eq:penalty}
\end{align}
where $w_{jk}s$ are the adaptive weights constructed from some initial estimator $\widehat{\beta}^0_{j,k}$,  with $\gamma =0$ corresponding to the non-adaptive version and $\gamma >0$ the adaptive version. Apparently there are many other reasonable choices of penalty functions \citep{fan2001,Khalili2013}, but our choice of $\ell_1$ is simple, convex and yet fundamental for sparse estimation. %{\color{red} Moreover, following the work of \citet{stadler2010}}

Interestingly, the proposed constrained sparse estimation approach can also be understood as a generalized lasso method \citep{she2010gl,tibshirani2011gl} based on the unconstrained model formulation in \eqref{eq:mixreg2}. To see this, observe that each $\widetilde{\bbeta}_k$ can be written as a function of $\widetilde{\bphi}_k$ as
\begin{align*}
\widetilde{\bbeta}_k = \A\widetilde{\bphi}_k, \qquad \A = 
\begin{pmatrix}
1/m \1_m\trans\\
\I_m - 1/m\J_m
\end{pmatrix}
\in \mathbb{R}^{(m+1)\times m},
\end{align*}
where $\1_m$ is the $m\times 1$ vector of all ones, $\I_m$ is the $m\times m $ identity matrix and $\J_m$ is the $m\times m$ matrix of ones. Therefore, the generalized lasso criterion is expressed as 
\begin{align}
\label{mixreg:commplog-varsel2}
\max_{\bvartheta} & \left\{\l^{\gamma}_\lambda(\bvartheta) \equiv \sum_{i=1}^n\log\left\{ f(y_i\mid \bx_i,\bvartheta) \right\}-n\lambda \|\W(\I_p \otimes \A)\bphi\|_1
\right\},
\end{align}
where $f(y\mid \bx,\bvartheta)$ is the conditional density function from \eqref{eq:mixreg2}, and $\W = \mbox{diag}\{w_{jk}\} \in \mathbb{R}^{p(m+1)\times p(m+1)}$ is constructed from the adaptive weights in \eqref{eq:penalty} accordingly. 
\begin{proposition}
The two problems in \eqref{mixreg:commplog-varsel1} and \eqref{mixreg:commplog-varsel2} are equivalent, in the sense that 
\begin{itemize}
  \item If $\widehat{\bvartheta} = (\widehat{\bphi}, \widehat{\bpi}, \widehat{\brho})$ solves \eqref{mixreg:commplog-varsel2}, then $\widehat{\btheta} = (\widehat{\bbeta}, \widehat{\bpi}, \widehat{\brho})$ solves \eqref{mixreg:commplog-varsel1} where $\widehat{\bbeta} = (\I_p\otimes\A)\widehat{\bphi}$.
  \item And conversely, if $\widehat{\btheta} = (\widehat{\bbeta}, \widehat{\bpi}, \widehat{\brho})$ solves \eqref{mixreg:commplog-varsel1}, then  $\widehat{\bvartheta} = (\widehat{\bphi}, \widehat{\bpi}, \widehat{\brho})$ solves \eqref{mixreg:commplog-varsel2} where $\widehat{\bphi}$ is such that $\widehat{\bphi}_{j} = \widehat{\bbeta}_0+\widehat{\bbeta}_j$, $j=1,\ldots,m$.
\end{itemize}
\end{proposition}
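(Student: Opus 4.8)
The plan is to exhibit an explicit bijection between the parameter $\bvartheta=(\bphi,\bpi,\brho)$ of the unconstrained model \eqref{eq:mixreg2} and the parameter $\btheta=(\bbeta,\bpi,\brho)$ of the constrained effects model \eqref{eq:mixreg3}, and then to check that both the log-likelihood part and the penalty part of the objectives are preserved under it; once that is done, equivalence of \eqref{mixreg:commplog-varsel1} and \eqref{mixreg:commplog-varsel2} is immediate, since an objective-preserving bijection between feasible sets matches optimal values and maximizers. The map leaves $(\bpi,\brho)$ untouched and acts on the regression parameters through $\A$. First I would verify that $\bphi\mapsto\bbeta=(\I_p\otimes\A)\bphi$ sends $\mathbb{R}^{pm}$ onto the constraint set $\calC=\{\bbeta\in\mathbb{R}^{p(m+1)}:\sum_{j=1}^m\beta_{jk}=0,\ k=1,\ldots,p\}$: block by block, $\widetilde{\bbeta}_k=\A\widetilde{\bphi}_k$ gives $\beta_{0k}=m^{-1}\sum_{j=1}^m\phi_{kj}$ and $\beta_{jk}=\phi_{kj}-\beta_{0k}$, so $\sum_{j=1}^m\beta_{jk}=0$ automatically. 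Conversely, given $\bbeta\in\calC$, setting $\phi_{kj}=\beta_{0k}+\beta_{jk}$ recovers $\widetilde{\bphi}_k$, and a direct computation (using $\sum_{j=1}^m\beta_{jk}=0$) shows that applying $\A$ to this $\widetilde{\bphi}_k$ returns $\widetilde{\bbeta}_k$; hence the two maps are mutually inverse. This is the one step with any content --- the effects-model constraints $\sum_{j=1}^m\beta_{jk}=0$ are precisely what makes the reparameterization invertible --- but it is otherwise elementary linear algebra.

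Second, I would show the log-likelihood is invariant. With $\phi_{kj}=\beta_{0k}+\beta_{jk}$ one has $\bx\trans\bphi_j=\bx\trans\bbeta_0+\bx\trans\bbeta_j$ for every $j$, so $\rho_j y-\bx\trans\bphi_j=\rho_j y-\bx\trans\bbeta_0-\bx\trans\bbeta_j$; therefore each mixture component in \eqref{eq:mixreg2} evaluated at $\bvartheta$ equals the corresponding component in \eqref{eq:mixreg3} evaluated at $\btheta$, and summing over $j$ and over the $n$ observations yields $\sum_i\log f(y_i\mid\bx_i,\bvartheta)=\sum_i\log f(y_i\mid\bx_i,\btheta)$. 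Third, the penalties match by construction: since $(\I_p\otimes\A)\bphi=\bbeta$ and $\W$ collects the weights $w_{jk}$ in the ordering of $\bbeta=\mbox{vec}(\B{B}\trans)$, we get $\|\W(\I_p\otimes\A)\bphi\|_1=\|\W\bbeta\|_1=\sum_{k=1}^p\sum_{j=0}^m w_{jk}|\beta_{jk}|=\sum_{k=1}^p\calP(\widetilde{\bbeta}_k)$.

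Combining these, the objective of \eqref{mixreg:commplog-varsel2} at $\bvartheta$ equals the objective of \eqref{mixreg:commplog-varsel1} at the corresponding $\btheta$. Since the correspondence $(\bphi,\bpi,\brho)\leftrightarrow(\bbeta,\bpi,\brho)$ is a bijection from the feasible region of \eqref{mixreg:commplog-varsel2} onto that of \eqref{mixreg:commplog-varsel1}, the two problems share the same optimal value and $\widehat{\bvartheta}$ solves \eqref{mixreg:commplog-varsel2} if and only if the associated $\widehat{\btheta}$ solves \eqref{mixreg:commplog-varsel1}; reading the correspondence $\widehat{\bbeta}=(\I_p\otimes\A)\widehat{\bphi}$ in one direction and $\widehat{\phi}_{kj}=\widehat{\beta}_{0k}+\widehat{\beta}_{jk}$ (i.e. $\widehat{\bphi}_j=\widehat{\bbeta}_0+\widehat{\bbeta}_j$) in the other gives exactly the two bulleted statements. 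No compactness or uniqueness of maximizers is needed, since the whole argument is a change of variables; the main obstacle is simply the careful verification of the bijection onto $\calC$ in the first step.
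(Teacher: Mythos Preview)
Your proposal is correct and follows exactly the approach the paper sets up: the paper introduces the reparameterization $\widetilde{\bbeta}_k=\A\widetilde{\bphi}_k$ and states the proposition without a formal proof, treating the equivalence as immediate from this change of variables. Your write-up simply fills in the details the paper leaves implicit --- the bijection onto the constraint set, invariance of the log-likelihood, and matching of the penalty terms --- and these verifications are all correct.
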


It turns out that \eqref{mixreg:commplog-varsel1} is more convenient to use in computation, while \eqref{mixreg:commplog-varsel2} is more useful in the theoretical investigation. {\revcolor{red}We also show that these penalized estimation criteria avoids the unbounded likelihood problem \citep{mclachlan2004finite} in Section \ref{app:boundedness} of Supplementary Materials.}

%%% Local Variables:
%%% mode: latex
%%% TeX-master: t
%%% End:

\section{Asymptotic Properties}\label{sec:th}

The generalized lasso formulation allows us to perform the asymptotic analysis under the unconstrained mixture regression model setup given in \eqref{eq:mixreg2}. The main issue is then in dealing with the special form of the generalized lasso penalty in \eqref{mixreg:commplog-varsel2}. To make things clear, we use $\btheta^*$ or $\bvartheta^*$ to denote the true parameters. We have defined $\mathcal{S}_R$ and $\mathcal{S}_H$ as the sets of relevant predictors and the predictors of sources of heterogeneity, respectively. Correspondingly, define 
\[
\mathcal{S} =  \{i; ((\I_p\otimes \A)\bphi^*)_i \neq 0\}.
\]
Recall that $\bbeta^* = \mbox{vec}(\B{B}\strans) = (\I_p \otimes \A)\bphi^*$, which means that $\mathcal{S}$ encodes the sparsity pattern of all the regression coefficients $\bbeta^*$ in the effects models. Then the recovery of $\mathcal{S}_R$ and $\mathcal{S}_H$ is immediate if $\mathcal{S}$ can be recovered.

We consider the setup that the design is random, and the number of predictors $p$ and the number of components $m$ are considered as fixed as the sample size $n$ grows. Building upon the works by \citet{fan2001}, \citet{stadler2010} and \citet{she2010gl}, our main results are presented in the following two theorems.

\begin{theorem}[Non-adaptive Estimator]\label{thm:const_estim}
Consider model \eqref{eq:mixreg2} with random design, fixed $p$ and $m$. 
Choose $\lambda = O(n^{-1/2})$. Assume the regularity conditions (A)-(C) from Section \ref{app:proofs} of Supplementary Materials on the joint density of $(y,\x)$ hold. Then for $\gamma= 0$, there exists a local maximizer $\widehat{\bvartheta}_{\lambda}^{\gamma}$ of \eqref{mixreg:commplog-varsel2} such that 
$
\sqrt{n}(\widehat{\bvartheta}_\lambda^{\gamma} - \bvartheta^*) = O_p(1) 
$.
\end{theorem}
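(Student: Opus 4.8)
The plan is to follow the classical $\sqrt{n}$-consistency argument for penalized likelihood estimators pioneered by \citet{fan2001}, adapted to the mixture likelihood in \eqref{eq:mixreg2} as in \citet{stadler2010}, but with the extra bookkeeping needed because the penalty $\|\W(\I_p\otimes\A)\bphi\|_1$ is a \emph{generalized} lasso rather than an ordinary lasso. First I would set up the standard device: fix a constant $C>0$, write $\bvartheta = \bvartheta^* + n^{-1/2}\bu$ for $\bu$ in the ball $\|\bu\| \le C$, and show that with probability tending to one the penalized log-likelihood objective $\l^{0}_\lambda$ evaluated at $\bvartheta^*$ strictly exceeds its value on the sphere $\|\bu\| = C$, provided $C$ is large enough. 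Since $\l^0_\lambda$ is continuous, this forces a local maximizer inside the ball, which is exactly the assertion $\sqrt{n}(\widehat{\bvartheta}_\lambda^0 - \bvartheta^*) = O_p(1)$.

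The core estimate is the Taylor expansion of the difference $D_n(\bu) \equiv \l^0_\lambda(\bvartheta^* + n^{-1/2}\bu) - \l^0_\lambda(\bvartheta^*)$. The log-likelihood part expands as $n^{-1/2}\bu\trans \nabla \ell_n(\bvartheta^*) - \tfrac12 \bu\trans \big(-n^{-1}\nabla^2\ell_n(\bvartheta^*)\big)\bu + o_p(1)$, and under regularity conditions (A)--(C) the score term is $O_p(1)\cdot\|\bu\|$ by the central limit theorem while the Hessian term converges to $\tfrac12\bu\trans \mathcal{I}(\bvartheta^*)\bu$, a positive-definite quadratic form in $\bu$; the third-order remainder is controlled, uniformly on the ball, by an integrable envelope supplied by the regularity conditions. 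For the penalty part, I would use the simple bound
\begin{align*}
-n\lambda\big(\|\W(\I_p\otimes\A)(\bphi^*+n^{-1/2}\bu)\|_1 - \|\W(\I_p\otimes\A)\bphi^*\|_1\big) \le n\lambda\, n^{-1/2}\|\W(\I_p\otimes\A)\bu\| \cdot \sqrt{p(m+1)},
\end{align*}
which is $O(\sqrt n \lambda)\cdot\|\bu\| = O(1)\cdot\|\bu\|$ because $\lambda = O(n^{-1/2})$ and the weights $w_{jk}$ are fixed in the non-adaptive case $\gamma = 0$. Hence on the sphere $\|\bu\|=C$ the negative-definite quadratic term $-\tfrac12\bu\trans\mathcal{I}(\bvartheta^*)\bu \asymp -C^2$ dominates both the $O_p(C)$ score contribution and the $O_p(C)$ penalty contribution once $C$ is chosen sufficiently large, giving $D_n(\bu) < 0$ on the sphere with probability approaching one.

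The main obstacle I anticipate is \emph{not} the penalty term — which is mild here, precisely because $\gamma=0$ makes $\W$ deterministic and $\lambda$ of the critical order $n^{-1/2}$ — but rather verifying the mixture-model regularity carefully enough that the Taylor-expansion argument is legitimate: the mixture log-density is well known to have an unbounded likelihood over the full parameter space, so one must work on a neighborhood of $\bvartheta^*$ where the component variances $\sigma_j^2 = \rho_j^{-2}$ stay bounded away from zero and infinity, and one must ensure the usual third-derivative domination and the nonsingularity of the Fisher information $\mathcal{I}(\bvartheta^*)$. These are exactly the contents of conditions (A)--(C) cited from the supplement (mirroring the assumptions in \citet{stadler2010} and \citet{she2010gl}), and the reparameterization $\bphi_j = \bb_j/\sigma_j$, $\rho_j = \sigma_j^{-1}$ is what makes the information matrix well-behaved. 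I would therefore state the expansion on such a neighborhood, invoke (A)--(C) to bound the remainder uniformly, and then run the sphere argument above; the boundedness of the penalized likelihood near $\bvartheta^*$ (Section~\ref{app:boundedness} of the Supplement) guarantees the local maximizer thus produced is genuine.
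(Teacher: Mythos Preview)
Your proposal is correct and follows essentially the same route as the paper's proof: the sphere argument of \citet{fan2001}, a second-order Taylor expansion of the log-likelihood controlled by conditions (A)--(C), and a bound on the penalty increment of order $\sqrt{n}\lambda\,\|\bu\| = O(1)\|\bu\|$. The only cosmetic difference is in how the penalty term is handled: the paper first drops the contribution on the complement $\mathcal{S}^c$ (where $(\I_p\otimes\A)\bphi^*$ vanishes) and then uses differentiability of $\|\cdot\|_1$ on the support $\mathcal{S}$ together with the spectral-norm calculation $\|\I_p\otimes\A\|=1$, yielding the constant $\sqrt{|\mathcal{S}|}$; you instead apply the reverse triangle inequality directly to the full penalty, giving the larger but still $O(1)$ constant $\sqrt{p(m+1)}$. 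Either bound suffices here since $p$ and $m$ are fixed.
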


\begin{theorem}[Adaptive Estimator]\label{thm:const_select}
Consider model \eqref{eq:mixreg2} with random design, fixed $p$ and $m$. 
Choose $\sqrt{n}\lambda \rightarrow 0$, $n^{(\gamma+1)/2}\lambda \rightarrow \infty$ as $n\rightarrow \infty$, and suppose the initial estimator in constructing the weights is $\sqrt{n}$-consistent, i.e., $\sqrt{n}(\widehat{\bvartheta}_\lambda^{ini} - \bvartheta^*) = O_p(1)$. Assume the regularity conditions (A)-(C) from Section \ref{app:proofs} of Supplementary Materials on the joint density of $(y,\x)$ hold. Then for any $\gamma > 0$, there exists a local maximizer $\widehat{\bvartheta}_{\lambda}^{\gamma}$ of \eqref{mixreg:commplog-varsel2} such that it is $\sqrt{n}$-consistent and
$P(\widehat{\mathcal{S}}_\lambda^{\gamma} = \mathcal{S}) \rightarrow 1$ as $n \rightarrow \infty$.
\end{theorem}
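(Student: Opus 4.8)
\emph{Proof strategy for Theorem~\ref{thm:const_select}.}
The plan is to run the two-stage argument of \citet{fan2001} and \citet{zou2006} --- first $\sqrt n$-consistency, then an oracle/sparsity step --- on the generalized-lasso form \eqref{mixreg:commplog-varsel2}, so that the mixture-regression asymptotics of \citet{stadler2010} apply verbatim, with the extra geometry forced by the linear map $\bbeta=(\I_p\otimes\A)\bphi$ treated as in \citet{she2010gl}. Write $\ell_n(\bvartheta)=\sum_{i=1}^n\log f(y_i\mid\bx_i,\bvartheta)$, $\mathrm{pen}_n(\bphi)=n\lambda\|\W(\I_p\otimes\A)\bphi\|_1$, $Q_n=\ell_n-\mathrm{pen}_n$, and index the $p(m+1)$ coordinates of $\bbeta=(\I_p\otimes\A)\bphi$ by $\ell$, so $\mathcal S$ is exactly the support of $\bbeta^*$. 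Taking the initial estimator to be, e.g., the one from Theorem~\ref{thm:const_estim}, the weights obey $w_\ell\to|\beta_\ell^*|^{-\gamma}<\infty$ for $\ell\in\mathcal S$ and $\min_{\ell\notin\mathcal S}w_\ell\ge c\,n^{\gamma/2}$ on an event of probability tending to one. First I would establish $\sqrt n$-consistency by the boundary argument: on $\{\bvartheta^*+u/\sqrt n:\|u\|=C\}$, conditions (A)--(C) give $\ell_n(\bvartheta^*+u/\sqrt n)-\ell_n(\bvartheta^*)=O_p(\|u\|)-\tfrac12 u\trans\mathcal I(\bvartheta^*)u\,(1+o_p(1))$ with $\mathcal I(\bvartheta^*)\succ 0$, while the penalty increment, after discarding the $\ell\notin\mathcal S$ terms (which only lower $Q_n$) and using $w_\ell=O_p(1)$ on $\mathcal S$, is at least $-\sqrt n\lambda\,O_p(\|u\|)=-o_p(\|u\|)$ since $\sqrt n\lambda\to0$; hence for $C$ large $Q_n$ is below $Q_n(\bvartheta^*)$ on the sphere with probability tending to one, producing a local maximizer $\wh\bvartheta_\lambda^\gamma$ with $\sqrt n(\wh\bvartheta_\lambda^\gamma-\bvartheta^*)=O_p(1)$; consequently $\wh\bbeta=(\I_p\otimes\A)\wh\bphi$ is $\sqrt n$-consistent and $\wh\beta_\ell\neq0$ w.h.p.\ for every $\ell\in\mathcal S$ (no false negatives).

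For the sparsity part (no false positives), set $\mathcal L_0=\{\bphi\in\mathbb R^{pm}:((\I_p\otimes\A)\bphi)_\ell=0\ \forall\,\ell\notin\mathcal S\}$, a fixed linear subspace containing $\bphi^*$. Applying the boundary argument to $Q_n$ restricted to $\{\bphi\in\mathcal L_0\}$ (a lower-dimensional smooth problem whose penalty keeps only the $\ell\in\mathcal S$ terms, with $\bpi,\brho$ free) yields a $\sqrt n$-consistent restricted local maximizer $\wh\bvartheta^{\mathrm{or}}$ with $\mathrm{supp}(\wh\bbeta^{\mathrm{or}})=\mathcal S$ w.h.p. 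The crux is then to show that w.h.p.\ $\wh\bvartheta^{\mathrm{or}}$ is also a local maximizer of the \emph{unrestricted} $Q_n$: since on an $O_p(n^{-1/2})$-ball around $\bvartheta^*$ the objective $Q_n$ is strictly concave (Hessian of $\ell_n$ near $-n\mathcal I(\bvartheta^*)$ by (A)--(C), $\mathrm{pen}_n$ convex) and therefore has a unique local maximizer there, this identifies $\wh\bvartheta^{\mathrm{or}}$ with $\wh\bvartheta_\lambda^\gamma$ and gives $\wh{\mathcal S}_\lambda^\gamma=\mathcal S$. By strict concavity it suffices that every one-sided directional derivative of $Q_n$ at $\wh\bvartheta^{\mathrm{or}}$ be $\le0$; the unpenalized $\bpi,\brho$ directions follow from ordinary stationarity, so I take a $\bphi$-direction $\delta=\delta_\parallel+\delta_\perp$ with $\delta_\parallel\in\mathcal L_0$, $\delta_\perp\in\mathcal L_0^\perp$. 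The derivative along $\delta_\parallel$ is $0$ by restricted stationarity; along $\delta_\perp$, using $((\I_p\otimes\A)\delta_\parallel)_\ell=0$ for $\ell\notin\mathcal S$, it equals $G\trans\delta_\perp-n\lambda\sum_{\ell\notin\mathcal S}w_\ell\,|((\I_p\otimes\A)\delta_\perp)_\ell|$, where $G$ is the $\bphi$-gradient of $\ell_n$ at $\wh\bvartheta^{\mathrm{or}}$ minus $n\lambda$ times the penalty (sub)gradient supported on $\mathcal S$, and which restricted stationarity forces to be orthogonal to $\mathcal L_0$. So it remains to check $|G\trans\delta_\perp|\le n\lambda\sum_{\ell\notin\mathcal S}w_\ell\,|((\I_p\otimes\A)\delta_\perp)_\ell|$ for all $\delta_\perp\in\mathcal L_0^\perp$. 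Two facts close it: (i) $\delta_\perp\mapsto\bigl(((\I_p\otimes\A)\delta_\perp)_\ell\bigr)_{\ell\notin\mathcal S}$ is injective on $\mathcal L_0^\perp$ (its kernel is $\mathcal L_0\cap\mathcal L_0^\perp=\{0\}$), so the right-hand side is $\ge(\min_{\ell\notin\mathcal S}w_\ell)\,c_0\|\delta_\perp\|\ge c\,c_0\,n^{\gamma/2}\|\delta_\perp\|$ w.h.p.\ for a fixed $c_0>0$; (ii) Taylor-expanding the $\bphi$-gradient of $\ell_n$ at $\wh\bvartheta^{\mathrm{or}}$ about $\bvartheta^*$ and using $\sqrt n$-consistency gives $\|G\|=O_p(\sqrt n)+O_p(n\lambda)$ (the $O_p(n\lambda)$ term being the signal-block penalty contribution, where $w_\ell=O_p(1)$). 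Thus the inequality holds w.h.p.\ provided $n^{(\gamma+2)/2}\lambda$ dominates both $\sqrt n$ and $n\lambda$, i.e.\ $n^{(\gamma+1)/2}\lambda\to\infty$ and $\gamma>0$ --- exactly the hypotheses --- so all directional derivatives are $\le0$ (strictly negative off $\mathcal L_0$), and $P(\wh{\mathcal S}_\lambda^\gamma=\mathcal S)\to1$.

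I expect the directional-derivative/KKT step to be the main obstacle: because the sparsity pattern lives in the image $\bbeta=(\I_p\otimes\A)\bphi$ rather than in $\bphi$ itself, the usual adaptive-lasso irrepresentable-condition bookkeeping must be recast in the quotient geometry, and the two enabling points are that the generalized-lasso penalty restricted to $\mathcal L_0^\perp$ is still a genuine norm (the injectivity in (i)) and that the adaptive weights on the null block inflate at precisely the rate $n^{\gamma/2}$ needed to swamp both the $O_p(\sqrt n)$ score and the $O_p(n\lambda)$ leakage from the signal block under the stated tuning rates. A secondary, essentially bookkeeping difficulty --- the mixture log-likelihood is non-concave and has a classically unbounded likelihood --- is handled by working only in an $O_p(n^{-1/2})$ neighborhood of $\bvartheta^*$ under conditions (A)--(C) and invoking the boundedness argument already cited in the paper.
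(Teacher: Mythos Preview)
Your proposal is correct and, at the level of the underlying mathematics, reduces to exactly the same two rate facts the paper uses: $w_\ell=O_p(1)$ on $\mathcal S$ (so $\sqrt n\lambda\,w_\ell\to 0$) and $w_\ell^{-1}=O_p(n^{-\gamma/2})$ on $\mathcal S^c$ (so $(\sqrt n\lambda)^{-1}w_\ell^{-1}\to 0$), under the tuning conditions $\sqrt n\lambda\to 0$ and $n^{(\gamma+1)/2}\lambda\to\infty$. The difference is purely one of packaging. The paper's proof is a two-line reduction: it casts \eqref{mixreg:commplog-varsel2} as a generalized lasso with $\T=\W(\I_p\otimes\A)$, invokes the irrepresentable-type sufficient condition and the adaptive-weight reduction from \citet{she2010gl} (their Theorems~3.1--3.2), and then simply verifies the two weight-rate conditions above. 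You instead carry out the Zou--Fan--Li programme from scratch: construct the oracle-restricted maximizer on $\mathcal L_0$, use local strict concavity of $Q_n$ near $\bvartheta^*$ (via conditions (A)--(C)), and check the one-sided directional derivatives directly, with the injectivity of $(\I_p\otimes\A)$ restricted to $\mathcal L_0^\perp$ playing the role of the norm equivalence that makes the null-block penalty a genuine norm. Your route is more self-contained and makes the geometric role of the linear map $\bbeta=(\I_p\otimes\A)\bphi$ explicit, at the cost of reproving what \citet{she2010gl} already packaged; the paper's route is terser but hides the KKT bookkeeping inside the citation. Either way the decisive inequality is the same comparison of $O_p(\sqrt n)+O_p(n\lambda)$ against $n\lambda\cdot n^{\gamma/2}$, and your derivation of it is sound.
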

 
Theorem \ref{thm:const_estim} shows that the non-adaptive estimator can achieve $\sqrt{n}$-consistency in model estimation, under typical regularity conditions on the joint density of $(y,\x)$. Theorem \ref{thm:const_select} shows that the adaptive estimator, under the same conditions and with weights constructed from a consistent estimator such as the non-adaptive one in Theorem \ref{thm:const_estim}, can further achieve consistency in feature selection and heterogeneity pursuit.

%%% Local Variables:
%%% mode: latex
%%% TeX-master: t
%%% End:

%\vspace{-1em}
\section{Computation}\label{sec:comp}

We propose a generalized EM algorithm for optimizing the criterion in \eqref{mixreg:commplog-varsel1}, which enjoys desirable convergence guarantee that the object function is monotone ascending along the iterations. The algorithmic structure is mostly straightforward based on the work of \citet{stadler2010}, except that in the M-step we need to efficiently solve an $\ell_1$ regularized weighted least squares problem with equality constraints. A Bregman coordinate descent algorithm \citep{Goldstein2009} is proposed to solve it. For tuning the number of component $m$ and the penalty parameter $\lambda$, we propose to minimize a Bayesian information criterion (BIC). To save space, the derivations of the algorithm and the details on tuning are provided in Section \ref{app:comp} of Supplementary Materials.

%%% Local Variables:
%%% mode: latex
%%% TeX-master: t
%%% End:

\section{Simulation}\label{sec:sim}

We compare the following methods via simulation,
\begin{itemize}
\item Normal mixture regression with variable selection via lasso (Mix-L, or M1) and via adaptive lasso (Mix-AL, or M2), proposed by \citet{stadler2010}.
\item The proposed normal mixture effects regression with variable selection and heterogeneity pursuit via lasso (Mix-HP-L, or M3) and via adaptive lasso (Mix-HP-AL, or M4).
\end{itemize}

The sample size is set to $n=200$ and the number of components is set to $m=3$. The data on the predictors, $\x_i \in \mathbb{R}^p$ for $ i = 1, \ldots, n$, are generated independently from multivariate normal distribution with mean $\mathbf{0}$ and covariance matrix $\bSigma$. We consider two correlation structures, i.e., the uncorrelated case with $\bSigma=\I_p$, and the correlated case with $\sigma_{ij} = 0.5^{|i-j|}$ where $\sigma_{ij}$ denotes the $(i,j)$'s entry of $\bSigma$. We consider three predictor dimensions: $p \in \{30, 60, 120\}$. As such, the number of free model parameters is 95, 185, and 365, respectively, being either comparable or much larger than the sample size.

In each setting, the first $p_0= 10$ predictors are relevant, and among them only $p_{00} = 3$ predictors have scaled heterogeneous effects over different components according to Definition \ref{def:2}. Specifically, under the mixture effects model~\eqref{eq:mixreg3} with $m=3$, the sub-vectors of the first 10 entries of the scaled coefficient vectors $\bbeta_j$, denoted as $\bbeta_{j0}$, $j = 0, 1, 2, 3$, are set as
%\vspace{-0.5em}
\begin{align*}
  &\bbeta_{00} = (1, 1, 1, 1, 1, 1, 0, 0, 0)\trans / \sqrt{\delta}, \quad
    \bbeta_{10} = (0, 0, 0, 0, 0, 0, 0, 0, -3, 3)\trans / \sqrt{\delta},\\
  &\bbeta_{20} = (0, 0, 0, 0, 0, 0, 0, -3, 3, 0)\trans / \sqrt{\delta}, \quad
    \bbeta_{30} = (0, 0, 0, 0, 0, 0, 0, 3, 0, -3)\trans / \sqrt{\delta},
\end{align*}%\vspace{-1em}
and the variance components are set as $(\sigma_1^2, \sigma_2^2, \sigma_3^2)\trans = \delta \times (0.1,
    0.1, 0.4)\trans$, where $\delta$ controls the signal to noise ratio (SNR) defined as
$\mbox{SNR} = \sum_{j=1}^m \pi_j \bb_j\trans \mbox{cov}(\bX)\bb_j / \sum_{j=1}^m \pi_j\sigma_j^2$,
with $\bb_j = (\bbeta_0 + \bbeta_{j}) \times \sigma_j$, $j = 1, \ldots, m$ being the corresponding unscaled coefficient vectors as in the mean model \eqref{eq:mixreg1}. We remark that $\bb_j$s remain the same for different $\delta$ values, for facilitating the comparison among different SNRs. We choose $\delta = 1/8, 1/4, 1/2, 1, 2$, corresponding to $\mbox{SNR} = 200, 100, 50, 25, 12.5$, respectively. We set $\pi_1 = \pi_2 = \pi_3 = 1/m$ and generate the response values by \eqref{eq:mixreg3}. We choose the tuning parameter $\lambda$ and the number of components $m \in \{2, 3, 4\}$ by minimizing BIC. The experiment is repeated 500 times under each setting.

The following performance measures are computed. The estimation performance for the unscaled regression coefficients $(\bb_1, \ldots, \bb_m)$, the mixing probability $(\pi_1, \ldots, \pi_m)$ and the variances $(\sigma_1^2, \ldots, \sigma_m^2)$ is measured by their corresponding mean squared errors (MSE). The variable selection performance is measured by the false positive rate (FPR) and the true positive rate (TPR) for identifying relevant predictors, and the false heterogeneity rate (FHR) for identifying predictors with heterogeneous effects. Specifically, they are defined as below:
\begin{itemize}
\item FPR = \#falsely selected variables with no effects / \#variables with no effects;
\item TPR = \#correctly selected variables with effects / \#variables with effects;
\item FHR = \#falsely selected variables with {\revcolor{red}heterogeneous} effects / \#variables with common effects.
\end{itemize}

Figure~\ref{fig:simulation} displays the boxplots of mean squared errors in various simulation settings, and Table~\ref{tab:n=60} shows the detailed results for $p=60$ with $\bSigma = \bI_p$. {\revcolor{red}The results for $p \in \{30, 120\}$ and for the cases of correlated predictors convey similar messages, which are provided in Section \ref{app:sim} of Supplementary Materials.} The findings are summarized as follows.
\begin{itemize}
\item As expected, in general the larger the signal to noise ratio and the smaller the model dimensions, the better the performance of each method. 
\item Adaptive method in general leads to more accurate results in both model estimation and variable selection than its non-adaptive counterpart. The improvement can be substantial. {\revcolor{red}Specifically, both Mix-HP-L and Mix-HP-AL rarely miss important variables, but the former tends to select a larger model with more irrelevant variables. Indeed, the over-selection property of $\ell_1$ penalization is well known.}

\item The proposed methods Mix-HP-L and Mix-HP-AL outperform their counterparts without heterogeneity pursuit, Mix-L and Mix-AL, respectively, in most simulation setups, except that when $\mbox{SNR} = 12.5$ and $p=120$, all methods suffer from very low signal to noise ratio and very high dimensionality. The Mix-HP-AL has the best performance among all the competing methods; its improvement over others can be substantial especially when the signal is weak or moderate and the model dimension is high; moreover, in those relatively difficult scenarios, even Mix-HP-L can outperform Mix-AL. 
\item We have examined settings where all relevant predictors have heterogeneous effects, for which the methods with or without heterogeneity pursuit perform similarly. {\revcolor{red}We have also considered settings with unequal mixing probabilities, where the implications are similar; see Section \ref{app:sim} of Supplementary Materials. These results clearly demonstrate the benefit of heterogeneity pursuit, as it enables the potential of identifying the most parsimonious model.}
\end{itemize}

We conclude that overall the proposed {\red heterogeneity pursuit approach with adaptive lasso} (Mix-HP-AL) is preferable to both the non-adaptive {\red counterpart} Mix-HP-L and the conventional methods like Mix-L and Mix-AL. The proposed method is particularly beneficial when it is believed that only very few predictors contribute to the regression heterogeneity.

\begin{figure}
  \includegraphics[width=\textwidth]{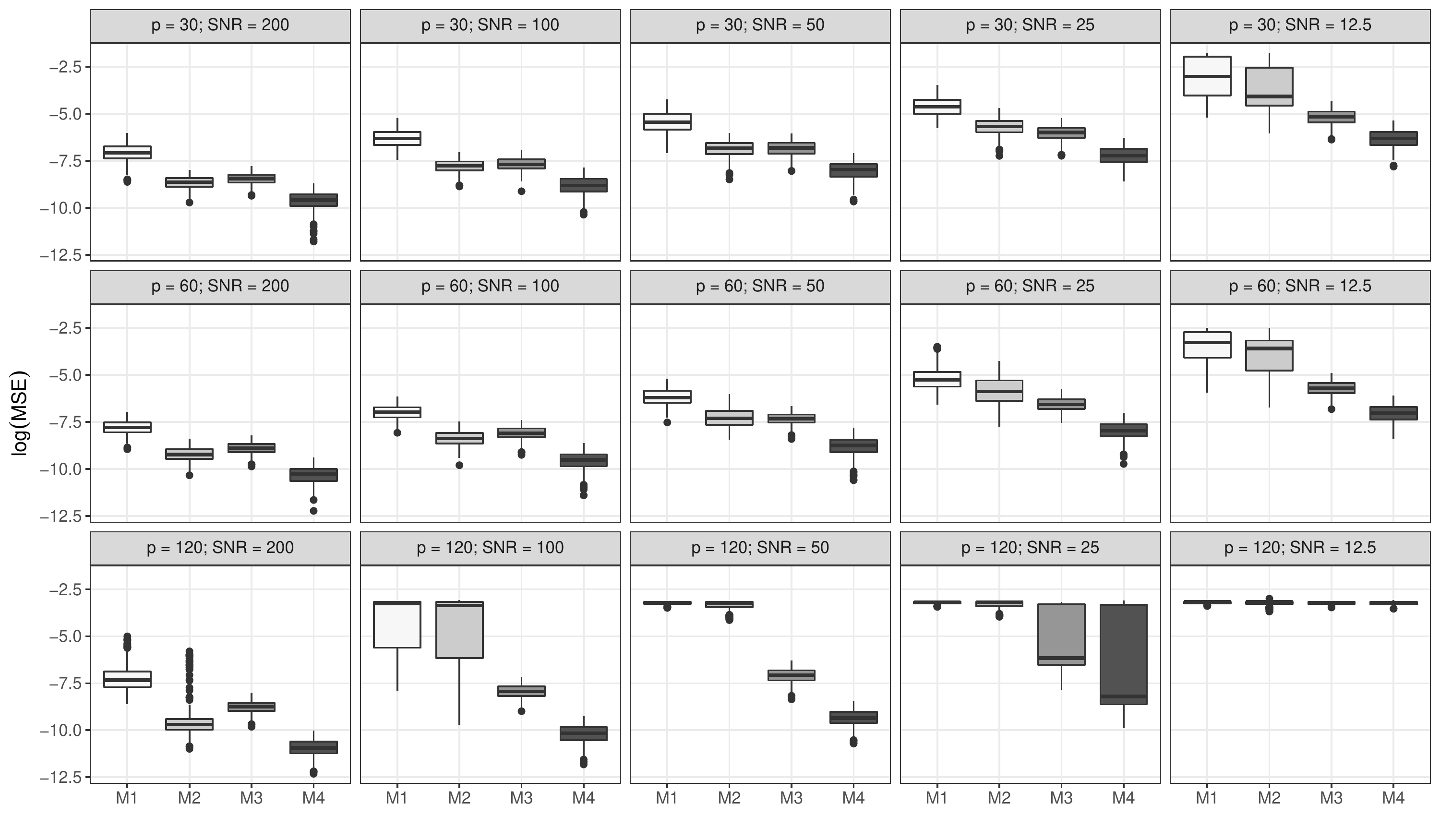}
  \caption{Boxplots of mean squared errors (in log scale) for
    estimating the unscaled coefficient vectors, for simulation
    settings with {\revcolor{red}$n=200$, $p \in \{30, 60, 120\}$, and $\bSigma = \bI_p$}, and
    $\mbox{SNR} \in \{200, 100, 50, 25, 12.5\}$. Four methods are
    compared: Mix-L (M1), Mix-AL (M2), Mix-HP-L (M3) and Mix-HP-AL
    (M4). The $\log (\mbox{MSE})$: logarithm of mean squared
    errors.}\label{fig:simulation}
\end{figure}

\begin{table}[tbp]
  \centering
  \caption{Comparison of mean squared error of estimation, variable
    selection and heterogeneity pursuit performance of four methods,
    Mix-L, Mix-AL, Mix-HP-L and Mix-HP-L, under settings with {\revcolor{red}$n = 200$, $p =
    60$, and $\bSigma = \bI_p$}. The mean squared errors (MSE) are reported along with their standard errors in the parenthesis. The simulation is based on $500$ replications. The MSE values are scaled by multiplying 100, and the FPR, FHR, TPR values are reported in percentage. } \label{tab:n=60}
\begin{tabular}{llrrrrrr}
  \toprule
  & & \multicolumn{3}{c}{MSE} & \multicolumn{3}{c}{RATE}\\
  \cmidrule(lr){3-5} \cmidrule(lr){6-8}
  $\mbox{SNR}$ & Method & $\bb$ & $\bsigma^2$ & $\bpi$ & FPR & FHR & TPR \\
  \midrule
  200 & Mix-L & 0.04 (0.02) & 7.32 (4.04) & 0.18 (0.15) & 53.0 & 100.0 & 100.0 \\
  & Mix-AL & 0.01 (0.00) & 0.09 (0.08) & 0.12 (0.10) & 10.2 & 100.0 & 100.0 \\
  & Mix-HP-L & 0.01 (0.00) & 4.95 (1.88) & 0.14 (0.11) & 39.1 & 4.4 & 100.0 \\
  & Mix-HP-AL & 0.00 (0.00) & 0.07 (0.05) & 0.11 (0.10) & 4.0 & 0.3 & 100.0 \\
  \midrule
  100 & Mix-L & 0.10 (0.04) & 10.58 (5.57) & 0.22 (0.18) & 48.3 & 100.0 & 100.0 \\
  & Mix-AL & 0.03 (0.01) & 0.10 (0.10) & 0.12 (0.11) & 13.3 & 100.0 & 100.0 \\
  & Mix-HP-L & 0.03 (0.01) & 7.04 (2.83) & 0.15 (0.12) & 36.8 & 3.8 & 100.0 \\
  & Mix-HP-AL & 0.01 (0.00) & 0.07 (0.06) & 0.12 (0.10) & 3.8 & 0.1 & 100.0 \\
  \midrule
  50 & Mix-L & 0.23 (0.11) & 14.66 (8.24) & 0.28 (0.23) & 43.9 & 100.0 & 100.0 \\
  & Mix-AL & 0.08 (0.05) & 0.22 (0.22) & 0.15 (0.13) & 16.6 & 100.0 & 100.0 \\
  & Mix-HP-L & 0.07 (0.02) & 9.91 (3.85) & 0.17 (0.14) & 33.4 & 2.8 & 100.0 \\
  & Mix-HP-AL & 0.02 (0.01) & 0.09 (0.10) & 0.13 (0.11) & 3.7 & 0.1 & 100.0 \\
  \midrule
  25 & Mix-L & 0.72 (0.60) & 29.07 (25.26) & 0.54 (0.61) & 33.9 & 100.0 & 100.0 \\
  & Mix-AL & 0.36 (0.25) & 1.34 (1.70) & 0.28 (0.32) & 19.0 & 100.0 & 100.0 \\
  & Mix-HP-L & 0.15 (0.06) & 12.57 (5.30) & 0.19 (0.15) & 33.2 & 1.9 & 100.0 \\
  & Mix-HP-AL & 0.04 (0.02) & 0.15 (0.16) & 0.13 (0.11) & 4.0 & 0.2 & 100.0 \\
  \midrule
  12.5 & Mix-L & 4.11 (2.58) & 101.20 (76.46) & 7.01 (5.90) & 16.0 & 100.0 & 90.0 \\
  & Mix-AL & 3.24 (2.56) & 33.17 (37.18) & 4.66 (4.71) & 12.3 & 100.0 & 90.0 \\
  & Mix-HP-L & 0.36 (0.13) & 16.50 (8.27) & 0.22 (0.18) & 35.0 & 1.9 & 100.0 \\
  & Mix-HP-AL & 0.10 (0.04) & 0.38 (0.40) & 0.15 (0.13) & 5.5 & 0.1 & 100.0 \\
  \bottomrule
\end{tabular}
\end{table}

%%% Local Variables:
%%% mode: latex
%%% TeX-master: t
%%% End:

\section{Applications}\label{sec:app}

\subsection{Alzheimer's Disease Neuroimaging Initiative (ADNI)}\label{sec:app-adni}
We performed imaging genetics analysis based on data from the ADNI database which is a public-private partnership to study the progression of mild cognitive impairment and early Alzheimer's disease based on different data sources including genetics, neuroimaging, clinical assessments, etc. (See \citeauthor{ADNI2020} for detailed study design and data collection information). Our goal here is to find out whether distinct clusters of disease-gene associations exist, possibly corresponding the disease stages, and to identify common genetic factors associated with overall disease risk, as well as cluster-specific ones. 

Briefly, to control data quality and reduce population stratification effect, we only include 760 Caucasian subjects in this analysis. For each subject, single nucleotide polymorphisms (SNPs)  genotyping data were acquired by Human 610-Quad BeadChip (Illumina, Inc., San Diego, CA) according to the manufacturer's protocols; and  raw MRI data were collected through 1.5 Tesla MRI scanners and then preprocessed by standard steps including anterior commissure and posterior commissure correction,  skull-stripping, cerebellum removing, intensity inhomogeneity correction, segmentation, and registration \citep{shen2004measuring}. The preprocessed brain images were further labelled regionally by existing template and then transferred following the deformable registration of subject images \citep{wang2011robust}, which eventually led to 93 regions of interest over whole brain.  After removing the ones with sex check failure, more than 10\% missing single nucleotide polymorphisms (SNPs), and outliers, 741 subjects including 174 Alzheimer's disease, 362 mild cognitive impairment and 205 healthy controls remain in the analysis.

We consider the following imaging phenotypes:  two global brain measurements, i.e. whole brain/white matter volumes, and two Alzheimer's disease related endophenotypes, i.e. left and right lateral ventricles volumes. For each imaging trait, we include both  SNPs belonging to the top 10 Alzheimer's disease candidate genes provided by the \citeauthor{AlzGene2020} database and those  identified from United Kingdom (UK) Biobank \citep{zhao2019gwas} ($\sim$20,000 subjects)  under the same imaging phenotype. The final lists of SNP names are provided in online supplementary materials. 
We fit our proposed Mix-AL-HP model for each imaging trait and its corresponding genetic predictors to examine the cluster patterns and select risk factors that impact the whole cohort with common effects and those impact the sub-groups/clusters heterogeneously. Age, gender and the top five genetic principal components are always included in the models as controls with common effects and no regularization. We fit models with different component numbers ($m \in \{1, 2, \ldots, 5\}$) and with/without the assumption of equal variance; the best model is selected based by BIC.

We first examine the identified clusters for each imaging trait to see whether the cluster pattern is associated with disease progression. 
The numbers of clusters for the four imaging traits, left/right ventricles and whole brain/white matter volumes, are 2, 3, 3, 1 with the smallest BIC values regarding $\lambda$ being 1873.12, 1871.58, 1794.04 and 2144.72, respectively. And among the three imaging traits with more than one identified cluster, the average values of imaging phenotype are shown to be clearly different over different clusters.
See Figure \ref{fig:adni_boxplot} in Section \ref{app:adni} of Supplementary Materials, which shows the cluster-specific boxplots of each imaging trait.
Intriguingly, given the fact that the size of brain increases along Alzheimer's disease progression, we are able to clearly align the identified clusters to different disease stages in light of the average volume of imaging traits. 
Note that for the white matter volume, which is a global brain phenotype, no cluster pattern is detected, which is biologically reasonable due to its weaker pathological bounding to disease etiology.

All the selected SNPs and their types of effect (common or cluster-specific) are summarized in Figure \ref{fig:adni_heatmap}. Most of the identified genetic risk variants (e.g. SNPs within genes CD2AP, MRVI1, GNA12) associated with two Alzheimer's disease imaging biomarkers are consistent and subtype-related, indicating the existence of varying genetic effects on brain structure over diesease progression. Meanwhile, the selected SNPs related to global brain phenotypes are generally with common effect; again, this is due to their weaker pathological bounding to disease etiology compared with the Alzheimer's disease related endophenotypes. 

Figure \ref{fig:adni_coefmap} provides a visualization of the estimated coefficients of each selected SNP under different clusters. Based on Figure \ref{fig:adni_coefmap}, we successfully detect a few SNPs showing a particular strong impact on early- to middle-stage Alzheimer's disease including rs2025935,  rs677909 and  rs798532 located in genes BIN1, MS4A4E and GNA12. Among them, BIN1 is the key molecular factor to modulate tau pathology and has recently been recognized as an important risk locus for late-onset Alzheimer's disease \citep{tan2013bridging}; MS4A4E has been detected by GWAS as a genetic risk factor for Alzheimer's disease based on Alzheimer's Disease Genetic Consortium \citep{hollingworth2011common}; and GNA12, though has not been extensively reported in existing experiments, is known to over-express in human brain.  Due to a typical small/moderate effect of single genetic signal, some of these variants are highly likely to be buried under existing methods without clustering overall heterogeneity. Moveover, our results provide valuable insights to prioritize future early therapeutic strategies even among all the Alzheimer's disease genotypes. In terms of other selected SNPs, most of them have been recognized as Alzheimer's disease risk factors in previous experiments or analyses, and they either show a common effect across all the clusters or a mixing one including both early and late stages in our results. Detailed estimation results are reported in Section \ref{app:adni} of Supplementary Materials.

\begin{figure}[tbp]
  \begin{center}\includegraphics[width=0.8\textwidth]{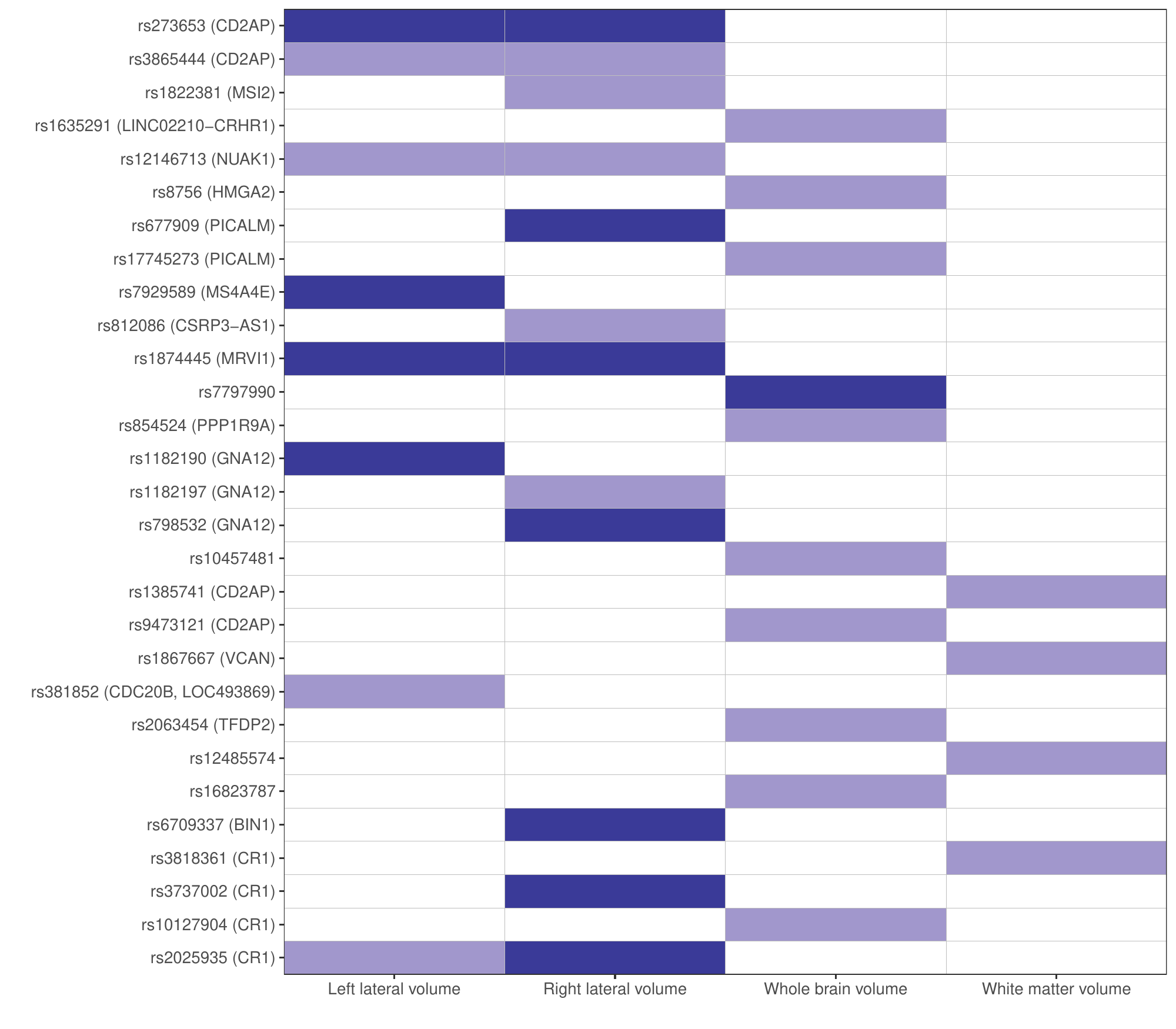}\end{center}
  \caption{ADNI study: effects of selected SNPs and their associated genes for the four imaging phenotypes. Light color means a SNP has only common effect across clusters; Dark color means a SNP has different effects across clusters and thus is considered as a source of heterogeneity. The SNPs are ordered based on the their positions on chromosomes. This figure appears in color in the electronic version of this article, and any mention of color refers to that version.}\label{fig:adni_heatmap}
\end{figure}

\begin{figure}[tbp]
  \begin{center}\includegraphics[width=\textwidth]{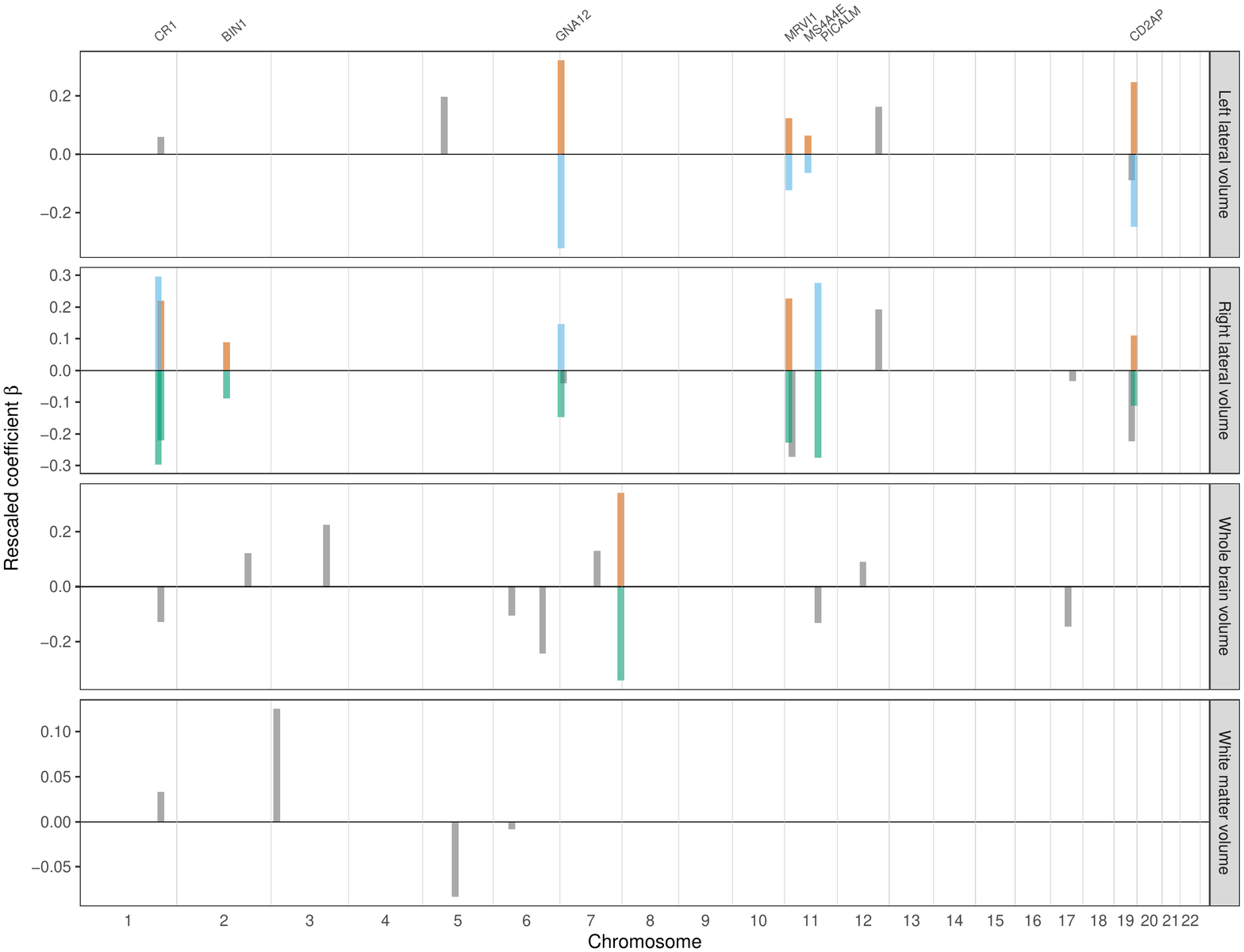}\end{center}
  \caption{ADNI study: estimated scaled coefficients ($\widehat{\phi}_{kj}$) of selected SNPs for the four imaging phenotypes, showing along their positions on chromosomes. The numbers of clusters are 2,3,3,1 for the four imaging phenotypes, showing from top to bottom. For each imaging phenotype, its cluster labels are aligned with decreasing average values of the phenotype (thus correspond to different disease stages).
  Grey color means a SNP has only common effect across clusters; red color indicates cluster 1; blue color indicates cluster 2; and green color indicates cluster 3. This figure appears in color in the electronic version of this article, and any mention of color refers to that version.}\label{fig:adni_coefmap}
\end{figure}

\subsection{Connecticut Adolescent Suicide Risk Study}\label{sec:app-suicide}

Suicide among youth is a serious public health problem in the United States. The Centers for Disease Control (CDC) reported that suicide is the third leading cause of death of youth aged 15–24 based on 2013 data, and more alarmingly there has been an increasing trend over time. Suicide prevention among youth is a very challenging task, which requires a systematic approach through developing reliable metrics for assessing suicide risk, locating areas of greater risk for effective resource allocation, identifying important risk factors, among others.

We use data from the State of Connecticut at the school district level to explore the association between suicide risk among 15-19 year olds and the characteristics of their school district. Specifically, the overall suicide risk of the 15-19 age group within each school district is proxied by its log-transformed 5-year average rate of inpatient hospitalizations due to suicide attempts from 2010 to 2014 (per year per 10,000 population). Several characteristics of the $n=119$ school district characteristics were collected in the same period: (1) demographic measures, including percent of households that included an adult male, average household size, percent of the population that are under 18 years of age, percent of population who are White; (2) academic measures, including average score on the Connecticut Academic Performance Test (CAPT), graduation rate, dropout rate, and attendance rate of high schools in the district; (3) behavioral measures, including incidence rate, defined as the ratio between the number of disciplinary incidences and the total enrollment, and serious incidence rate; (4) economic measures, including median income and free lunch rate. More details about the data can be found in \citet{Chen2017suicide}.

In the previous study, a generalized mixed-effects model was used to estimate the common effects of the school district characteristics on the suicide risk (through fixed-effects terms) and identify the ``overachievers'' and ``underachievers'' (through district-level random effects) among school districts. (It was also shown that there was no significant spatial effect.) Indeed, the existence of these anomalous school districts suggests that the regression association may not be homogeneous, and thus it is interesting to see whether additional insights can be gained by a mixture regression analysis, to reveal the potentially heterogeneous association structure, cluster the school districts, and identify the district characteristics that drive the heterogeneity. We thus apply our proposed Mix-HP-AL method to analyze the data. For dealing with the highly-correlated school district measurements, we perform group-wise principal component analysis and use each leading factor to summarize the information of each category, which results in $p=4$ district factors; the details of the principal component analysis results are provided in Section \ref{app:suicide} of Supplementary Materials.

Table~\ref{tab:res_suicide_0} reports the estimation results, and Figure~\ref{fig:cluster_suicide_hp} shows the corresponding cluster pattern of the school districts using the naive Bayes classification rule with the estimated component probabilities $\widehat{p}_{ij}$, i.e., $\widehat{z}_{ik} = 1$ if $k = \arg\max_{j} \widehat{p}_{ij}$. A three-component model is selected based on the BIC, in which the three factors differentiate school districts not in terms of their overall suicide risk as we did in our prior analysis, but in terms of the association between the risk factors and suicide risk. In Table~\ref{tab:res_suicide_0} one can see that only the demographic and academic factors are selected; when conditioning on the selected factors, the economic factor and the behavior factor are no longer related to the suicide risk, which may be partly due to the fact that the four factors are still moderately correlated. The major difference among the 3 clusters of communities involves the direction of effects of the demographic factor, which indicates a great deal of heterogeneity in how this factor impact suicide risk across communities. The majority of the school districts are in cluster 3, in which the suicide risk is negatively associated with the demographic factor; that is, in general, the larger the household size, the greater percentage of households with an adult male, the greater the proportion of population under age 18, and higher the proportion of White residents are associated with {\it lower} suicide risk, after adjusting for the effect of academic performance. In contrast, in cluster 1 the association between the suicide risk and the demographic factor is positive, such that higher rates of male householders, larger household size, higher proportions of children under 18, and a higher proportion of White residents is associated with {\it higher} suicide risk. Further analysis reveals that the 12 school districts in cluster 1 have significantly lower mean suicide risk than those in cluster 3; this suggests that the impact of the demographic factors on suicide risk may change and even flip sign with the mean suicide risk level itself. It is possible that this is caused by some ``unmeasured'' factors confounded with the demographic factor. Cluster 2 is the smallest in size among the three, consisting of ``Regional 19'' (near the University of Connecticut), ``New London'' and ``Monroe''; these are anomalous districts with very low suicide risk. The academic factor, in contrast, is identified to have only common effects after scaling by the variances according to Definition \ref{def:2}, which makes the estimated model even more parsimonious. That the effect of the academic factor is always positive indicates that suicide risk tends to be higher in those school districts with better academic performance; as discussed in \citet{Chen2017suicide}, students in school districts of better academic performance could be under higher pressure, which in turn may induce more psychological distress. In general, our results agree well with previous studies, and we gain additional insight on the changing impact of the school district characteristics on the suicide risk.

\begin{table}[tbph]
  \centering \caption{Suicide risk study: the coefficient estimates using Mix-HP-AL. The zero values are shown as blanks.}
  \label{tab:res_suicide_0}
  \begin{tabular}{lrrrrrrr}
    \toprule
    Factors & $\widehat{\bphi}_1$ & $\widehat{\bphi}_2$ & $\widehat{\bphi}_3$\\
    \midrule
    Intercept & 6.23 & 6.23 & 6.23 \\
    Demographic factor & 0.27 & & -0.27 \\
    Academic factor & 0.13 & 0.13 & 0.13 \\
    Behavioral factor  & & & \\
    Economical factor & & & \\
    $\widehat\sigma$ & 0.33 & 0.20 & 0.46 \\
    $\widehat\pi$ & 0.13 & 0.02 & 0.85 \\
    \bottomrule
  \end{tabular}
\end{table}

We have also compared Mix-HP-AL to Mix-AL by performing a random-splitting procedure to evaluate their out-of-sample predictive performance. Each time the data is split to 80\% training for model fitting and 20\% testing for out-of-sample evaluation, and the procedure is repeated 500 times. The average predictive log-likelihood (with standard error in the parenthesis) is $-24.6$ ($2.32$) and $-23.2$ ($2.16$) for Mix-AL and Mix-HP-AL, respectively, indicating that the proposed method indeed performs better for this dataset.

\begin{figure}[htp]
  \begin{center}\includegraphics[width=0.9\textwidth]{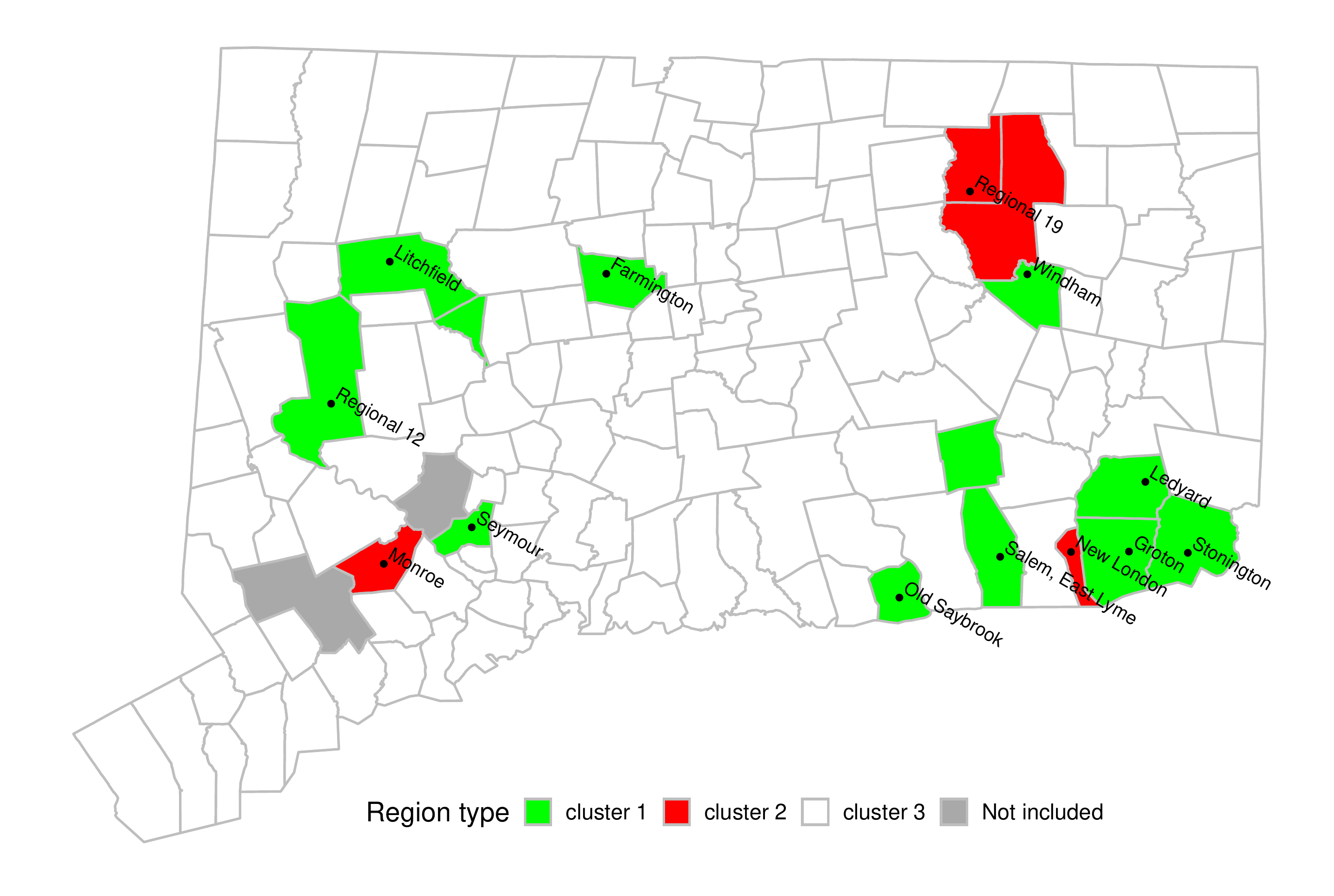}\end{center}
  \caption{Suicide risk study: district clustering using Mix-HP-AL. This figure appears in color in the electronic version of this article, and any mention of color refers to that version.}\label{fig:cluster_suicide_hp}
\end{figure}

The proposed method has also been applied to another application in sports analytics for understanding how the salaries of baseball players are associated with their performance and contractual status. Due to space limit, the application is detailed in Section~\ref{app:baseball} of Supplementary Materials.

%%% Local Variables:
%%% mode: latex
%%% TeX-master: t
%%% End:

\section{Discussion}\label{sec:dis}

In this paper, we propose a mixture regression method to thoroughly explore the heterogeneity in a population of interest, which is increasingly encountered in the era of big data. Our approach goes beyond the conventional variable selection methods, by not only identifying the relevant predictors, but also distinguishing from them the true sources of heterogeneity. As such, the proposed approach can potentially lead to a much more parsimonious and interpretable model to facilitate scientific discovery.

There are a number of future research directions. It is pressing to extend the proposed method to handle non-Gaussian outcomes, such as binomial mixture and Poisson mixture. This extension can help us to improve the analysis for the suicide risk study, as the raw counts of the suicide-related hospital admissions may be better modeled by Poisson distribution. {\revcolor{red}Another possible direction is to consider other forms of penalty functions. For example, when the predictors are highly correlated, it could be beneficial to use the elastic-net penalty \citep{zou2005regularization} to ensure stable coefficient estimation. Non-convex penalties could also be considered to improve variable selection.} A related task is to extend the theoretical analysis to high-dimensional settings where the number of variables may grow with or exceed the sample size. A potential byproduct of the proposed approach is that it can lead to automatic reduction of the number of pre-specified clusters when the effects of some clusters are estimated to be exactly the same; it is hopeful that this interesting feature can allow us to build a more general mixture learning framework where relevant variables, sources of heterogeneity and the number of clusters are simultaneously learned. It would also be interesting to consider heterogeneity pursuit in multivariate mixture regression, but it is not straightforward. The mixture components may have different covariance matrices which complicate the definition of the sources of heterogeneity, and the set of predictors with heterogeneous effects may differ across different responses.

In this work, we mainly focus on the framework of mixture regression to pursue the sources of heterogeneity at the ``global" level. An interesting direction is to extend our work to utilize the frameworks of individualized modeling and sub-group analysis which mainly pursue the sources of heterogeneity at the ``individual" level \citep{tang2020individual}. To lessen the assumptions of mixture regression, several recent works formulate the problem as a penalized regression with a fusion-type penalty. \citet{Ma2017fusion} proposed a concave pairwise fusion approach to identify sub-groups with pairwise penalization on subject-specific intercepts. \citet{Austin2017fusion} proposed a grouping fusion approach to identify unknown sub-groups and their corresponding regression models. \citet{tang2020individual} proposed a method to simultaneously achieve individualized variable selection and sub-grouping. Comparing to the mixture model framework, an individualized penalized regression approach may not fully utilize the potential global mixture structure and fails to consider the potential heterogeneity in variances. Therefore, we will explore the idea of combining mixture model and individualized fusion, to simultaneously perform global and individualized heterogeneity pursuits.

%%% Local Variables:
%%% mode: latex
%%% TeX-master: t
%%% End:

\if1\blind {
\bigskip
\section*{Acknowledgments}
Yu is supported by National Natural Science Foundation of China grant 11661038 and Jiangxi Provincial Natural Science Foundation grant 20202BABL201013. Yao is supported by U.S. National Science Foundation grant DMS-1461677 and the Department of Energy award No. 10006272. Aseltine is supported by the U.S. National Institutes of Health grant R01-MH112148, R01-MH112148-03S1, and R01-MH124740. Chen is supported by the U.S. National Science Foundation grants DMS-1613295 and IIS-1718798, and the U.S. National Institutes of Health grants R01-MH112148, R01-MH112148-03S1, and R01-MH124740.

ADNI data used in preparation of this article were obtained from the Alzheimers Disease Neuroimaging Initiative (ADNI) database (http://adni.loni.usc.edu). As such, the investigators within the ADNI contributed to the design and implementation of ADNI and/or provided data but did not participate in analysis or writing of this paper.
} \fi

\section*{Supplementary Materials}

% \appendix
\setcounter{figure}{0}
\setcounter{table}{0}
\renewcommand{\thefigure}{\thesection.\arabic{figure}}
\renewcommand*{\theHfigure}{\thesection.\arabic{figure}}
\renewcommand{\thetable}{\thesection.\arabic{table}}
\renewcommand*{\theHtable}{\thesection.\arabic{table}}

% \begin{appendix}

\begin{appendices}

\section{Proofs of the Main Theorems}\label{app:proofs}

\subsection{Regularity Conditions}\label{app:cond}

We use the same regularity conditions (A)--(C) as in \citet{fan2001} and \citet{stadler2010} for analyzing the proposed mixture regression method. Let $\bV_i = (y_i,\bx_i)$ be the $i{\mbox{th}}$ observation for $i = 1, \ldots, n$. Recall that $\bvartheta = (\bphi_1, \ldots,\bphi_m;\pi_1,\ldots,\pi_m;\rho_1,\ldots,\rho_m)$ collects all the unknown parameters from the model presented in \eqref{eq:mixreg2} of the main paper. Let
\begin{align*}
  \Omega = \left\{ \right.& (\bphi_1, \ldots,\bphi_m;\pi_1,\ldots,\pi_m;\rho_1,\ldots,\rho_m) \mid\\
                   & \bphi_j \in \mathbb{R}^p\mbox{ for }j = 1,\ldots,m;\\
                   & \pi_j > 0 \mbox{ for }j = 1,\ldots, m \mbox{ with } \sum_{j=1}^m\pi_j = 1;\\
&  \left.\rho_j >0\mbox{ for } j = 1,\ldots,m.\right\}
\end{align*}

\begin{condition}\label{app:cond:a}
  The observations $\bV_i$, $i = 1, \ldots, n$ are independent and identically distributed with joint probability density $f(\bV, \bvartheta)$ with respect to some measure $\mu$. $f(\bV, \bvartheta)$ has a common support and the model is identifiable. Furthermore, the first and second logarithmic derivatives of $f$ satisfying the equations
  \begin{align*}
    E_{\bvartheta}\left[\frac{\partial \log f(\bV, \bvartheta)}{\partial \bvartheta}\right] = \0
  \end{align*}
  and
  \begin{align*}
    \bI_{jk}(\bvartheta) =
    E_{\bvartheta} \left[\frac{\partial}{\partial \vartheta_j} \log f(\bV, \bvartheta) \frac{\partial}{\partial \vartheta_k} \log f(\bV, \bvartheta) \right] =
    E_{\bvartheta} \left[- \frac{\partial^2}{\partial \vartheta_j \vartheta_k} \log f(\bV, \bvartheta) \right].
  \end{align*}
\end{condition}

\begin{condition}\label{app:cond:b}
  The fisher information matrix
  \[
    \bI(\bvartheta) = E_{\bvartheta} \left\{ \left[\frac{\partial}{\partial \bvartheta} \log f(\bV, \bvartheta)\right] \left[\frac{\partial}{\partial \bvartheta} \log f(\bV, \bvartheta) \right]\trans \right\}
  \]
  is finite and positive definite at the true parameter vector $\bvartheta = \bvartheta^*$.
\end{condition}

\begin{condition}\label{app:cond:c}
  There exists an open subset $\omega$ of $\Omega$ that contains the true parameter vector $\bvartheta^*$ such that for almost all $\bV$ the density function $f(\bV, \bvartheta)$ admits all third derivatives for all $\bvartheta \in \Omega$. Furthermore, there exist functions $M_{jkl}(\cdot)$ such that
  \[
    \bigl\vert \frac{\partial^3}{\partial \vartheta_j \vartheta_k \vartheta_l} \log f(\bV, \bvartheta) \bigr\vert \le M_{jkl}(\bV) \mbox{ for all } \bvartheta \in \omega,
  \]
  where $E_{\bvartheta^*} [M_{jkl}(\bV)] < \infty$ for $i, j, l$.
\end{condition}

\subsection{Proof of Theorem \ref{thm:const_estim} of the Main Paper}

\begin{proof}
  We need to show that for any given $\epsilon >0$, there exists a large constant $c$, such that
\begin{align*}
\mathbb{P}\left\{\sup_{\bu:\|\bu\| = c} \l_\lambda(\bvartheta^* + n^{-1/2}\bu) < \l_\lambda(\bvartheta^*) \right\} \geq 1-\epsilon.
\end{align*}
{\revcolor{red}Here $\l_\lambda$ denotes the generalized lasso criterion defined in \eqref{mixreg:commplog-varsel2} of the main paper when $\gamma = 0$.} 

Define
\begin{align*}
D_n(\bu) & = \l_\lambda(\bvartheta^* + n^{-1/2}\bu) - \l_\lambda(\bvartheta^*)\notag\\
 & =  \l_0(\bvartheta^* + n^{-1/2}\bu) - \l_0(\bvartheta^*)
 - n\lambda \{\|(\I_p \otimes \A)(\bphi^*+n^{-1/2}\bu_{\phi}) \|_1 -  \|(\I_p \otimes \A)\bphi^* \|_1\}.
\end{align*}
Here $\bu_{\phi}$ is a subvector of $\bu$ corresponding to the parameter vector of the regression coefficients $\bphi$.

Denote $(\I_p \otimes \A)\bphi^* = \widetilde{\bphi}^*$ and $(\I_p \otimes \A)\bu_{\phi} = \widetilde{\bu}_{\phi}$. Recall that $
\mathcal{S} =  \{i; ((\I_p\otimes \A)\bphi^*)_i = \widetilde{\phi}^*_{i} \neq 0\}$; denote $\widetilde{\bphi}^*_{\mathcal{S}}$ as the subvector of $\widetilde{\bphi}^*$ and $\widetilde\bu_{\phi,\mathcal{S}}$ the subvector of $\widetilde\bu_{\phi}$ corresponding to the indices in $\mathcal{S}$. Then
\begin{align}
D_n(\bu)
 \leq & \l_0(\bvartheta^* + n^{-1/2}\bu) - \l_0(\bvartheta^*)
  - n\lambda \{\|\widetilde{\bphi}^*_{\mathcal{S}}+n^{-1/2}\widetilde{\bu}_{\phi,\mathcal{S}} \|_1 -  \|\widetilde{\bphi}^*_{\mathcal{S}}\|_1\}.\label{eq:A1}
\end{align}

By the regularity conditions in Section~\ref{app:cond},
\begin{align}
\l_0(\bvartheta^* + n^{-1/2}\bu) - \l_0(\bvartheta^*) = n^{-1/2}\l_0'(\bvartheta^*)\trans \bu - \frac{1}{2}\bu\trans\I(\bvartheta^*)\bu(1+o_p(1)),\label{eq:A2}
\end{align}
where $\l_0'(\cdot)$ denotes the score function so that $n^{-1/2}\l_0'(\bvartheta^*)=O_p(1)$, and $\I(\bvartheta^*)$ is the fisher information matrix at $\bvartheta = \bvartheta^*$ which is positive definite. As such, the second term, which is of the order $\|\bu\|^2$, dominates the first term uniformly in $\|\bu\| = c$ for $c$ sufficiently large.

It remains to show that the third term on the right hand side of \eqref{eq:A1} is also dominated by the second term in \eqref{eq:A2} for sufficiently large $\|\bu\|=c$.
We have that $\|\widetilde{\bu}_{\phi}\| \leq \|\I_p\otimes\A\|\|\bu_{\phi}\|$ and $\|\I_p\otimes\A\| = 1$ for any $p$ and $m$, where $\|\cdot\|$ of a matrix denotes its spectral norm, e.g., the largest singular value of the matrix \citep[See][pp. 71]{GoluVanl96}. To see this, observe that $\|\I_p\otimes\A\| = \|\A\|$, i.e., the largest singular value of $\A$ or the square root of the largest eigenvalue of $\A\trans\A$. Then by definition, we have that $\A\trans\A = \I_m - (1/m - 1/m^2)\J_m$ and $\|\A\| = 1$. It follows that for large enough $n$,
\begin{align*}
|n\lambda \{\|\widetilde{\bphi}^*_{\mathcal{S}} +n^{-1/2}\widetilde{\bu}_{\phi,\mathcal{S}} \|_1 -  \|\widetilde{\bphi}^*_{\mathcal{S}}\|_1\}|
  & = |n^{1/2}\lambda \widetilde{\bu}_{\phi,\mathcal{S}}\trans\mbox{sgn}(\widetilde{\bphi}^*_{\mathcal{S}})|\\
  & \leq n^{1/2}\lambda\sqrt{|\mathcal{S}|}\|\widetilde{\bu}_{\phi,\mathcal{S}}\|\\
  & \leq n^{1/2}\lambda\sqrt{|\mathcal{S}|} \|\bu_{\phi}\|.
\end{align*}
This completes the proof.
\end{proof}

\subsection{Proof of Theorem \ref{thm:const_select} of the Main Paper}

\begin{proof}
  The results are extended from those in \citet{she2010gl}, where the estimation consistency and sign consistency of clustered lasso were studied under linear regression setup. Therefore we briefly sketch the proof.

Consider the following general form of the proposed generalized lasso criterion
  \begin{align}
\label{mixreg:glasso}
\max_{\bvartheta} & \left\{\l_{0}(\bvartheta) -n\lambda \|\T\bphi\|_1
\right\}.
\end{align}
Here, in our problem, the matrix $\T$ takes the form $\T = (\I_p \otimes \A)$ for the non-adaptive estimation ($\gamma = 0$) and $\T = \W(\I_p \otimes \A)$ for the adaptive estimation ($\gamma>0$). Let $\T_{\mathcal{S}}$ be a submatrix of $\T$ by taking its rows corresponding to the index set
\[
\mathcal{S} = \{i; (\T\bphi^*)_i \neq 0\}.
\]
Similarly, define $\T_{\mathcal{S}^c}$ where $\mathcal{S}^c$ denotes the complement of $\mathcal{S}$. By the regularity conditions in Section~\ref{app:cond} and following the proof of Theorem 3.1 of \citet{she2010gl}, we can obtain a sufficient condition for the generalized lasso estimator from solving \eqref{mixreg:glasso} to achieve sign consistency,
\begin{align}
\|(\T_{\mathcal{S}^c}\I^{-1}(\bvartheta^*)\T_{\mathcal{S}^c}\trans)^{-}(\T_{\mathcal{S}^c}\I^{-1}(\bvartheta)\T_{\mathcal{S}}\trans)\|_{\infty} <1, \label{eq:irr}
\end{align}
where $\I(\bvartheta^*)$ is the Fisher information matrix, and $(\cdot)^-$ denote the Moore-Penrose inverse of the enclosed matrix.  This ``irrepresentable'' condition \citep{zhao2006} is rather strong and is hard to verity for the non-adaptive $\ell_1$ estimator. However, for an adaptive estimator with proper weights and tuning, the condition in \eqref{eq:irr} can be satisfied. Following the proof of Theorem 3.2 of \citet{she2010gl}, it boils down to verify that our choices of $\W$ and $\lambda$ satisfies that
\begin{align}
\frac{1}{\sqrt{n}\lambda}w_i^{-1} = o_p(1), \sqrt{n}\lambda w_{i'} = o_p(1),\label{eq:she}
\end{align}
for any $i \in\mathcal{S}^c$ and $i' \in \mathcal{S}$. Indeed, when
the construction of the weights in \eqref{eq:penalty} of the main paper is based on a $\sqrt{n}$-consistent initial estimator, it holds true that
\[
w_i^{-1} = O_p(n^{-\gamma/2}), w_{i'} = O_p(1).
\]
Given that $\sqrt{n}\lambda \rightarrow 0$, $n^{(\gamma+1)/2}\lambda \rightarrow \infty$ as $n\rightarrow \infty$, the results in \eqref{eq:she} follow immediately. This completes the proof.

\end{proof}

\clearpage
\section{Boundedness of the Penalized Log-likelihood Criteria of the Main Paper}\label{app:boundedness}

It is well known that maximum likelihood estimation in finite normal mixture model may suffer from the unbounded likelihood problem. However, in our proposed estimation criterion, the generalized lasso penalty term is a function of the re-scaled parameters $\bphi_j = \bb_j / \sigma_j = \bbeta_0 + \bbeta_j$ ($j=1,\ldots,m$), and hence small variances are penalized and discouraged. Following \citet{stadler2010}, we can readily show that our proposed estimation criteria avoids the unbounded likelihood problem. For the sake of completeness, we give the key results in the following proposition.

\begin{proposition}
  Assume that $y_i \ne 0$ for all $i = 1,\ldots,n$, and assume the weights $w_{jk} = |\widehat{\beta}^0_{j,k}|^{-\gamma}$ satisfies $0<w_{jk}<\infty$. Then the penalized log-likelihood criterion $\l^{\gamma}_\lambda(\bvartheta)$ defined in \eqref{mixreg:commplog-varsel2} of the main paper is bounded from above for all values $\bvartheta \in \Omega$ and $\lambda>0$. 
\end{proposition}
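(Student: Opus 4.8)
The plan is to exhibit a finite upper bound for $\ell^{\gamma}_{\lambda}(\bvartheta)$ that does not depend on $\bvartheta \in \Omega$. The classical unbounded-likelihood pathology arises by sending some $\rho_j = \sigma_j^{-1} \to \infty$ while steering $\bphi_j$ so that one residual $\rho_j y_i - \x_i\trans\bphi_j$ stays bounded, which makes the prefactor $\rho_j/\sqrt{2\pi}$ explode. The crux is that, since $y_i \neq 0$, keeping that residual bounded while $\rho_j \to \infty$ forces $\|\bphi_j\| \to \infty$; and because the generalized lasso penalty $n\lambda\|\W(\I_p\otimes\A)\bphi\|_1$ is written in the re-scaled coordinates $\bphi$ (equivalently, in $\bbeta$), it then grows at least linearly in $\max_j\|\bphi_j\|$ and overwhelms the merely logarithmic gain in the log-likelihood. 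Concretely I would (i) bound each Gaussian log-density contribution in terms of $\|\bphi_j\|$, (ii) lower-bound the penalty by $\max_j\|\bphi_j\|$, and (iii) optimize the resulting scalar inequality over the penalty magnitude.

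For (i), fix $i$. Since the $\pi_j$ form a convex combination, $\log f(y_i\mid\x_i,\bvartheta) \le \max_{1\le j\le m}\bigl[\log\rho_j - \frac{1}{2}(\rho_j y_i - \x_i\trans\bphi_j)^2\bigr] - \frac{1}{2}\log(2\pi)$. For a single $(i,j)$ term, the elementary identity $(a-b)^2 - (|a|-|b|)^2 = 2(|ab| - ab) \ge 0$ gives $(\rho_j y_i - \x_i\trans\bphi_j)^2 \ge (\rho_j|y_i| - |\x_i\trans\bphi_j|)^2$; substituting $v = \rho_j|y_i| > 0$ and using $\rho_j \le v/y_{\min}$ with $y_{\min} \deq \min_i|y_i| > 0$ reduces the estimate to the one-dimensional bound
\begin{align*}
\log\rho_j - \frac{1}{2}(\rho_j y_i - \x_i\trans\bphi_j)^2 \;\le\; \log(1/y_{\min}) + \max_{v>0}\Bigl\{\log v - \frac{1}{2}\bigl(v - |\x_i\trans\bphi_j|\bigr)^2\Bigr\}.
\end{align*}
The maximizer satisfies $v = |\x_i\trans\bphi_j| + 1/v$, so $v \le |\x_i\trans\bphi_j| + 1$ and the inner maximum is at most $\log(1 + |\x_i\trans\bphi_j|) \le \log(1 + X_{\max}\|\bphi_j\|)$ with $X_{\max} \deq \max_i\|\x_i\|$. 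Summing over $i$ gives $\sum_{i=1}^n\log f(y_i\mid\x_i,\bvartheta) \le n\sum_{j=1}^m \log(1 + X_{\max}\|\bphi_j\|) + nC_0$ for the fixed constant $C_0 = \log(1/y_{\min}) - \frac{1}{2}\log(2\pi)$.

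For (ii), from $\bbeta = (\I_p\otimes\A)\bphi$ and $\bphi_j = \bbeta_0 + \bbeta_j$ we get $\|\bphi_j\| \le \|\bphi_j\|_1 \le \|\bbeta_0\|_1 + \|\bbeta_j\|_1 \le \|\bbeta\|_1$, so with $w_{\min} \deq \min_{j,k} w_{jk} \in (0,\infty)$, $\|\W(\I_p\otimes\A)\bphi\|_1 \ge w_{\min}\|\bbeta\|_1 \ge w_{\min}\|\bphi_j\|$ for every $j$. Writing $P \deq \|\W(\I_p\otimes\A)\bphi\|_1$ and combining with (i),
\begin{align*}
\ell^{\gamma}_{\lambda}(\bvartheta) \;\le\; nm\,\log\!\bigl(1 + X_{\max}P/w_{\min}\bigr) \;-\; n\lambda P \;+\; nC_0 ;
\end{align*}
for (iii), the right-hand side is continuous in $P \in [0,\infty)$ and, since $\lambda > 0$, tends to $-\infty$ as $P\to\infty$ (a logarithm cannot keep pace with a positive linear term), hence is bounded above by a finite constant depending only on $n,m,\lambda,X_{\max},w_{\min},y_{\min}$, which is the desired uniform bound.

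The one step needing genuine care is the pointwise Gaussian estimate — the sign reduction to $(\rho_j|y_i| - |\x_i\trans\bphi_j|)^2$ and the clean solution of the scalar maximization; the rest is routine bookkeeping in the spirit of \citet{stadler2010}. I would also make explicit that the hypothesis $y_i \neq 0$ for all $i$ is indispensable (an observation with $y_i = 0$ contributes $\log\rho_j - \frac{1}{2}(\x_i\trans\bphi_j)^2$, which is unbounded in $\rho_j$ with no countervailing penalty), as is the fact that the penalty acts on $\bphi$ rather than on the unscaled coefficients $\bb_j$.
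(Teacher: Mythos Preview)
Your proof is correct and fully self-contained, but it takes a different route than the paper. The paper does not redo the Gaussian density estimate from scratch; instead it reduces the generalized-lasso penalty to a plain $\ell_1$ penalty on $\bphi$ via the structural inequality $\|(\I_p\otimes\A)\bphi\|_1 \ge \frac{1}{m}\|\bphi\|_1$ (proved coordinatewise from the explicit form of $\A$), so that $\ell_\lambda(\bvartheta) \le \sum_i \log f(y_i\mid\x_i,\bvartheta) - \frac{n\lambda}{m}\|\bphi\|_1$, and then simply invokes Proposition~2 of \citet{stadler2010}, which already establishes boundedness of the $\ell_1$-penalized criterion; the adaptive case follows by $\|\W(\I_p\otimes\A)\bphi\|_1 \ge w_{\min}\|(\I_p\otimes\A)\bphi\|_1$. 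Your approach is more elementary and transparent---you actually carry out the scalar maximization $\max_{v>0}\{\log v - \tfrac12(v-c)^2\}\le \log(1+c)$ and obtain an explicit bound depending on $n,m,\lambda,X_{\max},w_{\min},y_{\min}$, whereas the paper's argument is shorter but opaque, hiding the analytic work inside the cited reference. Your lower bound on the penalty, $\|\W\bbeta\|_1 \ge w_{\min}\|\bbeta\|_1 \ge w_{\min}\|\bphi_j\|$, is also different in spirit from the paper's $\|(\I_p\otimes\A)\bphi\|_1 \ge \frac{1}{m}\|\bphi\|_1$: yours goes through $\bbeta$ and the triangle inequality, theirs through the block structure of $\A$; both work.
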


\begin{proof}
  \citet{stadler2010} studied the $\ell_1$ penalized mixture model using the re-scaled parameterization and showed the boundedness of the $\ell_1$ penalized log-likelihood criterion; see their Proposition 2 and Appendix C. To apply their results, we need to deal with the generalized lasso penalty form.

Consider the penalized log-likelihood criterion $\l^{\gamma}_\lambda(\bvartheta)$ defined in \eqref{mixreg:commplog-varsel2} of the main paper at $\gamma = 0$, denoted as $\l_\lambda(\bvartheta)$.
We can show that $\|(\I_p \otimes \A)\bphi\|_1 \ge 1/m\|\bphi\|_1$. To see this, {\revcolor{red}
observe that $\|(\I_p \otimes \A)\bphi\|_1 = \sum_{k=1}^p\|\A\widetilde \bphi_k\|_1$. Then based on the definition of $\A$, for $k=1,\ldots,p$,
\begin{align*}
\|\A\widetilde \bphi_k\|_1 
&= \frac{1}{m} \|\1_m\trans \widetilde\bphi_k\|_1 + \|\widetilde\bphi_k - \frac{1}{m}\J_m \widetilde\bphi_k\|_1\\
&\ge \frac{1}{m} \|\1_m\trans \widetilde\bphi_k\|_1 + \frac{1}{m}\|\widetilde\bphi_k - \frac{1}{m}\J_m \widetilde\bphi_k\|_1\\
&\ge \frac{1}{m} \|\1_m\trans \widetilde\bphi_k\|_1 + \frac{1}{m}\|\widetilde\bphi_k\|_1 - \frac{1}{m^2}\|\J_m \widetilde\bphi_k\|_1 \mbox{ (by triangle inequality)}\\
&=\frac{1}{m}\|\widetilde\bphi_k\|_1.
\end{align*}
Then $\|(\I_p \otimes \A)\bphi\|_1 \ge 1/m \sum_{k=1}^p\|\widetilde\bphi_k\|_1 = 1/m\|\bphi\|_1$.} It then follows that
\begin{align*}
l_\lambda(\bvartheta)
&\le \sum_{i=1}^n\log\left\{ f(y_i\mid \bx_i,\bvartheta) \right\}-
\frac{n\lambda}{m}\|\bphi\|_1.
\end{align*}
The right-hand side is exactly in the form of the $\ell_1$ penalized mixture log-likelihood considered in \citet{stadler2010}; its boundedness from above then directly follows. The extension to the adaptive version when $\gamma > 0$ directly follows since $\|\W(\I_p \otimes \A)\bphi\|_1 \ge \min\{w_{jk}\}\|(\I_p \otimes \A)\bphi\|_1$.

\end{proof}

Consequently, the constrained penalized log-likelihood form in  \eqref{mixreg:commplog-varsel1}, in terms of $\btheta=\A\bphi$, is also bounded from above, for any
\begin{align*}
  \btheta \in \left\{ \right.& (\bbeta_0,\bbeta_1,  \ldots,\bbeta_m;\pi_1,\ldots,\pi_m;\rho_1,\ldots,\rho_m) \mid\\
                   & \bbeta_j \in \mathbb{R}^p\mbox{ for }j = 0,\ldots,m; \sum_{j=1}^m\beta_{jk}=0\mbox{ for } k = 1, \ldots, p;\\
                   & \pi_j > 0 \mbox{ for }j = 1,\ldots, m \mbox{ with } \sum_{j=1}^m\pi_j = 1;\\
&  \left.\rho_j >0\mbox{ for } j = 1,\ldots,m.\right\}.
\end{align*}

\clearpage
\section{Details on Computation}\label{app:comp}
\subsection{Generalized Expectation-Maximization Algorithm}\label{app:gem}

We provide the detailed derivations of the proposed computational algorithm. Let $\bz_i = (z_{i1}, \ldots, z_{im})$, $i = 1, \ldots, n$ be the unobserved component labels, i.e., $z_{ij}=1$ if the $i{\mbox{th}}$
observation is from the $j{\mbox{th}}$ component, and $z_{ij}=0$ otherwise. Denote by $\by = (y_1, \ldots, y_n)\trans$ the vector of responses and by $\X = (\bx_{1},\ldots,\bx_{n})\trans$ the matrix of predictors. By \eqref{eq:mixreg3} of the main paper, the complete-data log-likelihood of $\{(y_{i},\bz_{i};\bx_{i}): i=1,\ldots,n\}$ is given by
\begin{align*}
  \ell_c(\btheta) \equiv \sum_{i=1}^n \sum_{j=1}^{m} z_{ij} \log \left\{ \pi_{j}\frac{\rho_j}{\sqrt{2\pi}}\exp\{-\frac{1}{2}(\rho_jy_i-\bx_i\trans\bbeta_0-\bx_i\trans\bbeta_{j})^2\} \right\}.\notag
\end{align*}

The Expectation-Maximization algorithm works by alternating between an E-step and a M-step until convergence is reached, e.g., the relative change in the parameters measured by $\ell_2$ norm is less than some tolerance level.

Denote the
parameter estimate at the $t{\mbox{th}}$ iteration as $\btheta^{(t)} = (\bbeta^{(t)}, \bpi^{(t)}, \brho^{(t)})$. In the E step, we compute the conditional expectation of the complete-data log-likelihood, 
\begin{align*}
  \mathbb{E}_{\btheta^{(t)}}(\ell_c(\btheta)\mid \by,\bX) = \sum_{i=1}^{n}\sum_{j=1}^{m}\widehat{p}_{ij} \log \left\{ \pi_{j} \frac{\rho_j}{\sqrt{2\pi}}\exp\{-\frac{1}{2}(\rho_jy_i-\bx_i\trans\bbeta_0-\bx_i\trans\bbeta_{j})^2\}\right\},
\end{align*}
where 
\begin{align}
\label{pij}
  \widehat{p}_{ij} = \mathbb{E}_{\btheta^{(t)}} (z_{ij} \mid \by, \bX) = \frac{\pi^{(t)}_{j} \rho^{(t)}_j \exp\{-\frac{1}{2}(\rho^{(t)}_jy_i-\bx_i\trans\bbeta^{(t)}_0-\bx_i\trans\bbeta^{(t)}_{j})^2\}}
  {\sum_{j=1}^{m}\pi^{(t)}_{j} \rho^{(t)}_j \exp\{-\frac{1}{2}(\rho^{(t)}_jy_i-\bx_i\trans\bbeta^{(t)}_0-\bx_i\trans\bbeta^{(t)}_{j})^2\}}.
\end{align}

In the M-Step, we deal with the constrained optimization of
\[
Q(\btheta \mid \btheta^{(t)}) \equiv -\mathbb{E}_{\btheta^{(t)}}(\ell_c(\btheta)\mid \by, \bX) + n\lambda\sum_{k=1}^{p}\calP(\widetilde{\bbeta}_{k}),
\]
with respect to $\btheta = \{\bbeta, \bpi, \brho\}$. The optimization in $\bpi$ is separable, i.e., 
  \begin{align*}
   & \bpi^{(t+1)} = \arg\min_{\bpi}\{- \sum_{i=1}^{n}\sum_{j=1}^{m}\widehat{p}_{ij} \log \pi_{j} \},\
   \mbox{subject to} \ \pi_j > 0 \ (j = 1, \ldots, m) \text{ and } \sum_{j=1}^m\pi_{j}=1, 
  \end{align*}
which yields 
  \[
    \pi^{(t+1)}_j = \frac{\sum_{i = 1}^n \widehat{p}_{ij}}{n} \quad (j  = 1, \ldots, m).
  \]
The problem then becomes 
  \begin{align}
    &\min_{\brho, \bbeta}\left\{
      - \sum_{i=1}^{n}\sum_{j=1}^{m}\widehat{p}_{ij} \log \rho_j +
      \frac{1}{2} \sum_{i=1}^{n}\sum_{j=1}^{m}\widehat{p}_{ij}(\rho_jy_i-\bx_i\trans\bbeta_0-\bx_i\trans\bbeta_{j})^2 +
      n\lambda\sum_{k=1}^{p}\calP(\widetilde{\bbeta}_{k})\right\}, \notag\\
    & \mbox{ subject to } \sum_{j=1}^m\beta_{jk}=0 \ (k = 1, \ldots, p) \text{ and } \rho_j > 0 \ (j = 1,  \ldots, m). \label{eq:obj_mstep}
  \end{align}
  We proceed by performing coordinate descent updates with two blocks of parameters $\brho$ and $\B{B} = (\bbeta_0,\bbeta_1, \ldots,\bbeta_m)$. Without loss of generality, in the following we first update $\brho$ and then update $\B{B}$, since the updating order does not affect the algorithmic convergence. Here it is not necessary to fully solve \eqref{eq:obj_mstep}; it suffices to have only one cycle of updates to lower the objective value of \eqref{eq:obj_mstep}, which is why the algorithm as a whole becomes a generalized Expectation-Maximization algorithm.

With $\B{B}$ held fixed at $\B{B}^{(t)}$, minimizing \eqref{eq:obj_mstep} leads to a closed-form update for $\brho$ as 
  \[
    \rho^{(t+1)}_j =
    \frac{\langle \tilde \by, \tilde \bX (\bbeta^{(t)}_0 + \bbeta^{(t)}_j) \rangle +
    \sqrt{\langle \tilde \by, \tilde \bX (\bbeta^{(t)}_0 + \bbeta^{(t)}_j) \rangle^2 + 4 \|\tilde \by\|^2 n_j}}
      {2\|\tilde \by\|^2} \quad (j  = 1, \ldots, m).
  \]
  where $\langle \cdot, \cdot \rangle$ denotes the inner product in Euclidean space, and $(\tilde y_i, \tilde \bx_i) = \sqrt{\widehat{p}_{ij}}(y_i, \bx_i)$ for $i = 1, \ldots, n$, 
  and $n_j = \sum_{i = 1}^n \widehat{p}_{ij}$ for $j = 1, \ldots, m$. 
  When the assumption of equal variance holds, i.e., $\rho_1 = \cdots = \rho_m$, the update becomes
\[
  \brho^{(t+1)} = \frac{\langle \by, \tilde \bmu^{(t)} \rangle +
    \sqrt{\langle \by, \tilde \bmu^{(t)} \rangle^2 + 4 \|\by\|^2 n}}
      {2\|\by\|^2} \times \1_m,
\]
where $\tilde \bmu^{(t)} = (\sum_{j=1}^m\widehat p_{1j}^{(t)} \bx_1\trans(\bbeta^{(t)}_0 + \bbeta^{(t)}_j), \ldots, \sum_{j=1}^m\widehat p_{nj}^{(t)} \bx_n\trans(\bbeta^{(t)}_0 + \bbeta^{(t)}_j))\trans$ 
and $\1_m$ is the one vector of dimension $m$.

With $\brho$ updated, it remains to solve a $\ell_1$ penalized weighted least squares with linear constraints with respect to $\B{B}$,
\begin{align}
  \min\left\{f(\bbeta_0, \bbeta_1, \ldots, \bbeta_m; \brho, \lambda)\right\}, \
  \mbox{ subject to } \sum_{j=1}^m\bbeta_{j}=\0 \label{eq:mstep-combeta}
\end{align}
where
\[
  f(\bbeta_0, \bbeta_1, \ldots, \bbeta_m; \brho, \lambda) \equiv
  \frac{1}{2} \sum_{i=1}^{n} \sum_{j=1}^{m} \widehat{p}_{ij}(\rho_jy_i-\bx_i\trans\bbeta_0-\bx_i\trans\bbeta_{j})^2 +
  n\lambda\sum_{k=1}^{p}\calP(\widetilde{\bbeta}_{k}),
\]
in which the term $n\lambda\sum_{k=1}^{p}\calP(\widetilde{\bbeta}_{k})$ is re-expressed as  $n\lambda\sum_{j=0}^{m}\calP(\bbeta_j)$ with the proposed adaptive lasso penalty form in \eqref{eq:penalty} of the main paper.

We propose a Bregman coordinate descent algorithm \citep{Goldstein2009} to solve \eqref{eq:mstep-combeta}, which performs the following updates iteratively until convergence. At the $[s+1]{\mbox{th}}$ iteration of the Bregman coordinate descent algorithm, $\B{B}$ is updated by solving an unconstrained $\ell_1$ penalized least squares
\begin{align}
  \B{B}^{[s+1]} = \arg\min_{\B{B}}\left\{
  f(\bbeta_0, \bbeta_1, \ldots, \bbeta_m;\brho^{(t+1)},
  \lambda) + 
  \frac{\mu}{2} \| \sum_{j=1}^m \bbeta_j - \bu^{[s]}\|^2
  \right\}, \label{eq:bcda}
\end{align}
using coordinate descent. Here the augmented term in the objective function is to penalize the violation of the linear constraints, in which $\bu^{[s]}$ is the discrepancy measure and $\mu > 0$ is the Bregman parameter. The $\bu^{[s]}$ is then updated as 
\[
\bu^{[s+1]} = \bu^{[s]} - \sum_{j=1}^m \bbeta_j^{[s + 1]}.
\]
The choice of $\mu > 0$ does not impact on the convergence, and it may be
increased along the iterations to improve the speed. Nonetheless, this is not essential and we simply fix $\mu = 1$ in all our numerical studies. The detailed BCDA algorithm is given in Algorithm~\ref{alg:beta}, with the derivations provided in Section~\ref{app:derive}.

\subsection{Derivations of the Bregman Coordinate Descent Algorithm}
\label{app:derive}

We provide the derivations of the coordinate-wise updating formulas in Algorithm \ref{alg:beta}.

By \eqref{eq:bcda}, at the $[s + 1]{\mbox{th}}$ iteration of the algorithm, the $k{\mbox{th}}$ entry of the common-effects vector $\bbeta_0$, denoted as $\beta_{0k}$, for $k = 1, \ldots, p$, is updated by minimizing
\begin{align*}
  & h_{0k}(\beta_{0k};\bbeta_{0, - k}, \bbeta_1, \ldots, \bbeta_m, \brho, \lambda) \\
  & = \frac{1}{2} \sum_{i=1}^{n}\sum_{j=1}^{m}\hat{p}_{ij}(\rho_jy_i-\bx_i\trans\bbeta_{j} - \bx_i\trans\bbeta_0)^2 +
    n \lambda w_{0k} |\beta_{0k}| \\
  & = \frac{1}{2} \sum_{i=1}^{n}\sum_{j=1}^{m}\hat{p}_{ij}(\rho_jy_i-\bx_i\trans\bbeta_{j} -
    \bx_{i,-k}\trans\bbeta_{0, - k} - x_{ik} \beta_{0k})^2
    + n \lambda w_{0k} |\beta_{0k}| \\
  & = \frac{1}{2}\left\{  \beta_{0k}^2 \sum_{i=1}^{n} x_{ik}^2
    - 2 \beta_{0k} [\sum_{i=1}^{n} \sum_{j=1}^{m}\hat{p}_{ij} x_{ik}(\rho_jy_i-\bx_i\trans\bbeta_{j} -
    \bx_{i,-k}\trans\bbeta_{0, - k})]
    \right\}
    + n \lambda w_{0k} |\beta_{0k}| + C,
\end{align*}
where $\bx_{i,-k}$ stands for the vector of all except $k{\mbox{th}}$ element in $\bx_i$ and the same applies to $\bbeta_{0, - k}$, and $C$ denotes a constant unrelated to $\beta_{0k}$. The minimizer is given by
\[
  \beta^{[s+1]}_{0k} = \frac{\calT(\sum_{i=1}^n \sum_{j=1}^m\hat{p}_{ij} x_{ik} (\rho_jy_i - \bx_i\trans\bbeta_{j} - \bx_{i, -k} \trans\bbeta_{0, -k}), n \lambda w_{0k})}
  {\sum_{i=1}^n x_{ik}^2},
\]
where $\calT(x, t) = \mbox{sgn}(x)(|x| - t)_{+}$ is the soft-thresholding operator.

The $k{\mbox{th}}$ element of the $j{\mbox{th}}$ cluster-effects term $\bbeta_j$, denoted as $\beta_{jk}$, for $j = 1, \ldots, m$ and $k = 1, \ldots, p$, is updated by minimizing
\begin{align*}
  & h_{jk}(\beta_{jk}; \bu, \bbeta_0, \bbeta_1, \ldots, \bbeta_{j, - k}, \ldots, \bbeta_m, \brho, \lambda) \\
  & = \frac{1}{2} \sum_{i=1}^{n} \hat{p}_{ij}(\rho_jy_i-\bx_i\trans\bbeta_0-\bx_i\trans\bbeta_{j})^2 +
    n\lambda w_{jk} |\beta_{jk}|
    + \frac{\mu}{2} (\sum_{q = 1}^m \beta_{qk} - u_k)^2 \\
  & = \frac{1}{2} \sum_{i=1}^{n} \hat{p}_{ij}(\rho_jy_i- \bx_i\trans\bbeta_0 -
    \bx_{i, -k}\trans\bbeta_{j, -k} - x_{ik} \beta_{jk})^2 +
    n\lambda w_{jk} |\beta_{jk}|
    + \frac{\mu}{2} (\beta_{jk} + \sum_{q = 1, q \ne j}^m \beta_{qk} - u_k)^2 \\
  & = \frac{1}{2} \{\beta^2_{jk} (\sum_{i=1}^{n} \hat{p}_{ij} x^2_{ik} +
    \mu)  - 2 \beta_{jk} [\sum_{i=1}^{n} \hat{p}_{ij} x_{ik} (\rho_jy_i- \bx_i\trans\bbeta_0 -
    \bx_{i, -k}\trans\bbeta_{j, -k}) - \mu (\sum_{q = 1, q \ne j}^m \beta_{qk} - u_k)]\} \\
  & + n\lambda w_{jk} |\beta_{jk}| + C,
\end{align*}
where $C$ denotes a constant unrelated to $\beta_{jk}$. The minimizer is given by
\[
  \beta^{[s+1]}_{jk} = \frac{\calT(
    \sum_{i=1}^{n} \hat{p}_{ij} x_{ik} (\rho_jy_i- \bx_i\trans\bbeta_0 -
    \bx_{i, -k}\trans\bbeta_{j, -k}) - \mu (\sum_{q = 1, q \ne j}^m
    \beta_{qk} - u_k), n \lambda w_{jk})}
  {\sum_{i=1}^{n} \hat{p}_{ij} x^2_{ik} + \mu}.
\]

\subsection{Tuning}\label{app:tuning}

The tuning parameters to be selected include the number of component $m$ and the penalty parameter $\lambda$. The adaptive estimation of the model is not really sensitive to the choice of $\gamma > 0$ in constructing the weights, so we have fixed $\gamma = 1$, which leads to satisfactory performance in our numerical studies.

The technique of cross validation certainly can be applied here, but it is computationally intensive and tends to over-select non-zero components. We propose to minimize a {\red Bayesian information criterion (BIC)},
\[
  \mbox{BIC}(m,\lambda) = - 2 \ell(\widehat \btheta_{m, \lambda}) + \log(n) \mbox{df}(m,\lambda),
\]
where $\ell(\widehat \btheta_{m, \lambda})$ is the log-likelihood evaluated at the estimated parameters, and $\mbox{df}(m,\lambda)$ denotes the degrees of freedom of the estimated model. Then, following the work by \citet{tibshirani2011gl} on the degrees of freedom of generalized lasso, we estimate $\mbox{df}(m,\lambda)$ for our problem as
\[
\widehat{df}(m, \lambda) =  2m - 1 + \mbox{dim}(\mbox{null}(\bD_{-\calA})),
\]
where $\bD = \I_p \otimes \A$ is the specified penalty matrix from generalized lasso criterion \eqref{mixreg:commplog-varsel2} of the main paper, $\calA = \{i: \{\bD\widehat\bphi_{m, \lambda}\}_i \ne 0\}$ is the index set of the non-zero entries of vector $\bD\widehat\bphi_{m, \lambda}$, $\widehat\bphi_{m, \lambda}$ is the estimated re-scaled coefficient vector $\bphi$ defined in \eqref{eq:mixreg2} of the main paper, $\bD_{-\calA}$ is the sub-matrix of $\bD$ by excluding the rows of indices in $\cal{A}$, and $\mbox{dim}(\mbox{null}(\bD_{-\calA}))$ denotes the dimension of the null space of $\bD_{-\calA}$. {\revcolor{red}With the special structure of the penalty matrix in our proposed criterion, it turns out that $\mbox{dim}(\mbox{null}(\bD_{-\calA}))$ equals to the number of nonzero entries in $\B{B}$ minus the number of predictors with heterogeneous effects. This simplified interpretation can be understood as follows. For a predictor with common effect, it only contributes one to the degrees of freedom, which is the same as the number of non-zero entries in its corresponding row in $\B{B}$. For a predictor with heteogeneous effects, its contribution to the degrees of freedom should equals to the number of non-zero entries in its corresponding row in $\B{B}$ minus 1, where the discount is due to the linear constraint that the effect terms sum up to zero.}

{\revcolor{red}
We use grid search to select $m$ and $\lambda$ by minimizing BIC. Denote by $\X_k$ the $k$th column of $\X$. Following \citet{stadler2010}, the candidate sequence of the number of components usually ranges from one to a sufficiently large number; the candidate values for $\lambda$ are generated as a sequence between 0 and $\lambda_{\mbox{max}}$ equally spaced on the log-scale, where
\[
    \lambda_{\mbox{max}} = \max_{k=1,\ldots,p}\left|\frac{\langle \by, \X_k \rangle}{\sqrt{n}\|\by\|}\right|.
\]
This ensures the coverage of a spectrum of candidate models with different sparsity levels.}

{
\vspace{6pt}
\linespread{1.4}
\begin{algorithm}
  \caption{Bregman Coordinate Descent Algorithm for solving \eqref{eq:mstep-combeta}}
  \label{alg:beta}
  \begin{algorithmic}
    \STATE Initialization: $\bbeta_j^{[0]} = \bbeta_j^{(t)}$, $j = 0, 1,\ldots, m$.  %$\ba = (1, \ldots, 1)\trans_{m\times1}$,
    $\bu^{[s]} = (0, \ldots, 0)\trans_{p\times1}$. Choose $\mu > 0$.\\
    Set $s\gets0$.  
    \REPEAT
    \STATE
    $\bbeta_{j}^{[s+1]} \leftarrow \bbeta_{j}^{[s]}$, $j = 0, 1,\ldots, m$.
    \FOR{$j=0, 1,\ldots,m$}
    \FOR{$k = 1,\dots,p$}
    %\STATE
    \IF{$j = 0$}
    \STATE
    {\fontsize{10pt}{10pt}
    \begin{align*}
      \beta_{jk}^{[s+1]} \gets
                           \frac{\calT(\sum_{i=1}^n \sum_{q=1}^m\widehat{p}_{iq} x_{ik} (\rho^{(t+1)}_qy_i - \bx_i\trans\bbeta^{[s+1]}_{q} - \bx_{i, -k} \trans\bbeta_{j, -k}^{[s+1]}), n \lambda w_{jk})}
                           {\sum_{i=1}^n x_{ik}^2}
    \end{align*}
    }
    \ELSE
    \STATE
    {\fontsize{10pt}{10pt}
    \begin{align*}
    \beta_{jk}^{[s+1]} \gets
        \frac{\calT(
          \sum_{i=1}^{n} \widehat{p}_{ij} x_{ik} (\rho^{(t + 1)}_jy_i- \bx_i\trans\bbeta^{(t + 1)}_0 -      
          \bx_{i, -k}\trans\bbeta_{j, -k}^{[s+1]}) - \mu (\sum_{q = 1, q \ne j}^m
          \beta_{qk} - u_k^{[s]}), n \lambda w_{jk})}
        {\sum_{i=1}^{n} \widehat{p}_{ij} x^2_{ik} + \mu },
    \end{align*}
    }
    \ENDIF
    \STATE \textbf{end if}
    \STATE
    where $\calT(x, t) = \mbox{sgn}(x)(|x| - t)_{+}$ is the soft-thresholding operator, and $\bx_{i, -k}$ stands for vector of all elements except $k{\mbox{th}}$ in $\bx_{i}$ and the same applies to $\bbeta_{j, -k}$.    
    \ENDFOR
    \STATE \textbf{end for}
    \ENDFOR
    \STATE \textbf{end for}
    \STATE
    $\bu^{[s+1]} = \bu^{[s]} - \sum_{j=1}^m \bbeta_j^{[s + 1]}$.\\
    $s \leftarrow s+1$.
    \UNTIL 
    $\|\bbeta^{[s]}-\bbeta^{[s-1]}\|/\|\bbeta^{[s-1]}\| \leq 10^{-6}$.
    \RETURN $\bbeta_{j}^{(t+1)} = \bbeta_{j}^{[s]}$, $j=0, 1,\ldots,m$.
  \end{algorithmic}
\end{algorithm}
}

\clearpage
\section{Additional Simulation Results}\label{app:sim}

\subsection{Results for $p \in \{30, 120\}$}
We present the detailed simulation results of the simulation setups described in the main paper for $p\in\{30, 120\}$ in Table~\ref{tab:n=30} and~\ref{tab:n=120}, respectively.

{
\def\baselinestretch{1}\normalsize
\begin{table}[h]
  \centering
  \caption{ Comparison of mean squared error of estimation, variable
    selection and heterogeneity pursuit performance of four methods,
    Mix-L, Mix-AL, Mix-HP-L and Mix-HP-L, under settings of the main paper
    with {\revcolor{red}$n = 200$, $p = 30$, and $\bSigma = \bI_p$}. The layout of the table is the same as in
    Table~\ref{tab:n=60} of the main paper. The MSE values are scaled by multiplying 100, and the FPR, FHR, TPR values are reported in percentage.} \label{tab:n=30}
  \begin{tabular}{llrrrrrr}
    \toprule
    & & \multicolumn{3}{c}{MSE} & \multicolumn{3}{c}{RATE}\\
    \cmidrule(lr){3-5} \cmidrule(lr){6-8}
    $\mbox{SNR}$ & Method & $\bb$ & $\bsigma^2$ & $\bpi$ & FPR & FHR & TPR \\
    \midrule
    200 & Mix-L & 0.10 (0.05) & 8.54 (3.20) & 0.18 (0.16) & 58.8 & 100.0 & 100.0 \\
    & Mix-AL & 0.02 (0.01) & 0.04 (0.04) & 0.12 (0.10) & 12.4 & 100.0 & 100.0 \\
    & Mix-HP-L & 0.02 (0.01) & 2.16 (1.09) & 0.14 (0.12) & 66.3 & 25.9 & 100.0 \\
    & Mix-HP-AL & 0.01 (0.00) & 0.05 (0.04) & 0.12 (0.10) & 7.5 & 1.4 & 100.0 \\
    \midrule
    100 & Mix-L & 0.20 (0.10) & 10.31 (3.97) & 0.18 (0.15) & 57.8 & 100.0 & 100.0 \\
    & Mix-AL & 0.04 (0.01) & 0.10 (0.10) & 0.11 (0.09) & 18.3 & 100.0 & 100.0 \\
    & Mix-HP-L & 0.05 (0.02) & 3.58 (1.79) & 0.13 (0.11) & 58.6 & 18.7 & 100.0 \\
    & Mix-HP-AL & 0.02 (0.01) & 0.05 (0.04) & 0.11 (0.10) & 8.1 & 1.1 & 100.0 \\
    \midrule
    50 & Mix-L & 0.51 (0.29) & 13.79 (6.28) & 0.24 (0.20) & 57.0 & 100.0 & 100.0 \\
    & Mix-AL & 0.11 (0.04) & 0.18 (0.15) & 0.13 (0.11) & 25.1 & 100.0 & 100.0 \\
    & Mix-HP-L & 0.12 (0.04) & 5.00 (2.64) & 0.16 (0.13) & 54.7 & 16.1 & 100.0 \\
    & Mix-HP-AL & 0.04 (0.02) & 0.07 (0.07) & 0.14 (0.12) & 7.6 & 1.1 & 100.0 \\
    \midrule
    25 & Mix-L & 1.13 (0.62) & 17.20 (8.87) & 0.39 (0.35) & 51.6 & 100.0 & 100.0 \\
    & Mix-AL & 0.37 (0.17) & 0.48 (0.40) & 0.15 (0.13) & 33.7 & 100.0 & 100.0 \\
    & Mix-HP-L & 0.26 (0.09) & 7.80 (4.13) & 0.20 (0.17) & 49.7 & 12.0 & 100.0 \\
    & Mix-HP-AL & 0.08 (0.04) & 0.15 (0.15) & 0.16 (0.13) & 7.3 & 1.1 & 100.0 \\
    \midrule
    12.5 & Mix-L & 7.18 (5.97) & 45.62 (43.40) & 4.78 (6.10) & 32.2 & 100.0 & 100.0 \\
    & Mix-AL & 5.02 (5.63) & 10.64 (17.96) & 1.51 (2.47) & 25.5 & 100.0 & 100.0 \\
    & Mix-HP-L & 0.61 (0.24) & 10.30 (6.06) & 0.24 (0.20) & 50.5 & 10.2 & 100.0 \\
    & Mix-HP-AL & 0.20 (0.10) & 0.31 (0.31) & 0.17 (0.14) & 9.3 & 1.7 & 100.0 \\
    \bottomrule
  \end{tabular}
\end{table}

\begin{table}[h]
  \centering
  \caption{Comparison of mean squared error of estimation, variable
    selection and heterogeneity pursuit performance of four methods,
    Mix-L, Mix-AL, Mix-HP-L and Mix-HP-L, under settings of the main paper
    with {\revcolor{red}$n = 200$, $p = 120$, and $\bSigma = \bI_p$}. The layout of the table is the same as in
    Table~\ref{tab:n=60} of main paper. The MSE values are scaled by multiplying 100, and the FPR, FHR, TPR values are reported in percentage.} \label{tab:n=120}
\begin{tabular}{llrrrrrr}
  \toprule
  & & \multicolumn{3}{c}{MSE} & \multicolumn{3}{c}{RATE}\\
  \cmidrule(lr){3-5} \cmidrule(lr){6-8}
  $\mbox{SNR}$ & Method & $\bb$ & $\bsigma^2$ & $\bpi$ & FPR & FHR & TPR \\
  \midrule
  200 & Mix-L & 0.09 (0.09) & 12.67 (9.65) & 0.23 (0.24) & 32.4 & 100.0 & 100.0 \\
  & Mix-AL & 0.02 (0.05) & 0.24 (0.56) & 0.15 (0.17) & 6.3 & 100.0 & 100.0 \\
  & Mix-HP-L & 0.02 (0.01) & 8.37 (2.87) & 0.13 (0.11) & 24.9 & 2.7 & 100.0 \\
  & Mix-HP-AL & 0.00 (0.00) & 0.06 (0.04) & 0.10 (0.08) & 1.8 & 0.1 & 100.0 \\
  \midrule
  100 & Mix-L & 2.87 (1.70) & 161.80 (118.47) & 6.20 (6.55) & 9.5 & 100.0 & 40.0 \\
  & Mix-AL & 2.65 (1.72) & 146.34 (126.12) & 3.75 (4.47) & 3.5 & 100.0 & 40.0 \\
  & Mix-HP-L & 0.04 (0.01) & 11.04 (3.98) & 0.15 (0.11) & 25.0 & 2.7 & 100.0 \\
  & Mix-HP-AL & 0.00 (0.00) & 0.07 (0.07) & 0.11 (0.09) & 2.0 & 0.0 & 100.0 \\
  \midrule
  50 & Mix-L & 3.91 (0.28) & 191.72 (88.52) & 8.18 (6.65) & 1.2 & 100.0 & 20.0 \\
  & Mix-AL & 3.61 (0.68) & 173.28 (93.38) & 5.40 (5.47) & 0.7 & 100.0 & 20.0 \\
  & Mix-HP-L & 0.09 (0.03) & 13.70 (5.23) & 0.18 (0.15) & 27.5 & 3.0 & 100.0 \\
  & Mix-HP-AL & 0.01 (0.00) & 0.14 (0.15) & 0.12 (0.10) & 2.5 & 0.1 & 100.0 \\
  \midrule
  25 & Mix-L & 3.96 (0.26) & 170.60 (82.39) & 7.99 (6.36) & 0.6 & 100.0 & 10.0 \\
  & Mix-AL & 3.74 (0.58) & 160.99 (90.05) & 5.28 (4.98) & 0.5 & 100.0 & 10.0 \\
  & Mix-HP-L & 1.37 (1.75) & 74.93 (96.44) & 0.40 (0.47) & 21.5 & 2.5 & 100.0 \\
  & Mix-HP-AL & 1.24 (1.79) & 55.32 (88.47) & 0.28 (0.34) & 2.3 & 0.1 & 100.0 \\
  \midrule
  12.5 & Mix-L & 4.00 (0.22) & 155.17 (81.24) & 7.13 (6.62) & 0.4 & 100.0 & 10.0 \\
  & Mix-AL & 3.90 (0.42) & 146.84 (85.65) & 4.44 (4.64) & 0.3 & 100.0 & 10.0 \\
  & Mix-HP-L & 3.93 (0.26) & 184.75 (66.99) & 2.41 (2.26) & 2.2 & 0.3 & 20.0 \\
  & Mix-HP-AL & 3.87 (0.33) & 156.13 (71.01) & 2.24 (2.14) & 0.7 & 0.0 & 20.0 \\
  \bottomrule
\end{tabular}
\end{table}
}

\clearpage
\subsection{Correlation Structure Among Predictors}\label{app:add_sim_cor}

We also experiment with settings where the predictors are generated from multivariate normal 
distribution with correlation structure to make the simulation setting more realistic. Here we 
consider the same correlation structure of \citet{khalili2012variable}. {\revcolor{red}
That is, the data on the predictors, $\x_i \in \mathbb{R}^p$ for $ i = 1, \ldots, n$, are generated independently from multivariate normal distribution with mean $\mathbf{0}$ and covariance matrix $\bSigma$, where the $(i,j)$'s entry of $\bSigma$ is given by $\sigma_{ij} = 0.5^{|i-j|}$.} The other settings remain the same as the simulation settings of the main paper. 
The simulation results are presented in Table~\ref{tab:app_2_n=30}, \ref{tab:app_2_n=60} and \ref{tab:app_2_n=120}.

{
\def\baselinestretch{1}\normalsize
\begin{table}[htp]
  \centering
  \caption{Comparison of mean squared error of estimation, variable
    selection and heterogeneity pursuit performance of four methods,
    Mix-L, Mix-AL, Mix-HP-L and Mix-HP-L, under settings where correlations present among 
    predictors with
    {\revcolor{red}$n = 200$, $p = 30$, and $\bSigma = (\sigma_{ij})_{p\times p}$ where $\sigma_{ij} = 0.5^{|i-j|}$}. The layout of the table is the same as in
    Table~\ref{tab:n=60} of the main paper. The MSE values are scaled by multiplying 100, and the FPR, FHR, TPR values are reported in percentage.} \label{tab:app_2_n=30}
  \begin{tabular}{llrrrrrr}
    \toprule
    & & \multicolumn{3}{c}{MSE} & \multicolumn{2}{c}{RATE}\\
    \cmidrule(lr){3-5} \cmidrule(lr){6-7}
    $\mbox{SNR}$ & Method & $\bb$ & $\bsigma^2$ & $\bpi$ & FPR & TPR \\
    \midrule
    250 & Mix-L & 1.53 (1.73) & 10.47 (5.75) & 2.79 (1.91) & 61.2 & 100.0 & 100.0 \\ 
    & Mix-AL & 0.03 (0.01) & 0.06 (0.05) & 1.42 (0.56) & 16.2 & 100.0 & 100.0 \\ 
    & Mix-HP-L & 0.03 (0.01) & 1.19 (0.69) & 1.45 (0.58) & 68.0 & 28.5 & 100.0 \\ 
    & Mix-HP-AL & 0.01 (0.00) & 0.06 (0.04) & 1.45 (0.57) & 6.1 & 1.0 & 100.0 \\ 
    \midrule
    125 & Mix-L & 2.32 (1.80) & 12.57 (6.58) & 3.49 (2.04) & 60.3 & 100.0 & 100.0 \\ 
    & Mix-AL & 0.07 (0.03) & 0.10 (0.08) & 1.39 (0.57) & 24.4 & 100.0 & 100.0 \\ 
    & Mix-HP-L & 0.06 (0.02) & 1.82 (1.15) & 1.47 (0.62) & 64.0 & 26.8 & 100.0 \\ 
    & Mix-HP-AL & 0.02 (0.01) & 0.07 (0.06) & 1.46 (0.59) & 7.1 & 1.1 & 100.0 \\ 
    \midrule
    62.5 & Mix-L & 3.42 (1.54) & 16.44 (9.72) & 4.59 (1.81) & 54.8 & 100.0 & 100.0 \\ 
    & Mix-AL & 1.38 (1.78) & 1.77 (3.25) & 2.45 (1.93) & 27.4 & 100.0 & 100.0 \\ 
    & Mix-HP-L & 0.14 (0.05) & 2.76 (1.82) & 1.55 (0.67) & 59.5 & 23.4 & 100.0 \\ 
    & Mix-HP-AL & 0.04 (0.02) & 0.09 (0.10) & 1.49 (0.63) & 7.1 & 1.1 & 100.0 \\
    \midrule
    31.3 & Mix-L & 4.52 (0.41) & 18.65 (11.00) & 5.65 (0.16) & 47.9 & 100.0 & 100.0 \\ 
    & Mix-AL & 4.37 (0.63) & 6.88 (5.39) & 5.40 (0.57) & 21.2 & 100.0 & 100.0 \\ 
    & Mix-HP-L & 0.37 (0.14) & 3.81 (2.54) & 1.59 (0.70) & 57.4 & 20.7 & 100.0 \\ 
    & Mix-HP-AL & 0.11 (0.05) & 0.19 (0.19) & 1.49 (0.65) & 7.2 & 1.2 & 100.0 \\ 
    \midrule
    15.6 & Mix-L & 4.94 (0.42) & 20.33 (10.65) & 5.68 (0.20) & 39.0 & 100.0 & 90.0 \\ 
    & Mix-AL & 4.83 (0.48) & 10.70 (7.15) & 5.50 (0.35) & 15.6 & 100.0 & 90.0 \\ 
    & Mix-HP-L & 0.77 (0.27) & 4.36 (3.22) & 1.75 (0.81) & 56.6 & 19.4 & 100.0 \\ 
    & Mix-HP-AL & 0.28 (0.14) & 0.42 (0.42) & 1.51 (0.70) & 9.7 & 2.0 & 100.0 \\ 
    \bottomrule
  \end{tabular}
\end{table}

\begin{table}[htp]
  \centering
  \caption{Comparison of mean squared error of estimation, variable
    selection and heterogeneity pursuit performance of four methods,
    Mix-L, Mix-AL, Mix-HP-L and Mix-HP-L, under settings where correlations present among 
    predictors with
    {\revcolor{red}$n = 200$, $p = 60$, and $\bSigma = (\sigma_{ij})_{p\times p}$ where $\sigma_{ij} = 0.5^{|i-j|}$}. The layout of the table is the same as in
    Table~\ref{tab:n=60} of the main paper. The MSE values are scaled by multiplying 100, and the FPR, FHR, TPR values are reported in percentage.} \label{tab:app_2_n=60}
  \begin{tabular}{llrrrrrr}
    \toprule
    & & \multicolumn{3}{c}{MSE} & \multicolumn{2}{c}{RATE}\\
    \cmidrule(lr){3-5} \cmidrule(lr){6-7}
    $\mbox{SNR}$ & Method & $\bb$ & $\bsigma^2$ & $\bpi$ & FPR & TPR \\
    \midrule
    250 & Mix-L & 0.73 (0.92) & 8.62 (10.11) & 2.68 (1.98) & 61.4 & 100.0 & 100.0 \\ 
   & Mix-AL & 0.02 (0.02) & 0.19 (0.23) & 1.36 (0.62) & 14.6 & 100.0 & 100.0 \\ 
   & Mix-HP-L & 0.03 (0.01) & 3.33 (1.75) & 1.41 (0.58) & 45.2 & 12.1 & 100.0 \\ 
   & Mix-HP-AL & 0.00 (0.00) & 0.08 (0.05) & 1.42 (0.57) & 3.6 & 0.4 & 100.0 \\ 
   \midrule
   125 & Mix-L & 2.14 (0.39) & 26.65 (18.13) & 5.19 (1.41) & 40.0 & 100.0 & 100.0 \\ 
   & Mix-AL & 1.62 (0.91) & 16.22 (15.47) & 4.34 (1.98) & 13.7 & 100.0 & 100.0 \\ 
   & Mix-HP-L & 0.06 (0.02) & 4.72 (2.44) & 1.60 (0.68) & 42.7 & 12.1 & 100.0 \\ 
   & Mix-HP-AL & 0.01 (0.00) & 0.09 (0.08) & 1.51 (0.62) & 3.3 & 0.4 & 100.0 \\ 
   \midrule
   62.5 & Mix-L & 2.35 (0.19) & 34.57 (14.43) & 5.71 (0.21) & 26.4 & 100.0 & 90.0 \\ 
   & Mix-AL & 2.27 (0.15) & 18.96 (12.12) & 5.63 (0.10) & 10.0 & 100.0 & 90.0 \\ 
   & Mix-HP-L & 0.15 (0.06) & 5.90 (3.23) & 1.66 (0.73) & 43.6 & 13.4 & 100.0 \\ 
   & Mix-HP-AL & 0.02 (0.01) & 0.14 (0.15) & 1.46 (0.63) & 3.8 & 0.5 & 100.0 \\
   \midrule
   31.3 & Mix-L & 2.55 (0.27) & 38.44 (15.45) & 5.82 (0.35) & 21.1 & 100.0 & 90.0 \\ 
   & Mix-AL & 2.36 (0.18) & 16.80 (11.18) & 5.64 (0.13) & 8.6 & 100.0 & 90.0 \\ 
   & Mix-HP-L & 0.34 (0.13) & 6.63 (4.32) & 1.75 (0.82) & 46.0 & 14.6 & 100.0 \\ 
   & Mix-HP-AL & 0.06 (0.03) & 0.22 (0.24) & 1.44 (0.69) & 5.2 & 1.1 & 100.0 \\ 
   \midrule
   15.6 & Mix-L & 2.92 (0.45) & 35.75 (15.89) & 5.78 (0.34) & 18.9 & 100.0 & 90.0 \\ 
   & Mix-AL & 2.67 (0.33) & 16.37 (11.16) & 5.63 (0.10) & 8.6 & 100.0 & 90.0 \\ 
   & Mix-HP-L & 0.98 (0.74) & 7.22 (4.94) & 1.93 (0.97) & 43.2 & 10.3 & 100.0 \\ 
   & Mix-HP-AL & 0.61 (0.74) & 0.96 (1.23) & 1.53 (0.80) & 7.0 & 2.1 & 100.0 \\ 
    \bottomrule
  \end{tabular}
\end{table}

\begin{table}[htp]
  \centering
  \caption{Comparison of mean squared error of estimation, variable
    selection and heterogeneity pursuit performance of four methods,
    Mix-L, Mix-AL, Mix-HP-L and Mix-HP-L, under settings where correlations present among 
    predictors with
    {\revcolor{red}$n = 200$, $p = 120$, and $\bSigma = (\sigma_{ij})_{p\times p}$ where $\sigma_{ij} = 0.5^{|i-j|}$}. The layout of the table is the same as in
    Table~\ref{tab:n=60} of the main paper. The MSE values are scaled by multiplying 100, and the FPR, FHR, TPR values are reported in percentage.} \label{tab:app_2_n=120}
  \begin{tabular}{llrrrrrr}
    \toprule
    & & \multicolumn{3}{c}{MSE} & \multicolumn{2}{c}{RATE}\\
    \cmidrule(lr){3-5} \cmidrule(lr){6-7}
    $\mbox{SNR}$ & Method & $\bb$ & $\bsigma^2$ & $\bpi$ & FPR & TPR \\
    \midrule
    250 & Mix-L & 1.18 (0.08) & 39.87 (21.91) & 5.66 (0.16) & 22.2 & 100.0 & 90.0 \\ 
    & Mix-AL & 1.14 (0.09) & 20.91 (20.68) & 5.62 (0.08) & 9.6 & 100.0 & 90.0 \\ 
    & Mix-HP-L & 0.03 (0.01) & 6.04 (2.32) & 1.49 (0.63) & 28.6 & 5.7 & 100.0 \\ 
    & Mix-HP-AL & 0.00 (0.00) & 0.10 (0.07) & 1.45 (0.58) & 1.8 & 0.1 & 100.0 \\ 
    \midrule
    125 & Mix-L & 1.22 (0.10) & 45.55 (20.94) & 5.71 (0.21) & 16.7 & 100.0 & 90.0 \\ 
    & Mix-AL & 1.15 (0.06) & 23.79 (22.03) & 5.64 (0.11) & 7.5 & 100.0 & 90.0 \\ 
    & Mix-HP-L & 0.06 (0.02) & 7.17 (3.10) & 1.60 (0.72) & 30.5 & 6.8 & 100.0 \\ 
    & Mix-HP-AL & 0.01 (0.00) & 0.12 (0.10) & 1.50 (0.62) & 2.1 & 0.2 & 100.0 \\ 
    \midrule
    62.5 & Mix-L & 1.31 (0.13) & 52.03 (21.44) & 5.79 (0.30) & 13.1 & 100.0 & 90.0 \\ 
    & Mix-AL & 1.20 (0.09) & 20.78 (16.83) & 5.63 (0.10) & 6.2 & 100.0 & 90.0 \\ 
    & Mix-HP-L & 0.12 (0.04) & 7.95 (4.14) & 1.71 (0.74) & 33.0 & 9.2 & 100.0 \\ 
    & Mix-HP-AL & 0.01 (0.01) & 0.19 (0.19) & 1.48 (0.61) & 2.2 & 0.2 & 100.0 \\ 
    \midrule
    31.3 & Mix-L & 1.50 (0.25) & 58.17 (22.02) & 5.89 (0.37) & 10.1 & 100.0 & 90.0 \\ 
    & Mix-AL & 1.24 (0.10) & 33.30 (29.22) & 5.66 (0.14) & 4.5 & 100.0 & 90.0 \\ 
    & Mix-HP-L & 0.23 (0.09) & 7.59 (4.81) & 1.64 (0.74) & 34.0 & 9.3 & 100.0 \\ 
    & Mix-HP-AL & 0.03 (0.01) & 0.14 (0.15) & 1.35 (0.62) & 3.2 & 0.9 & 100.0 \\ 
    \midrule
    15.6 & Mix-L & 1.51 (0.13) & 34.49 (10.23) & 6.06 (0.49) & 14.2 & 100.0 & 90.0 \\ 
    & Mix-AL & 1.39 (0.20) & 34.95 (30.42) & 5.49 (0.36) & 2.1 & 100.0 & 90.0 \\ 
    & Mix-HP-L & 0.52 (0.46) & 11.26 (13.90) & 2.50 (1.33) & 26.1 & 0.0 & 100.0 \\ 
    & Mix-HP-AL & 0.28 (0.32) & 2.10 (2.58) & 1.98 (1.09) & 1.5 & 0.0 & 100.0 \\ 
    \bottomrule
  \end{tabular}
\end{table}
}

\clearpage
\subsection{Results for the Settings of Unequal Mixing Probabilities}\label{app:add_sim_ub}
We experiment with settings where the mixing probabilities are unequal. We set $\pi_1 = \pi_2 = 0.25$ and 
$\pi_3 = 0.5$. All the other settings are the same as those in
Section~\ref{sec:sim} of the main paper. The simulation results are presented in Table~\ref{tab:ub_n=30},~\ref{tab:ub_n=60} and~\ref{tab:ub_n=120}.

{
\def\baselinestretch{1}\normalsize
\begin{table}[h]
  \centering
  \caption{ Comparison of mean squared error of estimation, variable
    selection and heterogeneity pursuit performance of four methods,
    Mix-L, Mix-AL, Mix-HP-L and Mix-HP-L, under settings with {\revcolor{red}$n = 200$, $p =
    30$, and $\bSigma = \bI_p$}. The mixing probabilities are set as $\pi_1 = \pi_2 = 0.25$ and $\pi_3 = 0.5$. 
    The layout of the table is the same as in
    Table~\ref{tab:n=60} of the main paper. The MSE values are scaled by multiplying 100, and the FPR, FHR, TPR values are reported in percentage.} \label{tab:ub_n=30}
  \begin{tabular}{llrrrrrr}
    \toprule
    & & \multicolumn{3}{c}{MSE} & \multicolumn{3}{c}{RATE}\\
    \cmidrule(lr){3-5} \cmidrule(lr){6-8}
    $\mbox{SNR}$ & Method & $\bb$ & $\bsigma^2$ & $\bpi$ & FPR & FHR & TPR \\
    \midrule
    200 & Mix-L & 0.33 (0.32) & 10.85 (5.51) & 0.21 (0.20) & 61.0 & 100.0 & 100.0 \\ 
    & Mix-AL & 0.02 (0.01) & 0.04 (0.04) & 0.09 (0.09) & 15.2 & 100.0 & 100.0 \\ 
    & Mix-HP-L & 0.02 (0.01) & 1.94 (1.14) & 0.10 (0.10) & 68.0 & 29.6 & 100.0 \\ 
    & Mix-HP-AL & 0.01 (0.00) & 0.05 (0.04) & 0.09 (0.09) & 7.7 & 1.1 & 100.0 \\ 
    \midrule
    100 & Mix-L & 0.52 (0.39) & 13.18 (6.22) & 0.28 (0.27) & 60.4 & 100.0 & 100.0 \\ 
    & Mix-AL & 0.05 (0.02) & 0.10 (0.09) & 0.10 (0.09) & 23.1 & 100.0 & 100.0 \\ 
    & Mix-HP-L & 0.05 (0.01) & 3.07 (1.68) & 0.12 (0.11) & 60.9 & 24.1 & 100.0 \\ 
    & Mix-HP-AL & 0.02 (0.01) & 0.05 (0.05) & 0.10 (0.09) & 7.7 & 1.6 & 100.0 \\ 
    \midrule
    50 & Mix-L & 0.93 (0.57) & 16.46 (8.71) & 0.33 (0.29) & 57.7 & 100.0 & 100.0 \\ 
    & Mix-AL & 0.14 (0.06) & 0.21 (0.17) & 0.14 (0.11) & 32.2 & 100.0 & 100.0 \\ 
    & Mix-HP-L & 0.10 (0.03) & 4.18 (2.25) & 0.14 (0.12) & 56.5 & 21.1 & 100.0 \\ 
    & Mix-HP-AL & 0.03 (0.02) & 0.07 (0.07) & 0.12 (0.10) & 6.6 & 1.1 & 100.0 \\
    \midrule
    25 & Mix-L & 1.36 (0.60) & 17.35 (9.27) & 0.42 (0.34) & 57.1 & 100.0 & 100.0 \\ 
    & Mix-AL & 0.77 (0.50) & 0.88 (0.82) & 0.22 (0.19) & 33.9 & 100.0 & 100.0 \\ 
    & Mix-HP-L & 0.24 (0.08) & 5.70 (3.24) & 0.16 (0.12) & 53.6 & 17.9 & 100.0 \\ 
    & Mix-HP-AL & 0.07 (0.03) & 0.13 (0.13) & 0.13 (0.11) & 7.2 & 1.2 & 100.0 \\ 
    \bottomrule
  \end{tabular}
\end{table}

\begin{table}[h]
  \centering
  \caption{ Comparison of mean squared error of estimation, variable
    selection and heterogeneity pursuit performance of four methods,
    Mix-L, Mix-AL, Mix-HP-L and Mix-HP-L, under settings with {\revcolor{red}$n = 200$, $p =
    60$, and $\bSigma = \bI_p$}. The mixing probabilities are set as $\pi_1 = \pi_2 = 0.25$ and $\pi_3 = 0.5$. 
    The layout of the table is the same as in
    Table~\ref{tab:n=60} of the main paper. The MSE values are scaled by multiplying 100, and the FPR, FHR, TPR values are reported in percentage.} \label{tab:ub_n=60}
  \begin{tabular}{llrrrrrr}
    \toprule
    & & \multicolumn{3}{c}{MSE} & \multicolumn{3}{c}{RATE}\\
    \cmidrule(lr){3-5} \cmidrule(lr){6-8}
    $\mbox{SNR}$ & Method & $\bb$ & $\bsigma^2$ & $\bpi$ & FPR & FHR & TPR \\
    \midrule
    200 & Mix-L & 0.06 (0.03) & 5.73 (3.38) & 0.18 (0.16) & 62.9 & 100.0 & 100.0 \\ 
    & Mix-AL & 0.01 (0.01) & 0.14 (0.13) & 0.11 (0.10) & 16.6 & 100.0 & 100.0 \\ 
    & Mix-HP-L & 0.01 (0.00) & 4.23 (1.94) & 0.12 (0.10) & 42.4 & 8.6 & 100.0 \\ 
    & Mix-HP-AL & 0.00 (0.00) & 0.08 (0.05) & 0.10 (0.08) & 3.6 & 0.2 & 100.0 \\ 
    \midrule
    100 & Mix-L & 0.24 (0.20) & 10.22 (7.59) & 0.31 (0.30) & 56.4 & 100.0 & 100.0 \\ 
    & Mix-AL & 0.14 (0.17) & 0.52 (0.75) & 0.17 (0.17) & 17.9 & 100.0 & 100.0 \\ 
    & Mix-HP-L & 0.03 (0.01) & 6.18 (2.61) & 0.12 (0.11) & 38.2 & 6.9 & 100.0 \\ 
    & Mix-HP-AL & 0.01 (0.00) & 0.08 (0.07) & 0.09 (0.08) & 3.5 & 0.2 & 100.0 \\ 
    \midrule
    50 & Mix-L & 0.49 (0.32) & 17.78 (13.34) & 0.46 (0.41) & 49.7 & 100.0 & 100.0 \\ 
    & Mix-AL & 0.32 (0.23) & 2.14 (2.63) & 0.29 (0.25) & 21.1 & 100.0 & 100.0 \\ 
    & Mix-HP-L & 0.07 (0.02) & 8.57 (3.45) & 0.14 (0.12) & 34.5 & 5.8 & 100.0 \\ 
    & Mix-HP-AL & 0.02 (0.01) & 0.07 (0.08) & 0.11 (0.09) & 4.0 & 0.3 & 100.0 \\ 
    \midrule
    25 & Mix-L & 2.63 (0.35) & 57.82 (27.96) & 3.28 (0.52) & 22.3 & 100.0 & 90.0 \\ 
    & Mix-AL & 2.38 (0.34) & 30.48 (24.89) & 2.70 (1.51) & 11.4 & 100.0 & 90.0 \\ 
    & Mix-HP-L & 0.17 (0.06) & 10.88 (5.00) & 0.19 (0.16) & 35.7 & 5.9 & 100.0 \\ 
    & Mix-HP-AL & 0.04 (0.02) & 0.20 (0.21) & 0.13 (0.11) & 3.8 & 0.4 & 100.0 \\ 
    \bottomrule
  \end{tabular}
\end{table}

\begin{table}[h]
  \centering
  \caption{ Comparison of mean squared error of estimation, variable
    selection and heterogeneity pursuit performance of four methods,
    Mix-L, Mix-AL, Mix-HP-L and Mix-HP-L, under settings with {\revcolor{red}$n = 200$, $p =
    120$, and $\bSigma = \bI_p$}. The mixing probabilities are set as $\pi_1 = \pi_2 = 0.25$ and $\pi_3 = 0.5$. 
    The layout of the table is the same as in
    Table~\ref{tab:n=60} of the main paper. The MSE values are scaled by multiplying 100, and the FPR, FHR, TPR values are reported in percentage.} \label{tab:ub_n=120}
  \begin{tabular}{llrrrrrr}
    \toprule
    & & \multicolumn{3}{c}{MSE} & \multicolumn{3}{c}{RATE}\\
    \cmidrule(lr){3-5} \cmidrule(lr){6-8}
    $\mbox{SNR}$ & Method & $\bb$ & $\bsigma^2$ & $\bpi$ & FPR & FHR & TPR \\
    \midrule
    200 & Mix-L & 0.21 (0.18) & 10.20 (7.59) & 0.21 (0.19) & 45.0 & 100.0 & 100.0 \\ 
    & Mix-AL & 0.15 (0.17) & 1.37 (2.14) & 0.17 (0.14) & 14.5 & 100.0 & 100.0 \\ 
    & Mix-HP-L & 0.01 (0.00) & 7.13 (2.64) & 0.14 (0.11) & 26.1 & 5.8 & 100.0 \\ 
    & Mix-HP-AL & 0.00 (0.00) & 0.05 (0.04) & 0.12 (0.09) & 1.7 & 0.3 & 100.0 \\ 
    \midrule
    100 & Mix-L & 0.42 (0.28) & 31.74 (21.74) & 0.45 (0.36) & 34.4 & 100.0 & 100.0 \\ 
    & Mix-AL & 0.29 (0.23) & 5.34 (5.72) & 0.30 (0.21) & 12.3 & 100.0 & 100.0 \\ 
    & Mix-HP-L & 0.03 (0.01) & 9.93 (4.22) & 0.20 (0.15) & 24.8 & 5.6 & 100.0 \\ 
    & Mix-HP-AL & 0.00 (0.00) & 0.06 (0.06) & 0.16 (0.13) & 1.6 & 0.0 & 100.0 \\ 
    \midrule
    50 & Mix-L & 2.40 (0.90) & 100.46 (36.41) & 5.92 (3.33) & 8.2 & 100.0 & 90.0 \\ 
    & Mix-AL & 1.48 (0.27) & 81.84 (46.13) & 2.06 (1.66) & 3.5 & 100.0 & 90.0 \\ 
    & Mix-HP-L & 0.09 (0.03) & 10.35 (4.73) & 0.18 (0.15) & 31.6 & 8.5 & 100.0 \\ 
    & Mix-HP-AL & 0.01 (0.00) & 0.14 (0.14) & 0.11 (0.10) & 2.5 & 0.3 & 100.0 \\ 
    \midrule
    25 & Mix-L & 2.66 (0.88) & 97.19 (33.94) & 6.15 (3.33) & 6.1 & 100.0 & 80.0 \\ 
    & Mix-AL & 1.66 (0.43) & 82.97 (45.91) & 1.69 (1.60) & 3.0 & 100.0 & 80.0 \\ 
    & Mix-HP-L & 1.31 (1.50) & 73.11 (88.89) & 1.34 (1.77) & 23.3 & 6.4 & 100.0 \\ 
    & Mix-HP-AL & 1.18 (1.53) & 47.14 (70.28) & 1.44 (2.10) & 3.0 & 0.5 & 100.0 \\ 
    \bottomrule
  \end{tabular}
\end{table}
}

\clearpage
\subsection{Heterogeneous Effects on All Relevant Predictors}\label{app:add_sim}

We experiment with settings where all the relevant predictors have
heterogeneous effects, making heterogeneity pursuit not really necessary. So we expect that the proposed methods perform similarly as their counterparts which do not pursue sources of heterogeneity.

We consider the same dimensional settings as before with $p \in \{30, 60, 120\}$. In each setting, the first $p_0= 5$
predictors have effects to the response, and the effects are heterogeneous according to Definition \ref{def:2} of the main paper. Specifically, under the mixture effects
model~\eqref{eq:mixreg2} of the main paper with $m=3$ components, the sub-vectors of the
first 5 entries of the scaled coefficient vectors $\bphi_j$, denoted
as $\bphi_{j0}$, $j = 1, 2, 3$, are set as
\begin{align*}
  &\bphi_{10} = (1, -1, 0, -3, 3)\trans / \sqrt{\delta}, \quad
    \bphi_{20} = (-1, 2, -3, 3, 0)\trans / \sqrt{\delta},\\
  &\bphi_{30} = (2, 1, 3, 0, -3)\trans / \sqrt{\delta}.
\end{align*}
The $\delta$ is set to control $\mbox{SNR} = \{22.5, 45, 90, 180\}$. The true number of components $m = 3$ is assumed to be known. All the other settings are the same as those in
Section~\ref{sec:sim} of the main paper. The simulation results are presented in Table~\ref{tab:app_n=30},~\ref{tab:app_n=60} and~\ref{tab:app_n=120}.

{
\def\baselinestretch{1}\normalsize
\begin{table}[htp]
  \centering
  \caption{ Comparison of mean squared error of estimation, variable
    selection and heterogeneity pursuit performance of four methods,
    Mix-L, Mix-AL, Mix-HP-L and Mix-HP-L, under settings where all the relevant predictors have heterogeneous effects with {\revcolor{red}$n = 200$, $p =
    30$, and $\bSigma = \bI_p$}. The layout of the table is the same as in
    Table~\ref{tab:n=60} of the main paper. The MSE values are scaled by multiplying 100, and the FPR, FHR, TPR values are reported in percentage.} \label{tab:app_n=30}
  \begin{tabular}{llrrrrrr}
    \toprule
    & & \multicolumn{3}{c}{MSE} & \multicolumn{2}{c}{RATE}\\
    \cmidrule(lr){3-5} \cmidrule(lr){6-7}
    $\mbox{SNR}$ & Method & $\bb$ & $\bsigma^2$ & $\bpi$ & FPR & TPR \\
    \midrule
    180 & Mix-L & 0.02 (0.01) & 2.19 (1.67) & 0.11 (0.10) & 64.7 & 100.0 \\
    & Mix-AL & 0.01 (0.00) & 0.06 (0.04) & 0.09 (0.08) & 5.8 & 100.0 \\
    & Mix-HP-L & 0.03 (0.01) & 1.67 (1.33) & 0.11 (0.09) & 74.2 & 100.0 \\
    & Mix-HP-AL & 0.01 (0.00) & 0.05 (0.04) & 0.09 (0.08) & 7.1 & 100.0 \\
    \midrule
    90 & Mix-L & 0.05 (0.02) & 4.14 (2.51) & 0.14 (0.11) & 52.5 & 100.0 \\
    & Mix-AL & 0.02 (0.01) & 0.06 (0.05) & 0.11 (0.09) & 7.1 & 100.0 \\
    & Mix-HP-L & 0.05 (0.02) & 3.38 (2.03) & 0.13 (0.11) & 62.1 & 100.0 \\
    & Mix-HP-AL & 0.02 (0.01) & 0.05 (0.05) & 0.11 (0.09) & 8.1 & 100.0 \\
    \midrule
    45 & Mix-L & 0.11 (0.04) & 6.41 (3.71) & 0.14 (0.12) & 45.3 & 100.0 \\
    & Mix-AL & 0.04 (0.02) & 0.11 (0.09) & 0.12 (0.10) & 8.6 & 100.0 \\
    & Mix-HP-L & 0.11 (0.04) & 4.84 (2.44) & 0.14 (0.12) & 56.6 & 100.0 \\
    & Mix-HP-AL & 0.04 (0.02) & 0.07 (0.07) & 0.12 (0.10) & 8.2 & 100.0 \\
    \midrule
    22.5 & Mix-L & 0.27 (0.10) & 8.77 (4.46) & 0.16 (0.14) & 40.7 & 100.0 \\
    & Mix-AL & 0.09 (0.05) & 0.25 (0.20) & 0.12 (0.11) & 10.3 & 100.0 \\
    & Mix-HP-L & 0.26 (0.09) & 7.26 (3.98) & 0.15 (0.14) & 51.3 & 100.0 \\
    & Mix-HP-AL & 0.09 (0.04) & 0.17 (0.18) & 0.12 (0.11) & 8.2 & 100.0 \\
    \bottomrule
  \end{tabular}
\end{table}

\begin{table}[htp]
  \centering
  \caption{ Comparison of mean squared error of estimation, variable
    selection and heterogeneity pursuit performance of four methods,
    Mix-L, Mix-AL, Mix-HP-L and Mix-HP-L, under settings where all the relevant predictors have heterogeneous effects with {\revcolor{red}$n = 200$, $p =
    60$, and $\bSigma = \bI_p$}. The layout of the table is the same as in
    Table~\ref{tab:n=60} of the main paper. The MSE values are scaled by multiplying 100, and the FPR, FHR, TPR values are reported in percentage.} \label{tab:app_n=60}
  \begin{tabular}{llrrrrrr}
    \toprule
    & & \multicolumn{3}{c}{MSE} & \multicolumn{2}{c}{RATE}\\
    \cmidrule(lr){3-5} \cmidrule(lr){6-7}
    $\mbox{SNR}$ & Method & $\bb$ & $\bsigma^2$ & $\bpi$ & FPR & TPR \\
    \midrule
    180 & Mix-L & 0.02 (0.01) & 7.47 (2.65) & 0.14 (0.11) & 25.6 & 100.0 \\
    & Mix-AL & 0.00 (0.00) & 0.04 (0.04) & 0.10 (0.09) & 4.0 & 100.0 \\
    & Mix-HP-L & 0.02 (0.01) & 5.32 (1.96) & 0.13 (0.11) & 41.5 & 100.0 \\
    & Mix-HP-AL & 0.00 (0.00) & 0.06 (0.04) & 0.10 (0.09) & 3.7 & 100.0 \\
    \midrule
    90 & Mix-L & 0.05 (0.02) & 9.87 (3.63) & 0.15 (0.13) & 24.4 & 100.0 \\
    & Mix-AL & 0.01 (0.00) & 0.06 (0.05) & 0.11 (0.10) & 4.9 & 100.0 \\
    & Mix-HP-L & 0.05 (0.02) & 7.36 (3.13) & 0.14 (0.12) & 39.3 & 100.0 \\
    & Mix-HP-AL & 0.01 (0.00) & 0.07 (0.07) & 0.11 (0.10) & 4.8 & 100.0 \\
    \midrule
    45 & Mix-L & 0.11 (0.04) & 12.81 (5.03) & 0.18 (0.15) & 24.3 & 100.0 \\
    & Mix-AL & 0.02 (0.01) & 0.12 (0.10) & 0.12 (0.11) & 5.8 & 100.0 \\
    & Mix-HP-L & 0.10 (0.04) & 9.52 (4.20) & 0.16 (0.14) & 39.2 & 100.0 \\
    & Mix-HP-AL & 0.03 (0.01) & 0.09 (0.10) & 0.12 (0.10) & 5.6 & 100.0 \\
    \midrule
    22.5 & Mix-L & 0.25 (0.11) & 16.94 (7.93) & 0.20 (0.17) & 24.3 & 100.0 \\
    & Mix-AL & 0.07 (0.05) & 0.32 (0.26) & 0.14 (0.11) & 7.6 & 100.0 \\
    & Mix-HP-L & 0.23 (0.07) & 12.32 (5.81) & 0.19 (0.16) & 41.9 & 100.0 \\
    & Mix-HP-AL & 0.06 (0.03) & 0.28 (0.30) & 0.14 (0.12) & 7.9 & 100.0 \\
    \bottomrule
  \end{tabular}
\end{table}

\begin{table}[htp]
  \centering
  \caption{Comparison of mean squared error of estimation, variable
    selection and heterogeneity pursuit performance of four methods,
    Mix-L, Mix-AL, Mix-HP-L and Mix-HP-L, under settings where all the relevant predictors have heterogeneous effects with {\revcolor{red}$n = 200$, $p =
    120$, and $\bSigma = \bI_p$}. The layout of the table is the same as in
    Table~\ref{tab:n=60} of the main paper. The MSE values are scaled by multiplying 100, and the FPR, FHR, TPR values are reported in percentage.} \label{tab:app_n=120}
  \begin{tabular}{llrrrrrr}
    \toprule
    & & \multicolumn{3}{c}{MSE} & \multicolumn{2}{c}{RATE}\\
    \cmidrule(lr){3-5} \cmidrule(lr){6-7}
    $\mbox{SNR}$ & Method & $\bb$ & $\bsigma^2$ & $\bpi$ & FPR & TPR \\
    \midrule
    180 & Mix-L & 0.01 (0.01) & 10.72 (3.27) & 0.13 (0.11) & 12.7 & 100.0 \\
    & Mix-AL & 0.00 (0.00) & 0.03 (0.03) & 0.11 (0.09) & 2.7 & 100.0 \\
    & Mix-HP-L & 0.02 (0.01) & 8.52 (3.05) & 0.12 (0.10) & 26.0 & 100.0 \\
    & Mix-HP-AL & 0.00 (0.00) & 0.07 (0.05) & 0.10 (0.09) & 2.0 & 100.0 \\
    \midrule
    90 & Mix-L & 0.03 (0.01) & 15.47 (5.21) & 0.14 (0.12) & 11.8 & 100.0 \\
    & Mix-AL & 0.01 (0.00) & 0.05 (0.05) & 0.10 (0.08) & 3.0 & 100.0 \\
    & Mix-HP-L & 0.04 (0.01) & 11.48 (4.30) & 0.13 (0.11) & 25.9 & 100.0 \\
    & Mix-HP-AL & 0.00 (0.00) & 0.09 (0.09) & 0.10 (0.08) & 2.5 & 100.0 \\
    \midrule
    45 & Mix-L & 0.08 (0.03) & 19.84 (6.69) & 0.18 (0.16) & 12.5 & 100.0 \\
    & Mix-AL & 0.01 (0.01) & 0.11 (0.10) & 0.12 (0.10) & 3.3 & 100.0 \\
    & Mix-HP-L & 0.09 (0.04) & 14.64 (5.97) & 0.16 (0.14) & 27.2 & 100.0 \\
    & Mix-HP-AL & 0.01 (0.01) & 0.19 (0.19) & 0.12 (0.10) & 2.9 & 100.0 \\
    \midrule
    22.5 & Mix-L & 0.18 (0.08) & 27.68 (11.39) & 0.25 (0.22) & 12.8 & 100.0 \\
    & Mix-AL & 0.04 (0.03) & 0.42 (0.39) & 0.14 (0.12) & 4.3 & 100.0 \\
    & Mix-HP-L & 0.18 (0.07) & 17.76 (8.37) & 0.27 (0.27) & 22.5 & 100.0 \\
    & Mix-HP-AL & 0.04 (0.02) & 0.41 (0.51) & 0.18 (0.17) & 3.1 & 100.0 \\
    \bottomrule
  \end{tabular}
\end{table}
}

\clearpage 
\section{Additional Results on the ADNI Analysis}\label{app:adni}

\begin{figure}[h]
  \begin{center}\includegraphics[width=0.8\textwidth]{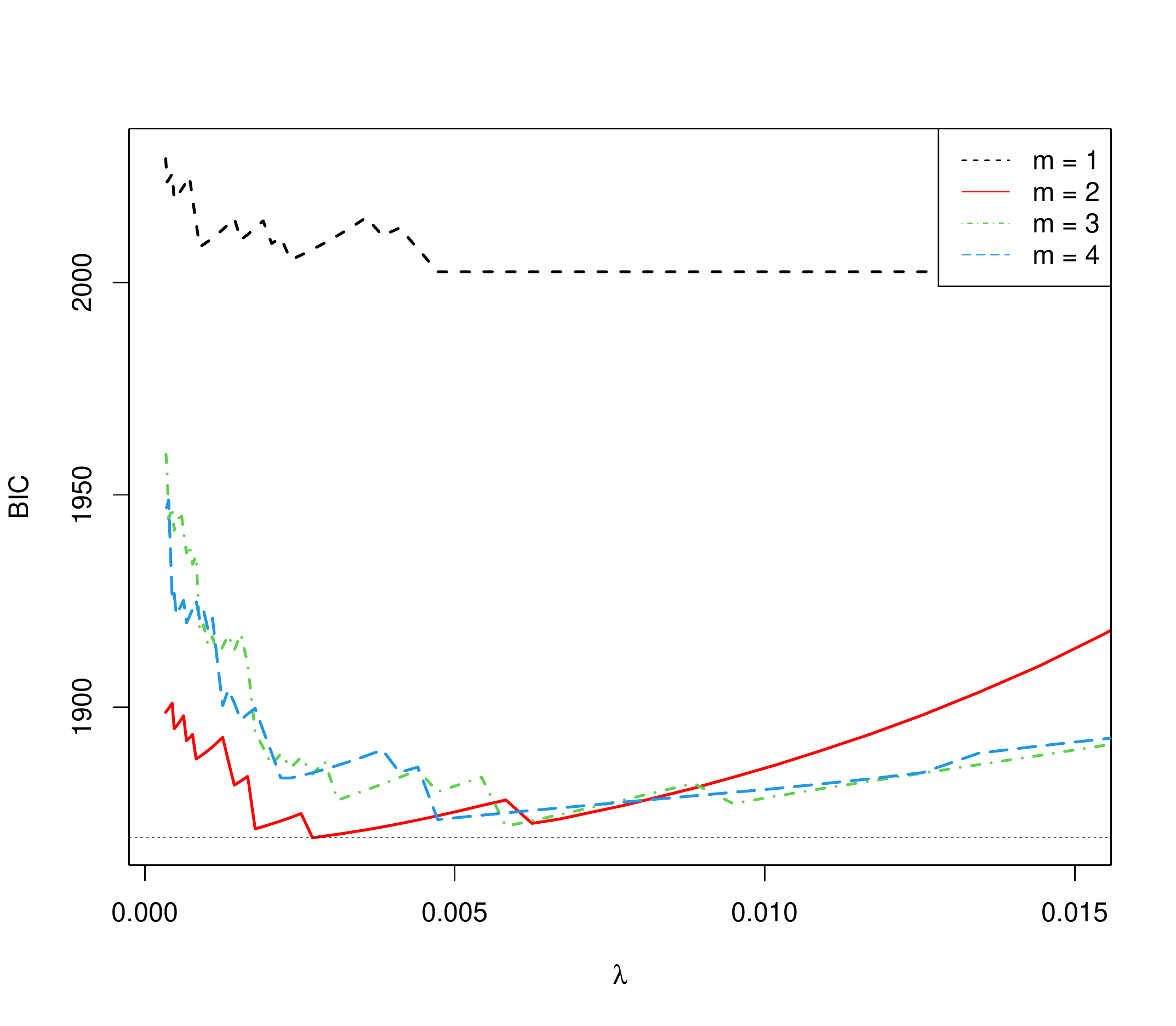}\end{center}
  \caption{{\revcolor{red}ADNI study: selection of tuning parameters including number of components $m\in\{1,2,\ldots, 4\}$ and $\lambda$ for the left ventricles volume data. (To facilitate visualization, the curve for $m=5$ is not shown as its values are much higher.)}
  }\label{fig:adni_bicplot}
\end{figure}

\begin{figure}[h]
  \begin{center}\includegraphics[width=0.6\textwidth]{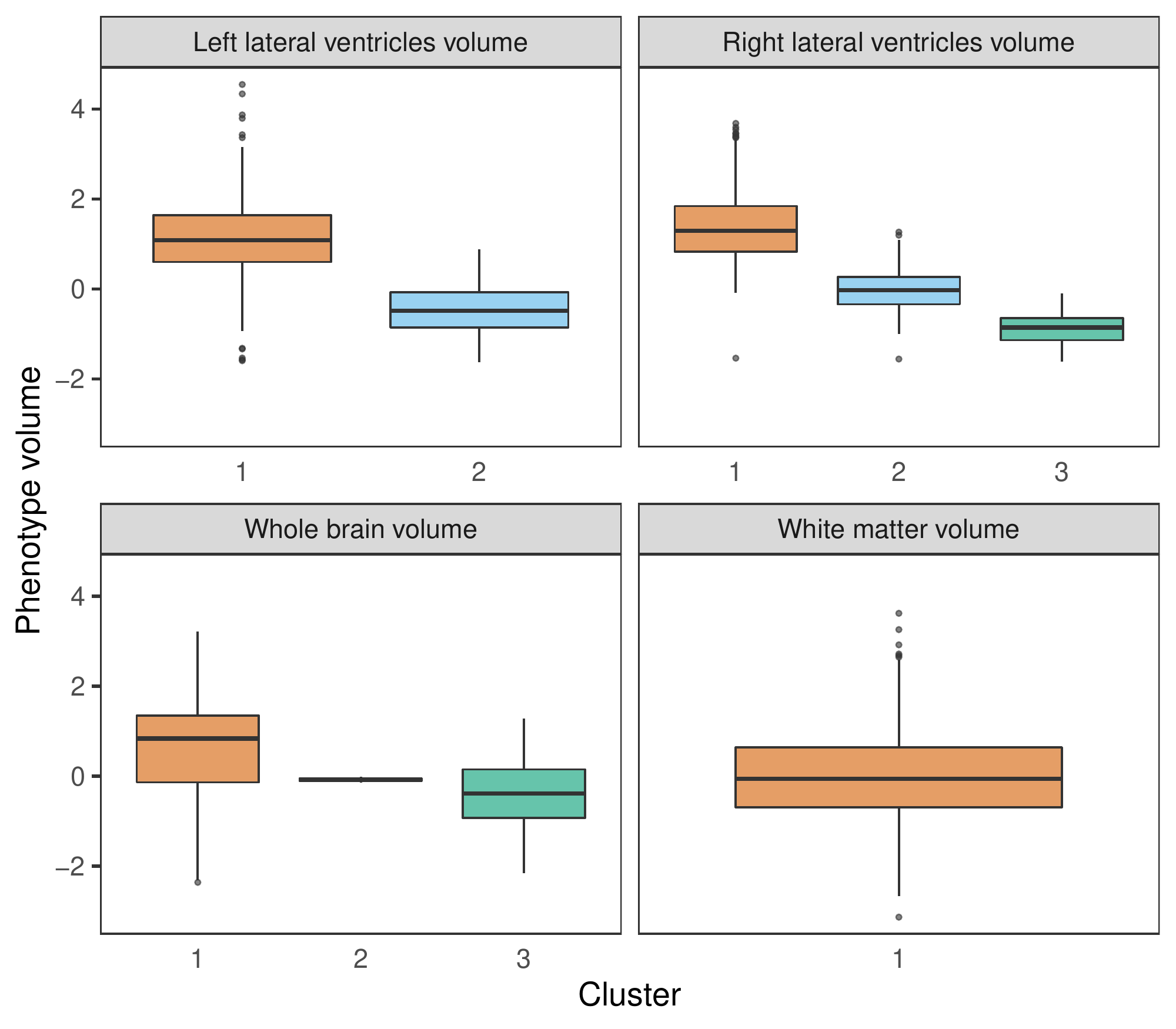}\end{center}
  \caption{ADNI study: boxplots of the four imaging phenotypes across different clusters.
  }\label{fig:adni_boxplot}
\end{figure}

%\vspace{-1cm}
{\def\baselinestretch{1}
\begin{table}[h]
  \centering \caption{ADNI study: coefficient estimates using Mix-HP-AL. The numbers of clusters are 2,3,3,1 for the four imaging phenotypes, respectively. Zero values are shown as blanks. Variables with heterogeneous effects are marked in bold. } \label{tab:res_adni_0} \def\arraystretch{1}
  {\scriptsize
  \begin{tabular}{llrrrrr}
    \toprule
    SNP & Gene & $\widehat{\bphi}_1$ & $\widehat{\bphi}_2$ & $\widehat{\bphi}_3$ \\
    \midrule
    \multicolumn{5}{c}{\it Left Lateral Ventricles Volumes} \\
    rs2025935 & CR1 & 0.06 & 0.06 &  \\
    rs381852 & CDC20B, LOC493869 & 0.20 & 0.20 &  \\
    \textbf{rs1182190} & GNA12 & 0.32 & $-$0.32 &  \\
    \textbf{rs1874445} & MRVI1 & 0.12 & $-$0.12 &  \\
    \textbf{rs7929589} & MS4A4E & 0.06 & $-$0.06 &  \\
    rs12146713 & NUAK1 & 0.16 & 0.16 &  \\
    rs3865444 & CD2AP & $-$0.09 & $-$0.09 &  \\
    \textbf{rs273653} & CD2AP & 0.25 & $-$0.25 &  \\
    $\widehat\sigma$ &  & 0.98 & 0.48 & \\
    $\widehat\pi$ &  & 0.39 & 0.61 & \\
    \midrule
    \multicolumn{5}{c}{\it Right Lateral Ventricles Volumes} \\
    \textbf{rs2025935} & CR1 &  & 0.30 & $-$0.30 \\
    \textbf{rs3737002} & CR1 & 0.22 &  & $-$0.22 \\
    \textbf{rs6709337} & BIN1 & 0.09 &  & $-$0.09 \\
    \textbf{rs798532} & GNA12 &  & 0.15 & $-$0.15 \\
    rs1182197 & GNA12 & $-$0.04 & $-$0.04 & $-$0.04 \\
    \textbf{rs1874445} & MRVI1 & 0.23 &  & $-$0.23 \\
    rs812086 & CSRP3-AS1 & $-$0.27 & $-$0.27 & $-$0.27 \\
    \textbf{rs677909} & PICALM &  & 0.28 & $-$0.28 \\
    rs12146713 & NUAK1 & 0.19 & 0.19 & 0.19 \\
    rs1822381 & MSI2 & $-$0.03 & $-$0.03 & $-$0.03 \\
    rs3865444 & CD2AP & $-$0.22 & $-$0.22 & $-$0.22 \\
    \textbf{rs273653} & CD2AP & 0.11 &  & $-$0.11 \\
    $\widehat\sigma$ &  & 0.85 & 0.42 & 0.29 \\
    $\widehat\pi$ &  & 0.29 & 0.40 & 0.31 \\
    \midrule
    \multicolumn{5}{c}{\it Whole Brain Volumes} \\
    rs10127904 & CR1 & $-$0.13 & $-$0.13 & $-$0.13 \\
    rs16823787 &  & 0.12 & 0.12 & 0.12 \\
    rs2063454 & TFDP2 & 0.22 & 0.22 & 0.22 \\
    rs9473121 & CD2AP & $-$0.10 & $-$0.10 & $-$0.10 \\
    rs10457481 &  & $-$0.24 & $-$0.24 & $-$0.24 \\
    rs854524 & PPP1R9A & 0.13 & 0.13 & 0.13 \\
    \textbf{rs7797990} &  & 0.34 &  & $-$0.34 \\
    rs17745273 & PICALM & $-$0.13 & $-$0.13 & $-$0.13 \\
    rs8756 & HMGA2 & 0.09 & 0.09 & 0.09 \\
    rs1635291 & LINC02210-CRHR1 & $-$0.14 & $-$0.14 & $-$0.14 \\
    $\widehat\sigma$ &  & 0.78 & 0.02 & 0.61 \\
    $\widehat\pi$ &  & 0.52 & 0.01 & 0.47 \\
    \midrule
    \multicolumn{5}{c}{\it White Matter Volumes} \\
    rs3818361 & CR1 & 0.03 &  &  \\
    rs12485574 &  & 0.13 &  &  \\
    rs1867667 & VCAN & $-$0.08 &  &  \\
    rs1385741 & CD2AP & $-$0.01 &  &  \\
    $\widehat\sigma$ &  & 1.08 &  & \\
    \bottomrule
  \end{tabular}
  }
\vspace{-1cm}
\end{table}
}

\clearpage 
\section{Additional Results on the Suicide Risk Analysis}\label{app:suicide}

We perform group-wise principal component analysis (PCA) and use
each leading factor to summarize the information of each variable group/category.

{\def\baselinestretch{1}\normalsize
\begin{table}[h]
  \centering \caption{Suicide risk study: factor loadings from group-wise PCA.}\label{tab:app_pca}
  \begin{tabular}{lrr}
    \toprule
    Variable & Component 1 & Component 2 \\
    \midrule
    \multicolumn{3}{c}{\it Demographic Factor} \\
    Male householder rate  & 0.50 & 0.46 \\
    Household size         & 0.56 & $-$0.39 \\
    \% Population under 18 & 0.50 & $-$0.53 \\
    \% White race          & 0.42 & 0.60 \\
    \% Variation explained & 55.7\% & 29.4\% \\
    \midrule
    \multicolumn{3}{c}{\it Academic Factor} \\
    Average CAPT           & 0.52 & 0.10 \\
    Graduation rate        & 0.53 & 0.28 \\
    Dropout rate           & $-$0.52 & $-$0.34 \\
    Attendance rate        & 0.42 & $-$0.89 \\
    \% Variation explained & 77.0\% & 14.1\% \\
    \midrule
    \multicolumn{3}{c}{\it Behavioral Factor} \\
    \% Serious incidence   & 0.58 & 0.40 \\
    Incidence rate         & 0.57 & $-$0.82 \\
    Serious incidence rate & 0.58 & 0.40 \\
    \% Variation explained & 96.20\% & 3.70\% \\
    \midrule
    \multicolumn{3}{c}{\it Economic Factor} \\
    Median income & 0.70 & 0.71 \\
    Free lunch rate & $-$0.71 & 0.70 \\
    \% Variation explained & 84.80\% & 15.20\% \\
    \bottomrule
  \end{tabular}
\end{table}
}

\vspace{-2cm}

\clearpage
\section{Salary and Performance in Major League Baseball}\label{app:baseball}

The data contains salaries for major league baseball players for the year 1992, along with their performance and status measures from the year 1991. These players played at least one game in the 1991 and 1992 seasons, and pitchers were not included. The data is available on the website of {\it Journal of Statistics Education} ({\it \url{www.amstat.org/ publications/jse}}). The main interest is to examine which performance measures and status indicators play important roles in determining the salary of a player. 
The analysis was first conducted using regularized mixture regression by \citet{khalili2012variable} and using Bayesian variable selection by \citet{lee2016bayesian}.

There are 12 numerical performance measures including batting average ($x_1$), on-base percentage ($x_2$), runs ($x_3$), hits ($x_4$), doubles ($x_5$), triples ($x_6$), home runs ($x_7$),
runs batted in ($x_8$), walks ($x_9$), strikeouts ($x_{10}$), stolen
bases ($x_{11}$), and errors ($x_{12}$); and there are 4 indicator variables including free agency eligibility ($x_{13}$), free agent in 1991/1992 ($x_{14}$), arbitration
eligibility ($x_{15}$), and arbitration in 1991/2 ($x_{16}$), which measures how free each player was to move to another team. Previous work suggested that there could be interaction effects between $x_1, x_3, x_7, x_8$ and $x_{13}$--$x_{16}$. This leads to in total $p=32$ candidate predictors for the analysis. Following \citet[Section 3, List 10]{watnik1998}, we standardize $x_1$--$x_{12}$ before introducing the interaction terms, and some outliers are removed to result in a sample size of $n=331$. The log-transformed salary is used as the response.

\citet{khalili2012variable} used a two-component Gaussian mixture regression with equal variance. \citet{lee2016bayesian} then suggested using unequal variances. We thus compare four models, the two-component Mix-AL and the two-component Mix-HP-AL, either with or without the assumption of equal variance. The BIC values, and the out-of-sample predictive log-likelihood values are presented in Table~\ref{tab:app_base_ev}. The out-of-sample predictive log-likelihood values (with standard error in parenthesis) are obtained from a random-splitting procedure. Each time the data is split to 80\% training data for model fitting and 20\% testing data for out-of-sample evaluation, and the procedure is repeated 500 times and the results are averaged. Based on the out-of-sample predictive likelihood, our proposed method Mix-HP-AL with unequal variances performs the best.

{\def\baselinestretch{1}\normalsize
\begin{table}[htp]
  \centering \caption{Baseball salary study: compare models with/without heterogeneity pursuit and equal variance assumption.}\label{tab:app_base_ev} \def\arraystretch{1}
  \begin{tabular}{lrrrrrr}
    \toprule
    & \multicolumn{2}{c}{Mix-AL} & \multicolumn{2}{c}{Mix-HP-AL} \\
    \midrule
    & DV & EV & DV & EV\\
    \midrule
    BIC & 459.77 & 438.16 & 438.64 & 456.73 \\
    Pred & -55.2 (4.48) & -56.9 (6.03) & -51.7 (4.73) & -56.8 (7.07) \\
    \bottomrule
  \end{tabular}
\end{table}
}

The results of using Mix-HP-AL with unequal variances are presented in Table~\ref{tab:res_baseball_0}. The mixing probabilities are estimated as 0.71 and 0.29, and the two fd variances are quite close to each other. A total of 16 predictor terms are selected; interestingly, among the selected terms, only three terms, namely, the free agent eligibility, the arbitration eligibility and the interaction between runs and arbitration eligibility, are identified to be sources of heterogeneity according to Definition \ref{def:2}. This finding coincides with the conclusion in \citet{lee2016bayesian} that the eligibility of free agent or free arbitration is the key factor to distinguish the two clusters. Our analysis suggests that many performance measures of the players, on the other hand, generally relate to the salary level in a common and homogeneous way. We notice that being a free agent or having arbitration in the 1991/2 seasons is not associated with a higher salary, conditioning on the other selected terms. Intriguingly, this may be explained by the Player's Union argument that owners colluded to keep the salary of free agents lower in the 1991/2 seasons \citep{watnik1998}.

{\def\baselinestretch{1}
\begin{table}[tbp]
  \centering \caption{Baseball salary study: the scaled coefficient
    estimates using Mix-HP-AL. Zero values are shown as blanks. 
    Variables with heterogeneous effects are marked in bold.} \label{tab:res_baseball_0}
  \def\arraystretch{0.95} {\small
  \begin{tabular}{lrrrrrr}
    \toprule
    & $\widehat{\bphi}_1$ & $\widehat{\bphi}_2$\\
    \midrule
  Intercept & 23.93 & 23.93 \\ 
  $x_{1}$ &  &  \\ 
  $x_{2}$ &  &  \\ 
  $x_{3}$ &  &  \\ 
  $x_{4}$ & 1.30 & 1.30 \\ 
  $x_{5}$ &  &  \\ 
  $x_{6}$ & $-$0.23 & $-$0.23 \\ 
  $x_{7}$ &  &  \\ 
  $x_{8}$ & 0.48 & 0.48 \\ 
  $x_{9}$ & 0.39 & 0.39 \\ 
  $x_{10}$ & $-$0.40 & $-$0.40 \\ 
  $x_{11}$ & 0.25 & 0.25 \\ 
  $x_{12}$ &  &  \\ 
  $\bm{x_{13}}$ & 9.77 & 1.71 \\ 
  $x_{14}$ & $-$0.82 & $-$0.82 \\ 
  $\bm{x_{15}}$ & 6.57 & 4.30 \\ 
  $x_{16}$ & $-$2.02 & $-$2.02 \\ 
  $x_{1}*x_{13}$ & $-$0.52 & $-$0.52 \\ 
  $x_{1}*x_{14}$ &  &  \\ 
  $x_{1}*x_{15}$ &  &  \\ 
  $x_{1}*x_{16}$ &  &  \\ 
  $x_{3}*x_{13}$ &  &  \\ 
  $x_{3}*x_{14}$ & 0.70 & 0.70 \\ 
  $\bm{x_{3}*x_{15}}$ & 0.50 & $-$0.50 \\ 
  $x_{3}*x_{16}$ &  &  \\ 
  $x_{7}*x_{13}$ & 0.25 & 0.25 \\ 
  $x_{7}*x_{14}$ &  &  \\ 
  $x_{7}*x_{15}$ &  &  \\ 
  $x_{7}*x_{16}$ &  &  \\ 
  $x_{8}*x_{13}$ &  &  \\ 
  $x_{8}*x_{14}$ & 0.86 & 0.86 \\ 
  $x_{8}*x_{15}$ &  &  \\ 
  $x_{8}*x_{16}$ & 1.83 & 1.83 \\
  $\widehat\sigma$ & 0.23 & 0.27 \\
  $\widehat\pi$ & 0.71 & 0.29 \\
    \bottomrule
  \end{tabular}
  }
\end{table}
}

It is also interesting to examine the estimated cluster pattern. As
seen from Figure~\ref{fig:hist_baseball}, the majority of players are in cluster 1, in which the players tend to have either higher or lower salaries than the average, and cluster 2 mainly consists of the players with average or ``normal'' level of salaries. The players with the eligibility of being free agent or having arbitration are mostly those with much higher salaries in cluster 1, but not in cluster 2. This explains why these variables have much larger effects in cluster 1 than in cluster 2. 

\begin{figure}[htp]
  \includegraphics[width=\textwidth]{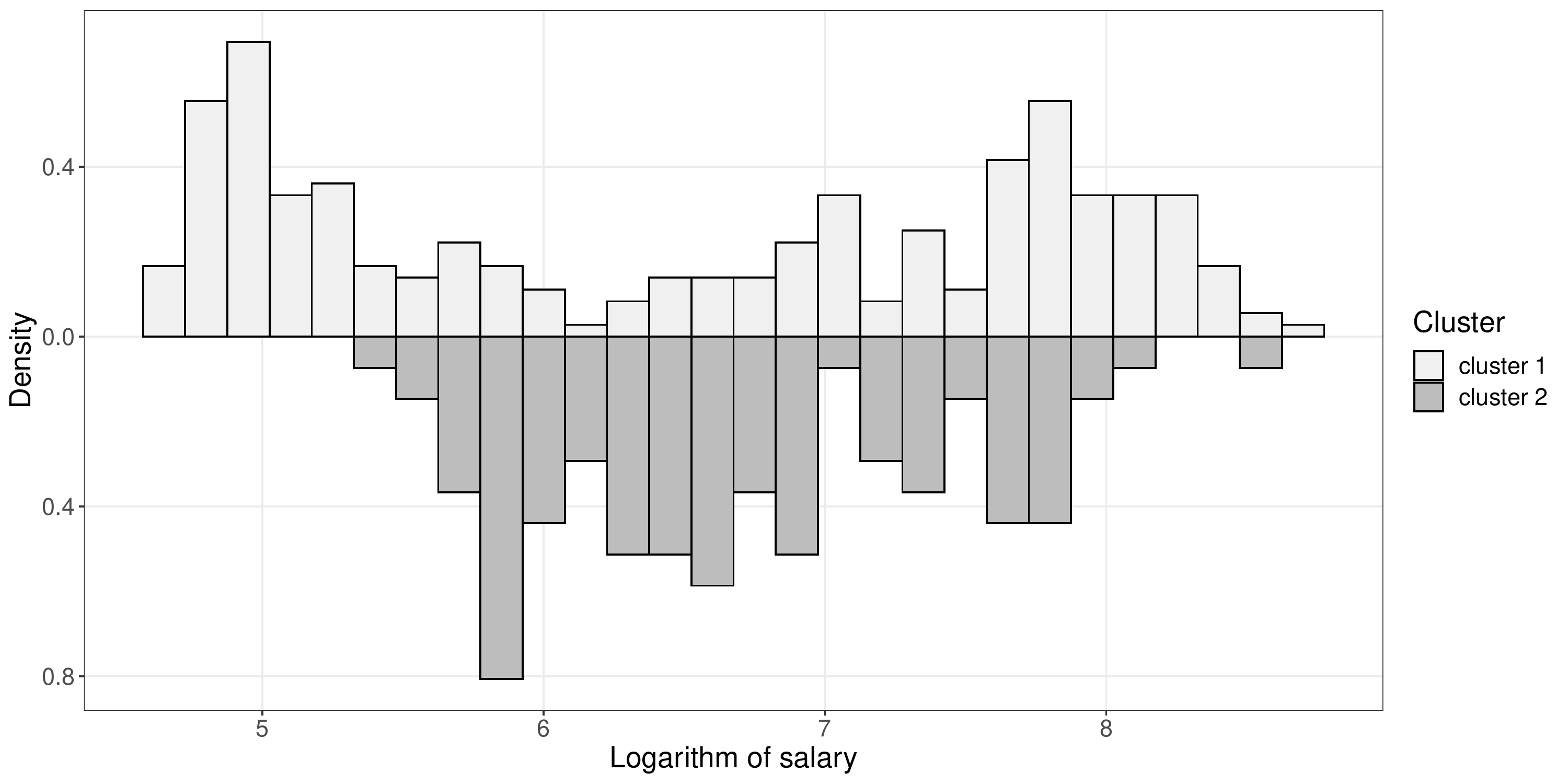}
  \caption{Baseball salary study: the distribution of $\log (\mbox{salary})$ for the two
    clusters.}\label{fig:hist_baseball}
\end{figure}

%%% Local Variables:
%%% mode: latex
%%% TeX-master: t
%%% End:
% \end{appendix}

\end{appendices}

\clearpage
%The online supplementary materials include an R package of the proposed approach, proofs of the theorems, detailed derivations of the proposed algorithm, additional simulation studies, and supporting results on the three applications. 

\vspace*{-8pt}

\vskip 0.2in

\bibliographystyle{chicago}
\bibliography{ref,reference}

\end{document}